\documentclass[lettersize,journal,onecolumn]{IEEEtran}
\pdfoutput=1
\usepackage[utf8]{inputenc}

%Using biblatex
\usepackage[backend=bibtex, style = numeric-comp, doi = false, url = false, isbn = false, maxbibnames = 6]{biblatex}
\bibliography{RDreferences}
\renewbibmacro{in:}{}      
\newbibmacro{string+doi}[1]{\iffieldundef{doi}{#1}{\href{https://dx.doi.org/\thefield{doi}}{#1}}}
\DeclareFieldFormat{title}{\usebibmacro{string+doi}{\mkbibemph{#1}}}
\DeclareFieldFormat[article]{title}{\usebibmacro{string+doi}{\mkbibquote{#1}}}
\DeclareFieldFormat[incollection]{title}{\usebibmacro{string+doi}{\mkbibquote{#1}}}                   
\DeclareFieldFormat[inproceedings]{title}{\usebibmacro{string+doi}{\mkbibquote{#1}}}   

\usepackage[pdfpagelabels, linktocpage=true,pdfusetitle]{hyperref}
\hypersetup{
    colorlinks=true,
    citecolor=darkblue,
    linkcolor=reddish,
    urlcolor=darkblue,
    pdfauthor={},
    pdfsubject={}
}
\usepackage{tikz,pgfplots}
\definecolor{darkblue}{rgb}{0.15,0.35,0.55}
\definecolor{reddish}{rgb}{.8, 0.2, 0.2}

\usepackage[margin=1em]{caption}
\usepackage{subcaption}

\usepackage{amssymb,amsfonts,amsmath}
\usepackage{graphicx}
\usepackage{indentfirst}
\usepackage{dsfont}

\allowdisplaybreaks[4]

\usepackage{mathtools}
\usepackage{enumerate}
\usepackage{url}

\usepackage{amsthm}
\usepackage{mathtools}
\usepackage{mathrsfs}
\usepackage{braket}
\usepackage{physics}

% \usepackage{lipsum}
% \usepackage{setspace}
% \renewcommand{\baselinestretch}{1.3} 

% \usepackage{graphicx}
% \usepackage{amsmath}
% \usepackage{amssymb}
% \usepackage{amsthm}
% \usepackage{comment}
% \usepackage{color}

% \newcommand{\abs}[1]{\left\lvert#1\right\rvert} 
% \newcommand{\norm}[1]{\left\lVert#1\right\rVert} 

% \usepackage[english]{babel}
% %\usepackage{subfig}
% \usepackage[T1]{fontenc}
% \usepackage{dsfont}
% \usepackage{mathrsfs}
% %\usepackage[colorlinks]{hyperref}
% \usepackage{braket}
% \usepackage{physics}
% \usepackage{enumerate}
% \usepackage{graphicx}
% %\usepackage{caption}
% %\usepackage{subcaption}
% \usepackage{mathtools}
% \usepackage[export]{adjustbox}
% \usepackage{microtype}

%\usepackage[allfiguresdraft]{draftfigure}

\theoremstyle{plain}

% \usepackage[colorlinks = true,
%             linkcolor = red,
%             urlcolor  = blue,
%             citecolor = blue,
%             anchorcolor = red]{hyperref}

\newtheorem{theorem}{Theorem}
\newtheorem{corollary}[theorem]{Corollary}

\newtheorem{lemma}[theorem]{Lemma}
 
\newtheorem{definition}[theorem]{Definition}  
\newtheorem{remark}[theorem]{Remark}  
\newtheorem*{remark*}{Remark}   %  \newtheorem*{remark}{Remark}  * is for numbering
\renewcommand\qedsymbol{$\blacksquare$}
\newenvironment{proof-of}[1][{\hspace{-\blank}}]{{\medskip\noindent\textit{Proof~{#1}.\ }}}{\hfill\qedsymbol}

\renewcommand{\Tr}{{\operatorname{Tr}\,}}

\renewcommand{\dim}{{\operatorname{dim}}}
\newcommand{\cid}{{\cal I}}
\newcommand{\proj}[1]{|#1\rangle\!\langle #1|}

\newcommand{\KI}{{\mathrm{KI}}}
% \newcommand{\I}{\mathbbm{1}}

%%%%%%%%%%%%%%%%%%%%%%%%%%%%%%%%%%%%%%%%%%%%%%%%%%%

\newcommand{\nc}{\newcommand}
\nc{\rnc}{\renewcommand}
\nc{\avg}[1]{\langle#1\rangle}
\nc{\Rank}{\operatorname{Rank}}
\nc{\smfrac}[2]{\mbox{$\frac{#1}{#2}$}}
\renewcommand{\tr}{\operatorname{Tr}}
\nc{\ox}{\otimes}
\nc{\dg}{\dagger}
\nc{\dn}{\downarrow}
\nc{\cA}{{\cal A}}
\nc{\cB}{{\cal B}}
\nc{\cC}{{\cal C}}
%\nc{\cD}{{\cal D}}
%\nc{\cE}{{\cal E}}
\nc{\cF}{{\cal F}}
\nc{\cG}{{\cal G}}
\nc{\cH}{{\cal H}}
\nc{\cI}{{\cal I}}
\nc{\cJ}{{\cal J}}
\nc{\cK}{{\cal K}}
\nc{\cL}{{\cal L}}
\nc{\cM}{{\cal M}}
\nc{\cN}{{\cal N}}
\nc{\cO}{{\cal O}}
\nc{\cP}{{\cal P}}
\nc{\cQ}{{\cal Q}}
\nc{\cR}{{\cal R}}
\nc{\cS}{{\cal S}}
%\nc{\cT}{{\cal T}}
\nc{\cX}{{\cal X}}
\nc{\cY}{{\cal Y}}
\nc{\cZ}{{\cal Z}}
\nc{\csupp}{{\operatorname{csupp}}}
\nc{\qsupp}{{\operatorname{qsupp}}}
%\nc{\var}{{\operatorname{var}}}
\nc{\rar}{\rightarrow}
\nc{\lrar}{\longrightarrow}
\nc{\polylog}{{\operatorname{polylog}}}
\nc{\1}{\mathds{1}}
\nc{\wt}{{\operatorname{wt}}}
%\nc{\av}[1]{{\left\langle {#1} \right\rangle}}

%\def\i{\iota}

%\def\l{\lambda}

%\def\r{\rho}

%\def\c{\chi}

\nc{\RR}{{{\mathbb R}}}
\nc{\CC}{{{\mathbb C}}}
\nc{\FF}{{{\mathbb F}}}
\nc{\NN}{{{\mathbb N}}}
\nc{\ZZ}{{{\mathbb Z}}}
\nc{\PP}{{{\mathbb P}}}
\nc{\QQ}{{{\mathbb Q}}}
\nc{\UU}{{{\mathbb U}}}
\nc{\EE}{{{\mathbb E}}}

\nc{\Hom}[2]{\mbox{Hom}(\CC^{#1},\CC^{#2})}
\nc{\rU}{\mbox{U}}

% switching between |i> or |e_i> for the standard basis
% \nc{\ob}[1]{e_{#1}}
\nc{\ob}[1]{#1}

% DL creating a changeable macro for the rate-distortion function symbol
\def\R{\mathscr{R}}

%%%%%%%%%%%%%%%%%%%%%

\title{Rate-Distortion Theory for Mixed States}

%%%%%%%%%%%%%%%%%%%%%%%%%%%%%%%%%%%%%%%%%%%%%%%%%%%%%%%%%%%%%%%%%%%%%%%%%%%%%
\author{
Zahra~Baghali~Khanian, Kohdai~Kuroiwa, Debbie~Leung
\thanks{Z.~B.~Khanian is with Munich Center for Quantum Science and Technology \& Zentrum Mathematik, Technical University of Munich. E-mail: \href{mailto:zbkhanian@gmail.com}{\texttt{zbkhanian@gmail.com}}}
\thanks{K.~Kuroiwa is with the Institute for Quantum Computing \& Department of Combinatorics and Optimization, University of Waterloo and the Perimeter Institute for Theoretical Physics. E-mail: \href{mailto:kkuroiwa@uwaterloo.ca}{\texttt{kkuroiwa@uwaterloo.ca}}}
\thanks{D.~Leung is with the Institute for Quantum Computing \& Department of Combinatorics and Optimization, University of Waterloo and the Perimeter Institute for Theoretical Physics. E-mail: \href{mailto:wcleung@uwaterloo.ca}{\texttt{wcleung@uwaterloo.ca}}}
\thanks{A part of the results in this paper was presented at the 2023 IEEE International Symposium on Information Theory (ISIT), held from June 25 to 30, 2023 in Taipei, Taiwan.}
}%

% \author{Zahra Baghali Khanian}
% \email{zbkhanian@gmail.com}
% \affiliation{Munich Center for Quantum Science and Technology \\
% \& Zentrum Mathematik, Technical University of Munich, 85748 Garching, Germany}
% %{Technische Universit\"{a}t M\"{u}nchen, 85748 Garching, Germany }

% \author{Kohdai Kuroiwa}
% \email{kkuroiwa@uwaterloo.ca}
% \affiliation{Institute for Quantum Computing, University of Waterloo, Ontario, Canada, N2L 3G1}
% \affiliation{Department of Combinatorics and Optimization, University of Waterloo}
% \affiliation{Perimeter Institute for Theoretical Physics, Ontario, Canada, N2L 2Y5}

% \author{Debbie Leung}
% \email{wcleung@uwaterloo.ca}
% \affiliation{Institute for Quantum Computing, University of Waterloo, Ontario, Canada, N2L 3G1}
% \affiliation{Department of Combinatorics and Optimization, University of Waterloo}
% \affiliation{Perimeter Institute for Theoretical Physics, Ontario, Canada, N2L 2Y5}

%\footnotetext{A subset of the results in this paper has been accepted for presentation at the 2023 IEEE International Symposium on Information Theory (ISIT), held from June 25 to 30, 2023 in Taipei, Taiwan.\\}

\begin{document}
\maketitle 

\begin{abstract}
This paper is concerned with quantum data compression of asymptotically many
independent and identically distributed copies of ensembles of mixed quantum 
states.  The encoder has access to a side information system. 
The figure of merit is per-copy or local error criterion.
Rate-distortion theory studies the trade-off between the compression 
rate and the per-copy error. 
The optimal trade-off can be characterized by the rate-distortion
function, which is the best rate given a certain distortion.  
In this paper, we derive the rate-distortion function of mixed-state 
compression.  
The rate-distortion functions in the entanglement-assisted and
unassisted scenarios are in terms of a single-letter mutual information
quantity and the regularized entanglement of purification,
respectively.
For the general setting where the consumption of both 
communication and entanglement are considered, we present the full 
qubit-entanglement rate region. 
Our compression scheme covers both
blind and visible compression models (and other models in between)
depending on the structure of the side information system.
\end{abstract}

\tableofcontents

%%%%%%%%%%%%%%%%%%%%%%%%%%%%%%%%%%%%%%%%%%%%%%%%%%%%%%%%%%%%%%%%%%%%%%%%%%%%%%%%%%%%%%%%%%%%%%%%%%%%%
\section{Introduction}

\textit{Data compression} is concerned with the task of transmitting
data with high fidelity while minimizing the communication cost.  The
cost reduction is made possible by the structure and redundancy in the
data, which is often modeled as multiple samples of a (data) source.
The compression rate is the communication cost per sample.  
\textit{Quantum data compression} was pioneered by Schumacher in
\cite{Schumacher1995}.  He defined two notions of a quantum source.
The first notion of a quantum source is a quantum system together with
correlations with a purifying \textit{reference} system.  The second
notion of a quantum source is called an ensemble of pure states where
a pure quantum state is drawn according to some known probability
distribution.
Both notions are ``pure state sources.'' 
Subsequently, the notion of ensemble sources was generalized to
ensembles of mixed states in
\cite{Horodecki1998,Horodecki2000,Koashi2001,Koashi2002}.
An ensemble source is equivalently defined as a classical-quantum state where
the classical system plays the role of an inaccessible reference
system.
All these seemingly distinct definitions of a quantum source were
unified and generalized 
in \cite{ZBK_PhD,general_mixed_state_compression,ZK_mixed_state_ISIT_2020}
where a general quantum source is defined as a quantum system together
with correlations with a most general reference system (which is not
necessarily a purifying system nor a classical system).  
We note that the reference system plays a crucial role in defining the
correctness or fidelity of the transmission; it has to preserve the
correlations between the transmitted data and the reference.  
Without considering such correlations, any known classical or quantum
system can be produced locally by the receiver without any input from
the sender.
The source specifies what correlations must be preserved in the
compression task, as well as what data is to be transmitted.

Data compression, which achieves high fidelity transmission, can be
extended to a \emph{rate-distortion} model which allows for a
potential decrease of compression rate at the expense of having a
larger error \cite{shannon1959,Berger1975}.
In the quantum setting, Schumacher's quantum data compression model
was generalized by Barnum \cite{Barnum2000} to a rate-distortion model
for pure state sources.
Even though both Schumacher and Barnum considered asymptotically many
i.i.d.$\,$(independent identically distributed) copies of a quantum
source, their set-ups involve different error definitions.
More specifically, Schumacher considered the error for the whole block
of data (called block or global error) whereas Barnum considered the
error for each copy of the source (called per-copy or local error).
Under the global error definition in Schumacher's quantum data
compression model for pure state sources,
the strong converse theorems state that any reduction of rate below
the optimal value causes the fidelity to vanish for asymptotically
large blocks of data
\cite{Richard1994,Winter1999}.  
Recently, strong converse for mixed state sources was also studied
\cite{Khanian2022}.
In contrast, under the local error definition in Barnum's 
quantum rate-distortion model, 
compression rate trades more smoothly with error. 
Considering both global and local error criteria, quantum data
compression and rate-distortion have been extensively investigated
\cite{Schumacher1995,Richard1994,Barnum1996,
  Lo1995,Horodecki1998,Barnum2001,Dur2001,Kramer2001,Bennett2002,
  Winter2002, Horodecki2000,Hayashi2006,Koashi2001,
  Bennett2001,Bennett2005,Abeyesinghe2006,Jain2002,Bennett2014a,
  Barnum2000, Devetak2002, Datta2013b, Datta2013c,
  ZBK_PhD,ZK_cqSW_2018,general_mixed_state_compression,
  Khanian2022,Khanian2021,ZK_mixed_state_ISIT_2020,
  ZK_QSR_ensemble_ISIT_2020,ZK_cqSW_ISIT_2019,
  ZK_Eassisted_ISIT_2019,Anshu2019,Kuroiwa2022}.

The ultimate goal of quantum rate distortion theory is to elucidate
the optimal trade-off between the compression rate and the allowed 
local error.  
This can be characterized by the \textit{rate-distortion function},
which is the lowest rate achieved by a protocol satisfying an allowed
distortion of the data.
For pure-state sources, 
Barnum derived a lower bound on the rate-distortion function
\cite{Barnum2000}.
Subsequent work in references \cite{Datta2013b,Datta2013c}
characterized the rate-distortion function of pure-state sources in
various scenarios with and without entanglement assistance and side
information.
They found that the rate-distortion functions of entanglement assisted
compression and unassisted compression are characterized by the
quantum mutual information and the entanglement of purification
between the decoder's systems and the reference systems, respectively.
(See Section \ref{sec:notation} for the definitions of these correlation
measures.)  
Moreover, for pure-state sources where both the encoder and the
decoder have access to a side information system, the rate-distortion
function is characterized in
\cite{ZK_QSR_ensemble_ISIT_2020,Khanian2021}.
However, quantum rate-distortion theory in the case of
\textit{mixed-state sources} remains largely unexplored. 
In the special case for the \textit{blind compression} of ensembles of
mixed states, \cite{Koashi2001} showed that the optimal rate is the
same for vanishing local and global error; but the rate can be 
sensitive to a small allowed finite local error with potentially 
large rate reduction~\cite{Anshu2019,Kuroiwa2022}.

In this paper, we investigate rate distortion coding for mixed-state
sources, considering both scenarios with and without assistance of
entanglement.
We assume a mixed-state quantum \emph{ensemble} source with a side
information system available to the encoder, which covers both of
\textit{blind compression} and \textit{visible compression} as special
cases.
First, we consider the case in which we have free entanglement. 
We derive the rate-distortion function for this case, which is an
optimized expression of quantum mutual information between the decoder's 
system and a composite system consisting of the classical reference
system and other systems that purify the source
(Theorem~\ref{thm:assisted}).  Surprisingly, the rate-distortion
function can be given in a single-letter form using quantum mutual
information.
Then, we explore the case without any entanglement assistance.
In this case, we prove that the rate-distortion function is given by
the \textit{entanglement of purification} between the decoder's
systems and the same composite system as in the entanglement-assisted
case (Theorem~\ref{thm:unassisted}).
We note that this expression for the rate-distortion function is 
a multi-letter formula.  
So far, our results are derived for the most general side information.
To simplify the expression for the unassisted case, we apply it
to the two special scenarios of visible and blind compressions.
For visible compression, the expression simplifies with fewer
systems involved (Corollary~\ref{cor:unassisted_visible}).  
For blind compression with vanishing local error, we show that the
rate-distortion function becomes a single-letter formula involving the
entanglement of purification (Theorem~\ref{thm:blind_unassisted}).
As a corollary, we obtain an information theoretic identity by
equating this expression to the optimal blind compression rate
obtained in \cite{Koashi2001}.
As a final result on deriving rate-distortion functions for the most
general side information, and generalizing both entanglement-assisted
and unassisted cases, we consider general qubit-entanglement rate
region.
We derive the necessary and sufficient condition for the qubit and
entanglement rates with which rate-distortion compression is
achievable (Theorem~\ref{thm:full_rate}).

Our work contributes to a rich body of quantum rate-distortion theory
by showing the optimal trade-offs of mixed-state rate-distortion
compression with and without entanglement, which have not been
explored before.
Our achievability protocols share similarities with those
in \cite{Khanian2021} but the source, the rate expressions, and the converse
proofs differ significantly.  
Our results also have interesting implications for visible and blind compression by considering the previously known optimal compression rates for these cases. 
The rest of this paper is organized as follows. 
We review basic concepts and notations used in this paper in Sec.~\ref{sec:notation}. 
Then, we introduce our setup of rate distortion coding for mixed state ensemble sources in Sec.~\ref{sec:setup}. 
Finally, we show our main results in Sec.~\ref{sec:results}. 
We first show the optimal rate of entanglement assisted rate distortion coding in Sec.~\ref{subsec:assisted}, and then we analyze the optimal rate of unassisted case in Sec.~\ref{subsec:unassisted}.  
Moreover, we derive the full rate region of rate distortion coding for ensemble sources in Sec.~\ref{subsec:rate_region}. 
We conclude the paper with a discussion in Sec.~\ref{sec:discon}, further comparing our work with previous
results and considering additional applications.

%%%%%%%%%%%%%%%%%%%%%%%%%%%%%%%%%%%%%%%%%%%%%%%%%%%%%%%%%%%%%%%%%%%%%%%%%%%%%%%%%%%%%%%%%%%%%%%%%%%%%
\section{Background and Notations}~\label{sec:notation}
In this section, we review background materials and introduce notations for the paper.
Throughout, we use capital letters 
$A,B,\ldots$ to represent quantum systems. 
For a quantum system $A$, we let $\mathcal{H}_A$ denote the corresponding complex Hilbert space. 
We only consider finite-dimensional quantum systems in this paper. 
Given a quantum system $A$, we use $\mathrm{dim}(A)$ to denote the dimension of the corresponding Hilbert space.

Given two quantum systems $A$ and $B$ with $\mathrm{dim}(A) \leq
\mathrm{dim}(B)$, we write an isometry from $A$ to $B$ as $U^{A\to
  B}$.  Whenever $A=B$ we abbreviate $A\to B$ by $A$ alone, for
example, $\1^A$ denotes the identity operator on $A$.
    For a linear operator $Y$ on system $A$,
    let $s_1(Y), \cdots s_{\mathrm{dim}(A)}(Y)$ denote the singular values of $Y$.
    Then, the \textit{trace norm} of $Y$ is defined as  
        $\|Y\|_1 = \sum_{i=1}^{\mathrm{dim}(A)} s_i(Y)$. 
For systems $A$ and $B$, we write a quantum channel from $A$ to $B$,
that is, a completely positive and trace-preserving linear map from
square matrices on $\mathcal{H}_A$ to those on $\mathcal{H}_B$ as
$\mathcal{N}^{A\to B}$. (A quantum channel is also called a CPTP map.)
We also write the Stinespring dilation \cite{Stinespring1955} of the
channel $\mathcal{N}^{A \to B}$ as $U^{A\to BE}_{\mathcal{N}}$ with
$E$ being the corresponding environment system.  Similar to the
notation for operators, we only list the common space when the input
and output spaces are the same, for example, $\cid^A$ denotes the
identity channel on $A$.

A quantum ensemble is defined as follows.
\begin{definition}~\label{def:ensemble}
    Let $A$ be a quantum system, and let $\Sigma$ be an alphabet. 
    A quantum ensemble is a set of pairs of a positive real number and a quantum state
    \begin{equation*}
        \{p_x,\rho^A_x\}_{x\in\Sigma},
    \end{equation*}
    where $\{p_x:x\in\Sigma\}$ forms a probability distribution; that is,
    $0\leq p_x \leq 1$ 
    for all $x\in\Sigma$ and
    $\sum_{x\in\Sigma} p_x = 1$.
    In addition, given a quantum ensemble $\{p_x,\rho^A_x\}_{x\in \Sigma}$, 
we use a classical-quantum state  
    \begin{equation*}
        \rho^{AX} \coloneqq \sum_{x\in\Sigma} p_x \rho^A_x \otimes \proj{x}^X 
    \end{equation*}
    to represent this ensemble, 
    where $X$ is a classical reference system with orthonormal basis $\{\ket{x}\}_{x\in \Sigma}$.  
    
\end{definition}

Given a quantum ensemble, we can consider a decomposition of each
state into a \textit{classical part}, a \textit{quantum part}, and a
\textit{redundant part}, which is known as \textit{Koashi-Imoto (KI)
  decomposition}~\cite{Koashi2002} named after the authors.

% \clearpage

\begin{theorem}[\cite{Koashi2002}] \label{thm:ki1}
Let $\rho^{AX}$ be a classical-quantum state associated with a quantum ensemble. 
Then, there exist a joint quantum system $CNQ$ and a corresponding isometry
$U_{\KI}: A \hookrightarrow CNQ$ satisfying the following conditions.
\begin{enumerate}[(i)]
\item The state $\omega^{C N
  Q X} = (U_{\KI} \otimes \1^X)\rho^{AX} (U_{\KI}^{\dagger} \otimes \1^X)$ can be expressed as
    \begin{align}
    \label{KI_item1}
    \omega^{C N Q X} =
    \sum_x p_x \sum_c p_{c|x} \proj{c}^{C} \otimes \omega_c^{N} \otimes \rho_{cx}^{Q} 
        \otimes \proj{x}^X
    \end{align}
    where the set of vectors $\{ \ket{c}^{C}\}_{c \in \Xi}$ form an orthonormal
    % basis
    set for system $C$, and for each $x$, $p_{c|x}$ is a
    distribution over $c$ conditioned on $x$.  The states
    $\omega_c^{N}$ and $\rho_{cx}^{Q}$ live in systems $N$ and $Q$,
    respectively. 

  \item
      Consider an arbitrary $c \in \Xi$.  
      Let $Q_c$ be the smallest subspace of $Q$ that contains the 
      support of $\rho^Q_{cx}$ for each $x \in \Sigma$.
      If a projector $P^{Q_c}$ on $Q_c$ satisfies  
    \begin{equation}
        P^{Q_c}  (p_{c|x} \rho^{Q}_{cx} )  
      = ( p_{c|x} \rho^{Q}_{cx} )  P^{Q_c} , 
    \end{equation}
      for all $x$, then, $P^{Q_c} = \1^{Q_c}$ or $P^{Q_c} = 0$. 
    
    \item
      Consider an arbitrary pair of distinct $c,c' \in \Xi$ and  
      any collection of positive numbers $\{\alpha^{(c,c')}_x\}_{x\in\Sigma}$.
      There is no unitary operator $V^{Q}$ satisfying 
    \begin{equation}
        V^{Q} (p_{c|x} \rho^{Q}_{cx})  
      = \alpha^{(c,c')}_x (p_{c'|x} \rho^{Q}_{c'x}) V^{Q}   
    \end{equation}
simultaneously for all $x \in \Sigma$.
\end{enumerate} 
\end{theorem}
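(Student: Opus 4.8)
The plan is to prove this structure theorem by an iterative refinement argument: build a candidate decomposition, keep refining it while either (ii) or (iii) fails, and argue by a monotone potential that this terminates; the terminal decomposition then satisfies (ii) and (iii), while (i) is maintained as an invariant throughout. Call a triple $(CNQ,U_{\KI})$ \emph{admissible} if it makes (i) hold; concretely, writing $\mathcal{H}_A$ — restricted to the support of $\sum_x p_x\rho_x$, and extending $U_{\KI}$ arbitrarily on the complement — as $\bigoplus_c\mathcal{H}_{N_c}\otimes\mathcal{H}_{Q_c}$, admissibility says each $\rho_x$ is block-diagonal in $c$ and acts on block $c$ as $p_{c|x}\,\omega_c^{N_c}\otimes\rho_{cx}^{Q_c}$ with the state $\omega_c^{N_c}$ independent of $x$. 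The trivial decomposition ($C$ one-dimensional, every $N_c$ trivial, $Q_c=\mathcal{H}_A$, $\rho_{cx}^{Q_c}=\rho_x$) is admissible, so there is something to start from.

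Next I would set up two refinement moves. \emph{Splitting}: if (ii) fails, there is a block $c$ and a projector $P\notin\{0,\1^{Q_c}\}$ commuting with every $\rho_{cx}^{Q_c}$; replace block $c$ by the two blocks carried by $P\mathcal{H}_{Q_c}$ and $(\1-P)\mathcal{H}_{Q_c}$ — each cut down to the span of the supports of the corresponding corners $P\rho_{cx}^{Q_c}P$ and $(\1-P)\rho_{cx}^{Q_c}(\1-P)$ — keep the old $N$-factor on both, and read off the new conditional weights and $Q$-states. \emph{Merging}: if (iii) fails for a pair $c\neq c'$, rewrite the defining identity as $V^Q(p_{c|x}\rho_{cx}^Q)(V^Q)^{\dagger}=\alpha_x\,p_{c'|x}\rho_{c'x}^Q$ and take traces (using $\tr\rho_{cx}^Q=\tr\rho_{c'x}^Q=1$) to relate the scalars to the weight profiles, to see that $\dim\mathcal{H}_{Q_c}=\dim\mathcal{H}_{Q_{c'}}$, and that the two $Q$-ensembles are identified by $V^Q$; then fuse $c$ with $c'$, and with every block related to it in the same way, into a single block whose $Q$-factor is the common one, whose $N$-factor is the direct sum of the old $N$-factors, and whose fixed state is the appropriate convex combination of the old $\omega$'s. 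Both moves manifestly preserve admissibility, which is the content of (i).

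For termination I would exhibit a strictly decreasing nonnegative-integer potential, for instance $\Phi=\sum_c(\dim\mathcal{H}_{Q_c})^2$: a split replaces a summand $q^2$ by $q_0^2+q_1^2$ with $q_0,q_1\ge 1$ and $q_0+q_1\le q$, so $\Phi$ strictly drops; a merge of $k\ge 2$ blocks, all of the same $Q$-dimension $q$ by the above, replaces $k$ summands $q^2$ by one $q^2$, so $\Phi$ strictly drops again. Hence the process halts, and at the terminal admissible decomposition neither move applies — which is exactly (ii) and (iii) — while (i) holds by the invariant; reading $CNQ$ and $U_{\KI}$ off this decomposition proves the theorem.

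The step I expect to be the real obstacle is the merging move: one has to check that being ``$(\mathrm{iii})$-related'' is well enough behaved to fuse an entire family of blocks at once and, above all, that the fused block genuinely carries a \emph{constant} state $\omega_c^{N_c}$ — this is precisely where the exact form of hypothesis (iii), matching the families $\{p_{c|x}\rho_{cx}^Q\}_x$ and $\{p_{c'|x}\rho_{c'x}^Q\}_x$ up to the unitary $V^Q$ and the scalars, is what is needed, and it is the only part of the argument that is not essentially bookkeeping. An alternative, less hands-on route goes through the structure theory of the finite-dimensional $\ast$-algebra generated by $\{\rho_x\}_x$ together with its commutant, reading the blocks $c$ off the center and each splitting $\mathcal{H}_{N_c}\otimes\mathcal{H}_{Q_c}$ off the factors; but pinning down the fixed tensor factors $\omega_c^{N_c}$ correctly there requires the same care.
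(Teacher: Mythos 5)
The paper does not prove Theorem~\ref{thm:ki1}: it is imported verbatim from \cite{Koashi2002} as a background structure theorem, so there is no in-paper argument to compare yours against, and your proposal has to stand on its own. Your split/merge induction with the potential $\sum_c (\dim\mathcal{H}_{Q_c})^2$ is a legitimate route, and the splitting move, the admissibility invariant for splits, and the termination argument are all fine. The genuine gap is exactly where you yourself locate it, in the merge, and it is not mere bookkeeping. Taking traces in $V(p_{c|x}\rho^Q_{cx})V^{\dagger}=\alpha_x\, p_{c'|x}\rho^Q_{c'x}$ only yields $\alpha_x = p_{c|x}/p_{c'|x}$; nothing forces this ratio to be independent of $x$. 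After identifying $Q_{c'}$ with $Q_c$ via $V$, the fused block carries the operator $\bigl(p_{c|x}\omega_c \oplus p_{c'|x}\omega_{c'}\bigr)\otimes\rho^Q_{cx}$, and by uniqueness of tensor factorizations the only candidate for the new constant $N$-state is $\frac{p_{c|x}}{p_{c|x}+p_{c'|x}}\,\omega_c\oplus\frac{p_{c'|x}}{p_{c|x}+p_{c'|x}}\,\omega_{c'}$, which is $x$-independent if and only if $p_{c|x}/p_{c'|x}$ is constant in $x$. When it is not, the merge destroys property (i), the invariant of your induction breaks, and the procedure is stuck: (iii) fails but no admissible refinement exists.

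This failure is unavoidable under the statement as transcribed, with an $x$-indexed family $\{\alpha^{(c,c')}_x\}$. Take $\Sigma=\{1,2\}$, $\rho_1=\mathrm{diag}(1/2,1/2)$ and $\rho_2=\mathrm{diag}(1/3,2/3)$. Any admissible decomposition either keeps both basis vectors inside one $Q_c$ (then the projector onto the first basis vector violates (ii)), or puts everything into a constant $\omega_c^N$ (impossible since $\rho_1\neq\rho_2$), or is the decomposition into two one-dimensional classical sectors with trivial $N$ and $Q$ --- and for that one, $V^Q=1$ with $\alpha_1=1$, $\alpha_2=1/2$ violates (iii) as written, while the corresponding merge would force the constant $\omega$ to equal $\rho_x$ itself. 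So no decomposition satisfies (i)--(iii) with $x$-dependent $\alpha$'s, and your induction can only terminate at a decomposition satisfying the version of (iii) in which $\alpha^{(c,c')}$ is a \emph{single} positive constant independent of $x$ (equivalently, $p_{c'|x}\rho^Q_{c'x}=\alpha\,V\rho^Q_{cx}p_{c|x}V^{\dagger}$ for one $\alpha>0$). That is the form under which the Koashi--Imoto theorem holds and under which your merge, and hence the whole argument, goes through; to complete the proof you must restrict the merge to pairs whose weight profiles are proportional and prove the constant-$\alpha$ form of (iii), rather than the quantification stated here.
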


%-------------------------------------------------------------------
Henceforth, we call the isometry $U_{\KI}$ and the state $\omega^{C N
  Q X}$ in Theorem \ref{thm:ki1} 
the Koashi-Imoto (KI)
isometry and KI-decomposition of the state $\rho^{AX}$, respectively.
In the KI decomposition, all states in a given ensemble are written in
a block-diagonal form with the same block structure.  Register $C$
contains the label for the block; thus $C$ is called the classical
part of the ensemble.  The state in register $N$ depends only on $c$,
and it can be recovered using $c$ without the state label $x$; in this
sense, $N$ is called the redundant part of this ensemble.  On the
other hand, the state in register $Q$ depends on both $x$ and $c$, and
$Q$ is called the (non-redundant) quantum part of the ensemble.
Items (ii) and (iii) in Theorem \ref{thm:ki1} state
  that the decomposition is maximal (see \cite{Koashi2002} for
  definition).  Intuitively, item (ii) prevents further decomposition
  of each block, while item (iii) prohibits the non-redundant parts of
  one block to be related to that of another, which may generate
  larger redundant parts.  Together, this maximality prevents more
  redundant parts to be created from the non-redundant parts, a
  condition crucial for evaluating the optimal compression rate.

Since $U_{\KI}$ can be applied by the encoder and be reversed
by the decoder without affecting the compression task, so, without
loss of generality, we assume $\rho^{AX} = \omega^{C N Q X}$ for the
rest of the paper.  

We can also determine the form of quantum channels that preserve a given quantum ensemble using its KI decomposition. 
\begin{lemma}[\cite{Koashi2002}]\label{lem:kiinvar}
    Let $\rho^{AX}$ be a classical-quantum state with KI decomposition 
    \begin{equation*}
       \omega^{CNQX} = (U_{\KI}\!\otimes\! \1_X)\rho^{AX} (U_{\KI}^{\dagger} \otimes \1_X)
         =  \sum_x p_x \sum_c p_{c|x} \proj{c}^{C} \otimes \omega_c^{N} \otimes \rho_{cx}^{Q} 
        \otimes \proj{x}^X. 
    \end{equation*}
    Let $\Lambda^{A}:A\to A$ be a quantum channel that preserves the state $\rho^{AX}$; that is, $\left(\Lambda^{A}\otimes\cid^{X}\right)\left(\rho^{AX}\right) = \rho^{AX}$.
    Let $U^{A\to AE}_{\Lambda}$ be any Stinespring dilation of $\Lambda^{A}$ with environment system $E$. 
    Then, $U^{A\to AE}_{\Lambda}$ has the form 
    \begin{equation}
    \label{eq:KI_unitary}
       (U_{\KI} \otimes \1^E) \, U^{A\to AE}_{\Lambda} (U_{\KI}^{\dagger})
        = \sum_c \proj{c}^{C} \otimes U_c^{N\to NE} \otimes \1^{Q}, 
    \end{equation}
    where for each $c$, $U_c:N \hookrightarrow N E$ is an isometry satisfying 
    $\Tr_{\!E} \; [U_c \, \omega_c \, U_c^{\dagger}]=\omega_c$. 
\end{lemma}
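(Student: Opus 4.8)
The plan is to reduce the statement about channels preserving $\rho^{AX}$ to the structural rigidity provided by items (ii) and (iii) of Theorem~\ref{thm:ki1}. First I would move to the KI frame: conjugating by $U_{\KI}$ turns $\Lambda^A$ into a channel $\tilde\Lambda^{CNQ}$ that preserves $\omega^{CNQX}$, and a Stinespring dilation of $\Lambda$ maps (up to the fixed $U_{\KI}$ conjugation and an inconsequential isometry on $E$) to a dilation $U^{CNQ\to CNQE}_{\tilde\Lambda}$ of $\tilde\Lambda$. It therefore suffices to prove the claimed block form for $\tilde\Lambda$, i.e. that $U_{\tilde\Lambda} = \sum_c \proj{c}^C \otimes U_c^{N\to NE}\otimes \1^Q$ with $\Tr_E[U_c\,\omega_c\,U_c^\dagger]=\omega_c$. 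The condition $(\tilde\Lambda\otimes\cid^X)(\omega^{CNQX})=\omega^{CNQX}$, read per letter $x$, says that $\tilde\Lambda$ fixes every state $\sum_c p_{c|x}\proj{c}^C\otimes\omega_c^N\otimes\rho_{cx}^Q$.

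The key steps, in order: (1) \emph{Classical part is untouched.} Summing the invariance condition over $x$ with weights $p_x$ shows $\tilde\Lambda$ fixes $\omega^{CNQ}=\sum_c p_c\proj{c}^C\otimes\omega_c^N\otimes\rho_c^Q$; then, using that the different $x$-states share the block structure in $C$ and that $\tilde\Lambda$ must fix all of them, one argues the channel cannot move weight between distinct $c$-blocks nor create coherences in $C$ — otherwise one could build a nontrivial invariant projector or intertwiner violating (ii)/(iii). Concretely, writing the dilation isometry in block form $U_{\tilde\Lambda}\ket{c}^C = \sum_{c'}\ket{c'}^C\otimes W_{c'c}^{NQ\to NQE}$, the preservation of the $C$-diagonal structure across \emph{all} ensemble states forces $W_{c'c}=0$ for $c'\neq c$, leaving $U_{\tilde\Lambda}=\sum_c\proj{c}^C\otimes W_c^{NQ\to NQE}$. (2) \emph{$Q$ is untouched and $E$ decouples from $Q$.} For each fixed $c$, $W_c$ must satisfy $\Tr_E[W_c(\omega_c^N\otimes\rho_{cx}^Q)W_c^\dagger]=\omega_c^N\otimes\rho_{cx}^Q$ for every $x$. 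Apply Lemma~\ref{lem:kiinvar}'s underlying logic at the level of the $Q$-ensemble $\{p_{c|x}\rho_{cx}^Q\}_x$: maximality (items (ii) and (iii) restricted to a single block, which say precisely that this family admits no nontrivial invariant subspace and no nontrivial intertwiner) forces $W_c$ to act as $U_c^{N\to NE}\otimes\1^Q$. This is the standard argument that a channel fixing a family of states with no common invariant structure must act trivially on that tensor factor. (3) \emph{Trace condition on $N$.} Tracing out $Q$ (and $C$) from the fixed-point equation leaves $\Tr_E[U_c\,\omega_c\,U_c^\dagger]=\omega_c$, as claimed. Assembling, $U_{\tilde\Lambda}=\sum_c\proj{c}^C\otimes U_c^{N\to NE}\otimes\1^Q$, and conjugating back by $U_{\KI}$ gives \eqref{eq:KI_unitary}.

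The main obstacle is step (2): turning the \emph{necessary} maximality conditions (ii) and (iii) — stated for the global KI decomposition — into the statement that, \emph{block by block}, the $Q$-ensemble $\{p_{c|x}\rho_{cx}^Q\}_{x\in\Sigma}$ is rigid enough that any channel fixing all its states acts as the identity on $Q$. One must be careful that a priori $W_c$ could send some weight into $E$ in a way that looks like partial dephasing or partial erasure on $Q$; ruling this out requires showing that the fixed-point algebra of the channel restricted to $\mathcal{H}_Q$ is trivial, which is exactly what (ii) (no nontrivial invariant projector within a block) and (iii) (no nontrivial intertwiner between blocks, hence in particular no self-intertwiner giving a symmetry of a block) were designed to guarantee. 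I expect this to follow by a fixed-point/commutant argument: the set of operators fixed by the channel forms an algebra, the states $\rho_{cx}^Q$ generate (together with the block-maximality) the full matrix algebra on each $Q_c$, and any isometric dilation fixing all these states must therefore be $U_c\otimes\1^Q$ with $U_c$ only acting on the $N$-factor and the environment. Once that local rigidity is in hand, the rest is bookkeeping — the tensor-product structure of $\omega^{CNQX}$ does the heavy lifting, and the conjugation by $U_{\KI}$ is purely cosmetic.
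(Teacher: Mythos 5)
First, a framing point: the paper does not prove Lemma~\ref{lem:kiinvar} at all --- it is imported verbatim from \cite{Koashi2002} --- so there is no in-paper proof to compare your attempt against; your proposal has to stand on its own. Judged that way, your outline identifies the right reduction (pass to the KI frame, show the dilation isometry is block-diagonal in $C$, then show each block acts as $U_c^{N\to NE}\otimes\1^Q$, then read off the trace condition on $N$), but the decisive step --- your step (2), which you yourself flag as ``the main obstacle'' --- is asserted rather than proved, and it is precisely the substantive content of the Koashi--Imoto theorem. Writing ``maximality forces $W_c$ to act as $U_c\otimes\1^Q$'' and calling it ``the standard argument'' hides all of the work: from a hypothetical $W_c$ acting nontrivially on $Q_c$, or a nonzero off-diagonal block $W_{c'c}$, one must actually \emph{construct} a nontrivial projector commuting with every $p_{c|x}\rho^Q_{cx}$ (to contradict item (ii) of Theorem~\ref{thm:ki1}) or a unitary intertwiner between the ensembles of two blocks (to contradict item (iii)). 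That extraction --- polar decompositions, decoupling the environment, and showing that the induced map on the $Q$ factor is simultaneously trace-preserving and unital before one can conclude it is implemented by a unitary --- is where \cite{Koashi2002} spends most of its effort, and none of it appears in your sketch.

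There is also a concrete logical slip in the shortcut you gesture at. You argue that the states $\rho^Q_{cx}$ generate the full matrix algebra $B(Q_c)$ (true: item (ii) says the commutant contains no nontrivial projector, hence is $\CC\1$, hence the generated $*$-algebra is everything by the double commutant theorem), and that ``therefore'' any channel fixing all these states must act as the identity on $Q_c$. But a channel is linear, not multiplicative: fixing a set of states that generates $B(Q_c)$ as an \emph{algebra} does not imply fixing its linear span, let alone all of $B(Q_c)$. The conclusion is correct, but the valid route runs through the fixed-point \emph{algebra} of the adjoint (Heisenberg-picture) map --- which is a $*$-algebra precisely because the channel admits a faithful invariant state, namely $\sum_x p_x\rho_x$ restricted to its support --- together with the Lindblad--Frigerio structure theorem $\mathrm{Fix}(\Lambda^*)\cong\bigoplus_c B(Q_c)\otimes\1^{N_c}$ and its relation to the Schr\"odinger-picture fixed states; items (ii) and (iii) then identify this algebra with the one defined by the KI decomposition. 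Without either that machinery or the explicit projector/intertwiner construction of \cite{Koashi2002}, what you have is a correct strategy outline with its central step missing.
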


Given a quantum ensemble, we can freely remove and attach the redundant part as shown in the following lemma. 
\begin{lemma}[KI operations~\cite{Koashi2002}] \label{lem:kiop}
  Let $\rho^{AX}$ be a classical-quantum state and $\omega^{CQNX}$ be its KI decomposition
  as given in Theorem \ref{thm:ki1}. 
  Then, there are 
  % exists a pair of
  quantum channels $(\mathcal{K}_{\mathrm{off}}^{A\to CQ},\mathcal{K}_{\mathrm{on}}^{CQ\to A})$ such that 
    \begin{align*}
        \mathcal{K}_{\mathrm{off}}^{A\to CQ}(\rho^{AX}) &= \omega^{CQX}, \\
        \mathcal{K}_{\mathrm{on}}^{CQ\to A}(\omega^{CQX}) &= \rho^{AX}. 
    \end{align*} 
\end{lemma}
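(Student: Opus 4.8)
The plan is to prove Lemma~\ref{lem:kiop} by explicitly constructing the two channels $\mathcal{K}_{\mathrm{off}}^{A\to CQ}$ and $\mathcal{K}_{\mathrm{on}}^{CQ\to A}$ using the KI isometry $U_{\KI}: A\hookrightarrow CNQ$ from Theorem~\ref{thm:ki1}, together with the fact that the redundant register $N$ carries, for each block label $c$, a fixed state $\omega_c^N$ that does not depend on the ensemble label $x$.

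First I would define $\mathcal{K}_{\mathrm{off}}^{A\to CQ}$ as: apply $U_{\KI}$ to map $A$ into $CNQ$, then trace out $N$. Since on the support of $\omega^{CNQX}$ the $N$-register is in the product state $\omega_c^N$ conditioned on $c$, tracing out $N$ loses no correlation with $X$, and one checks directly that $(\mathcal{K}_{\mathrm{off}}^{A\to CQ}\otimes\cid^X)(\rho^{AX}) = \Tr_N\,[\omega^{CNQX}] = \omega^{CQX}$, using the block-diagonal form \eqref{KI_item1} and $\Tr\,\omega_c^N = 1$. This is the easy direction. For $\mathcal{K}_{\mathrm{on}}^{CQ\to A}$ I would go the other way: conditioned on the value $c$ in register $C$ (which is accessible by a nondestructive measurement in the orthonormal basis $\{\ket{c}^C\}$ since the state is block-diagonal there), prepare a fresh copy of $\omega_c^N$ in a new register $N$, thereby reconstructing $\omega^{CNQX}$ exactly, and then apply $U_{\KI}^{\dagger}$ (extended suitably to a channel, e.g. by completing the isometry $U_{\KI}$ to a unitary or by appending the projector onto the image of $U_{\KI}$ and mapping the orthogonal complement to an arbitrary fixed state) to return to system $A$. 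One verifies $(\mathcal{K}_{\mathrm{on}}^{CQ\to A}\otimes\cid^X)(\omega^{CQX}) = U_{\KI}^{\dagger}(\omega^{CNQX}) = \rho^{AX}$, where the last equality uses $U_{\KI}^\dagger U_{\KI} = \1^A$ together with the assumption (stated just before the lemma) that we may identify $\rho^{AX}$ with $\omega^{CNQX}$.

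The key steps in order are: (1) write down $\mathcal{K}_{\mathrm{off}}$ as $U_{\KI}$ followed by $\Tr_N$ and verify it produces $\omega^{CQX}$; (2) construct the $c$-controlled preparation map $\mathcal{E}^{C\to CN}$ that reads $c$ in the $C$-basis and adjoins $\omega_c^N$, and check $(\mathcal{E}^{C\to CN}\otimes\cid^{QX})(\omega^{CQX}) = \omega^{CNQX}$; (3) define $\mathcal{K}_{\mathrm{on}} := \mathcal{U}_{\KI}^{\dagger}\circ\mathcal{E}^{C\to CN}$ where $\mathcal{U}_{\KI}^{\dagger}$ is a CPTP extension of the partial isometry $U_{\KI}^\dagger$, and verify the composition reconstructs $\rho^{AX}$; (4) confirm both maps are genuinely CPTP (isometries, partial traces, controlled state preparations, and completions of partial isometries all are). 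I expect the only subtlety — not really an obstacle, but the one point needing care — is in step~(3): $U_{\KI}$ is an isometry, not a unitary, so $U_{\KI}^\dagger$ is not by itself a channel; I would handle this by noting that $\mathcal{E}^{C\to CN}$ always outputs states supported in the image of $U_{\KI}$ (since that image contains the KI-decomposition $\omega^{CNQX}$ and, by the block structure, all states of the form $\proj{c}\otimes\omega_c^N\otimes(\cdot)^Q$ reachable from valid inputs), so on the relevant inputs any CPTP completion of $U_{\KI}^\dagger$ acts as the true inverse; alternatively one dilates $U_{\KI}$ to a unitary on a slightly larger space and discards an ancilla. Everything else is a routine check against the explicit form \eqref{KI_item1}.
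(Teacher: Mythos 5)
Your construction is correct and is exactly the canonical one: $\mathcal{K}_{\mathrm{off}} = \Tr_N\circ\, U_{\KI}$, and $\mathcal{K}_{\mathrm{on}}$ given by the $c$-controlled preparation of $\omega_c^N$ followed by a CPTP completion of $U_{\KI}^\dagger$ (which acts as the true inverse since $\omega^{CNQX}$ is supported in the image of $U_{\KI}$). The paper gives no proof of this lemma — it cites it directly from Koashi--Imoto — and your argument matches the construction in that reference, including the correct handling of the one genuine subtlety, namely extending the partial isometry $U_{\KI}^\dagger$ to a channel.
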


For a quantum state $\rho^A$ on system $A$, the von Neumann entropy of $\rho^A$ is defined as 
\begin{equation*}
    S(\rho^A) \coloneqq S(A)_{\rho} \coloneqq -\tr(\rho^A\log \rho^A). 
\end{equation*}
where $\log$ is taken base $2$ throughout the paper.  
The von Neumann entropy has a dimension bound given by 
\begin{equation*}
    S(\rho^A) \leq \log \mathrm{dim}(A).  
\end{equation*}
The von Neumann entropy is subadditive as shown in the following lemma. 
\begin{lemma}[\cite{Araki-Lieb-70}]~\label{lem:subadditivity}
    Let $\rho^{AR}$ be a quantum state on a composite system $AR$. 
    Then, 
    \begin{equation*}
        S(AR)_{\rho} \leq S(A)_{\rho} + S(R)_{\rho}. 
    \end{equation*}
\end{lemma}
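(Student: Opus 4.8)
\medskip
\noindent\emph{Proof proposal.}
The plan is to deduce subadditivity from the non-negativity of the quantum relative entropy (Klein's inequality). Recall that for density operators $\rho,\sigma$ with $\operatorname{supp}(\rho)\subseteq\operatorname{supp}(\sigma)$ one sets $D(\rho\|\sigma)\coloneqq\tr(\rho\log\rho)-\tr(\rho\log\sigma)$, and Klein's inequality states $D(\rho\|\sigma)\geq 0$. I would first record a short proof of this: diagonalize $\rho=\sum_i p_i\proj{e_i}$ and $\sigma=\sum_j q_j\proj{f_j}$, completing $\{\ket{e_i}\}$ to an orthonormal basis if $\rho$ is rank-deficient, so that $D(\rho\|\sigma)=\sum_i p_i\log p_i-\sum_{i,j}p_i\,\abs{\langle e_i|f_j\rangle}^2\log q_j$. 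Since the matrix with entries $\abs{\langle e_i|f_j\rangle}^2$ is doubly stochastic, concavity of $\log$ gives $\sum_j\abs{\langle e_i|f_j\rangle}^2\log q_j\leq\log r_i$ with $r_i\coloneqq\sum_j\abs{\langle e_i|f_j\rangle}^2 q_j$, and since $\sum_i r_i=1$ the classical Gibbs inequality yields $D(\rho\|\sigma)\geq\sum_i p_i\log(p_i/r_i)\geq 0$.

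Next I would apply Klein's inequality with $\rho=\rho^{AR}$ and $\sigma=\rho^A\otimes\rho^R$, where $\rho^A=\tr_R\rho^{AR}$ and $\rho^R=\tr_A\rho^{AR}$ are the marginals; one checks $\operatorname{supp}(\rho^{AR})\subseteq\operatorname{supp}(\rho^A)\otimes\operatorname{supp}(\rho^R)$, so $D(\rho^{AR}\|\rho^A\otimes\rho^R)$ is well defined. On the relevant support $\log(\rho^A\otimes\rho^R)=(\log\rho^A)\otimes\1^R+\1^A\otimes(\log\rho^R)$, hence $\tr\big(\rho^{AR}\log(\rho^A\otimes\rho^R)\big)=\tr(\rho^A\log\rho^A)+\tr(\rho^R\log\rho^R)=-S(A)_\rho-S(R)_\rho$, where I used that the partial traces of $\rho^{AR}$ are $\rho^A$ and $\rho^R$. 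Therefore
\[
0\leq D(\rho^{AR}\|\rho^A\otimes\rho^R)=-S(AR)_\rho+S(A)_\rho+S(R)_\rho,
\]
and rearranging gives $S(AR)_\rho\leq S(A)_\rho+S(R)_\rho$, as claimed.

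The main obstacle is the clean proof of Klein's inequality --- in particular being careful about the kernel of $\sigma$ so that $\tr(\rho\log\sigma)$ and the Jensen step are meaningful --- together with the routine verification that the logarithm of a product state splits additively on $\operatorname{supp}(\rho^{AR})$. Everything after that is a one-line computation. As alternatives, Klein's inequality also follows from the operator concavity of $t\mapsto-t\log t$, and subadditivity can instead be derived from concavity of the von Neumann entropy; I would keep the relative-entropy argument as the primary line since it is the shortest and most self-contained.
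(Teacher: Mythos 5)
Your argument is correct and complete: reducing subadditivity to the non-negativity of $D(\rho^{AR}\|\rho^{A}\otimes\rho^{R})$, with Klein's inequality established via the doubly stochastic overlap matrix and Gibbs' inequality, is the standard textbook derivation, and your handling of the support condition and of $\log(\rho^A\otimes\rho^R)$ splitting additively is sound. The paper states this lemma as a known fact without proof, so there is no alternative argument to compare against; your write-up would serve as a valid self-contained proof.
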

\noindent The von Neumann entropy is asymptotically continuous.
\begin{lemma}[Fannes Inequality \cite{Fannes1973}]~\label{lem:fannes}
Let $\rho$ and $\sigma$ be density matrices on a Hilbert space
of dimension $d$.  If $\frac{1}{2} \|\rho-\sigma\|_1 \leq \epsilon \leq 1-\frac{1}{d}$, 
then, $|S(\rho)-S(\sigma)| \leq \epsilon \log(d-1) +h_2(\epsilon)$ 
where $h_2$ is the binary entropy function. 
\end{lemma}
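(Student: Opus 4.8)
The plan is to reduce the claim to a classical inequality about Shannon entropies of probability vectors and then close it with a coupling argument combined with Fano's inequality.

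\emph{Step 1: reduction to the classical case.} Let $p=(p_1\ge\cdots\ge p_d)$ and $q=(q_1\ge\cdots\ge q_d)$ be the eigenvalues of $\rho$ and $\sigma$ listed in non-increasing order; these are probability vectors with $H(p)=S(\rho)$ and $H(q)=S(\sigma)$, where $H$ denotes the Shannon entropy (entropy is permutation invariant). By a standard eigenvalue perturbation bound (Mirsky's theorem, the trace-norm analogue of Weyl's inequalities), $\sum_{i=1}^{d}\abs{p_i-q_i}\le\norm{\rho-\sigma}_1$, so the total variation distance $T:=\tfrac12\norm{p-q}_1$ satisfies $T\le\epsilon$. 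Since $\phi(t):=t\log(d-1)+h_2(t)$ has derivative $\log\frac{(d-1)(1-t)}{t}$, which is nonnegative on $[0,1-\tfrac1d]$, the function $\phi$ is non-decreasing there; hence it suffices to prove $\abs{H(p)-H(q)}\le\phi(T)$.

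\emph{Step 2: coupling and Fano.} Let $(I,J)$ be a maximal coupling of $p$ and $q$, i.e.\ a pair of random variables with $I\sim p$, $J\sim q$, and $\Pr[I\ne J]=T$; such a coupling exists by the standard characterization of total variation distance. Since a joint entropy dominates a marginal entropy, $H(I)\le H(I,J)=H(J)+H(I\mid J)$, so $H(p)-H(q)=H(I)-H(J)\le H(I\mid J)$. Applying Fano's inequality to the estimator $\hat I=J$ of the $d$-valued variable $I$, whose error probability is $\Pr[\hat I\ne I]=T$, gives $H(I\mid J)\le h_2(T)+T\log(d-1)=\phi(T)$. Exchanging the roles of $\rho$ and $\sigma$ yields the matching reverse inequality, so $\abs{H(p)-H(q)}\le\phi(T)\le\phi(\epsilon)$.

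\emph{Main obstacle.} All the ingredients of Step~2 are elementary (the coupling characterization of total variation distance, Fano's inequality, and the monotonicity of $\phi$), so the single non-elementary input is the spectral reduction of Step~1 via Mirsky's perturbation bound; after that the estimate is essentially two lines. We note that the inequality is classical---it is due to Fannes, with this sharp form $\epsilon\log(d-1)+h_2(\epsilon)$ due to Audenaert---so in the paper it may simply be invoked as cited. (A self-contained but slightly weaker route avoids Step~1 altogether: decompose $\rho-\sigma$ into its positive and negative parts $S_\pm$, note that $\tr S_\pm=T$ and $\Rank S_++\Rank S_-\le d$, and apply concavity of the von Neumann entropy to the mixtures $(\rho+S_-)/(1+T)$ and $(\sigma+S_+)/(1+T)$ to obtain $\abs{S(\rho)-S(\sigma)}\le T\log(d-1)+(1+T)\,h_2\!\bigl(\tfrac{T}{1+T}\bigr)$, which already suffices for asymptotic-continuity applications.)
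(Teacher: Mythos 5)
The paper does not prove this lemma at all --- it is imported as a black-box background fact (cited to Fannes, though the sharp form $\epsilon\log(d-1)+h_2(\epsilon)$ stated here is really Audenaert's 2007 refinement, as you correctly note; Fannes' 1973 bound has worse constants). So there is no in-paper argument to compare against, and the relevant question is only whether your sketch is sound. It is: Mirsky's trace-norm perturbation bound legitimately reduces the claim to the classical statement $\abs{H(p)-H(q)}\le T\log(d-1)+h_2(T)$ with $T=\tfrac12\norm{p-q}_1\le\epsilon$; the maximal-coupling-plus-Fano step is the standard proof of that classical inequality; and your monotonicity check of $\phi(t)=t\log(d-1)+h_2(t)$ on $[0,1-\tfrac1d]$ (where $\phi'(t)=\log\frac{(d-1)(1-t)}{t}\ge 0$) is exactly what is needed to pass from $\phi(T)$ to $\phi(\epsilon)$, and is the reason the hypothesis $\epsilon\le 1-\tfrac1d$ appears in the statement. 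The one point worth making explicit is that Mirsky requires both spectra to be sorted in the \emph{same} order (non-increasing, as you wrote); with that, $H$ is permutation-invariant so nothing is lost. Your closing remark that the concavity/Jordan-decomposition route gives a slightly weaker but still asymptotically sufficient bound is also accurate. In short: a correct, self-contained proof of a result the paper only cites.
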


For a quantum state $\rho^{AR}$ on a composite system $AR$, the quantum mutual information of $\rho^{AR}$ is defined as 
\begin{equation*}
    I(A:R)_{\rho} \coloneqq S(A)_{\rho} + S(R)_{\rho} - S(AR)_{\rho}.
\end{equation*}
The quantum mutual information also has a dimension bound: 
\begin{equation*}
    I(A:R)_{\rho} \leq 2 \log \left( \; \mathrm{min}(\mathrm{dim}(A),\mathrm{dim}(R)) \; \right). 
\end{equation*}
We will use the following property of the quantum mutual information:
\begin{lemma}[\cite{Datta2013b}]~\label{lem:convexity}
    Let $\rho^{AR}$ be a quantum state on a composite system $AR$. 
    Let $\mathcal{N}_1$ and $\mathcal{N}_2$ be quantum channels from system $A$ to system $B$, 
    and let $\lambda \in (0,1)$. 
    Then, 
    \begin{equation*}
        I(B:R)_{\lambda \left(\mathcal{N}_1\otimes \cid \right)(\rho) + (1-\lambda)\left(\mathcal{N}_2\otimes \cid \right)(\rho)} \leq \lambda I(B:R)_{\left(\mathcal{N}_1\otimes \cid \right)(\rho)} + (1-\lambda) I(B:R)_{\left(\mathcal{N}_2\otimes \cid \right)(\rho)}. 
    \end{equation*}
\end{lemma}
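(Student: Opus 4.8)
The plan is to prove the statement by introducing a classical \emph{flag} register that records which of the two channels was applied, and then invoking the monotonicity of quantum mutual information under a partial trace on one subsystem. Concretely, write $\sigma_1^{BR} \coloneqq (\mathcal{N}_1 \otimes \cid)(\rho^{AR})$ and $\sigma_2^{BR} \coloneqq (\mathcal{N}_2 \otimes \cid)(\rho^{AR})$, and on an auxiliary system $J$ with orthonormal basis $\{\ket{1},\ket{2}\}$ define
\begin{equation*}
  \sigma^{BRJ} \coloneqq \lambda\, \sigma_1^{BR} \otimes \proj{1}^J + (1-\lambda)\, \sigma_2^{BR} \otimes \proj{2}^J ,
\end{equation*}
so that discarding $J$ yields precisely the state appearing on the left-hand side, $\sigma^{BR} = \lambda(\mathcal{N}_1\otimes\cid)(\rho^{AR}) + (1-\lambda)(\mathcal{N}_2\otimes\cid)(\rho^{AR})$.

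Since tracing out $J$ is a quantum channel acting only on the $B$-side, data processing for mutual information (a restatement of strong subadditivity of the von Neumann entropy) gives $I(B:R)_{\sigma} \leq I(BJ:R)_{\sigma}$, so it suffices to show that $I(BJ:R)_{\sigma} = \lambda I(B:R)_{\sigma_1} + (1-\lambda) I(B:R)_{\sigma_2}$. Because $\sigma^{BRJ}$ is block-diagonal in the classical register $J$, the entropies split as $S(BJ)_\sigma = h_2(\lambda) + \lambda S(B)_{\sigma_1} + (1-\lambda) S(B)_{\sigma_2}$ and $S(BJR)_\sigma = h_2(\lambda) + \lambda S(BR)_{\sigma_1} + (1-\lambda) S(BR)_{\sigma_2}$. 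Moreover, as $\mathcal{N}_1$ and $\mathcal{N}_2$ are trace preserving and act trivially on $R$, the marginals satisfy $\sigma_1^{R} = \sigma_2^{R} = \sigma^{R} = \rho^{R}$, hence $S(R)_\sigma = \lambda S(R)_{\sigma_1} + (1-\lambda) S(R)_{\sigma_2}$. Substituting these into $I(BJ:R)_\sigma = S(BJ)_\sigma + S(R)_\sigma - S(BJR)_\sigma$, the $h_2(\lambda)$ terms cancel and the desired identity follows; combining it with the monotonicity inequality proves the claim.

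The only nontrivial ingredient is the monotonicity step $I(B:R)_{\sigma} \le I(BJ:R)_{\sigma}$; everything else is bookkeeping with the block structure. I therefore expect this to be the main (and essentially the only) obstacle, and it is handled by invoking strong subadditivity of the von Neumann entropy — a standard fact — rather than anything particular to this paper. Equivalently, one may phrase the same step as concavity of the conditional entropy $S(R|B)$ in the state, which, together with the equality of the $R$-marginals of $\sigma_1$ and $\sigma_2$, directly yields $I(B:R)_\sigma \le \lambda I(B:R)_{\sigma_1}+(1-\lambda)I(B:R)_{\sigma_2}$.
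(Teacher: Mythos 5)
Your proof is correct. The paper itself gives no proof of this lemma---it is simply imported from the cited reference---so there is nothing to compare against; your flag-register argument (extend to a block-diagonal state on $BRJ$, split the entropies, use that $\sigma_1^R=\sigma_2^R=\rho^R$ so the $S(R)$ term splits exactly, then discard $J$ via data processing) is the standard derivation and is sound. The one place where care is genuinely needed is the equality of the $R$-marginals: without it, concavity of $S(R)$ would push the inequality the wrong way, and you correctly identify and use the fact that both channels act only on $A$ so that this equality holds.
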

\noindent In addition, the quantum mutual information follows  
the data-processing inequality:
\begin{lemma}~\label{lem:data_processing}
    Let $\rho^{AR}$ be a quantum state on a composite system $AR$, 
    and let $\mathcal{N}^{A\to B}$ and $\mathcal{M}^{R\to R'}$ be quantum channels. 
    Then, 
    \begin{align*}
        I(A:R)_{\rho} &\geq I(B:R')_{(\mathcal{N}\otimes\mathcal{M})(\rho)}. 
    \end{align*}
\end{lemma}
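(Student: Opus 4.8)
The plan is to reduce the two-channel inequality to a single-channel version and then invoke strong subadditivity of the von Neumann entropy. Since the quantum mutual information is symmetric, $I(A:R)_{\rho}=I(R:A)_{\rho}$, and since $\mathcal{N}^{A\to B}$ and $\mathcal{M}^{R\to R'}$ act on disjoint systems, it suffices to establish
\begin{equation*}
  I(A:R)_{\rho}\;\geq\;I(B:R)_{(\mathcal{N}^{A\to B}\otimes\cid^{R})(\rho)} .
\end{equation*}
Granting this, set $\tau^{BR}\coloneqq(\mathcal{N}^{A\to B}\otimes\cid^{R})(\rho^{AR})$: applying the displayed inequality once to $\mathcal{N}$ on the first system yields $I(A:R)_{\rho}\geq I(B:R)_{\tau}$, and applying it again (after using symmetry) to $\mathcal{M}$ on the second system of $\tau^{BR}$ yields $I(B:R)_{\tau}\geq I(B:R')_{(\mathcal{N}\otimes\mathcal{M})(\rho)}$, so that the two chain together.

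For the single-channel statement I would pass to a Stinespring dilation $U_{\mathcal{N}}^{A\to BE}$ of $\mathcal{N}$ and set $\sigma^{BER}\coloneqq(U_{\mathcal{N}}\otimes\1^{R})\,\rho^{AR}\,(U_{\mathcal{N}}^{\dagger}\otimes\1^{R})$. An isometry leaves the spectrum of a state unchanged, so $S(BER)_{\sigma}=S(AR)_{\rho}$ and $S(BE)_{\sigma}=S(A)_{\rho}$, while the marginal on $R$ is untouched, giving $S(R)_{\sigma}=S(R)_{\rho}$; hence $I(A:R)_{\rho}=I(BE:R)_{\sigma}$. Since $\Tr_{E}\,\sigma^{BER}=(\mathcal{N}\otimes\cid^{R})(\rho^{AR})$, the single-channel statement is equivalent to $I(BE:R)_{\sigma}\geq I(B:R)_{\sigma}$, namely that tracing out a subsystem cannot increase the mutual information with $R$.

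This last inequality is exactly strong subadditivity: expanding both sides, $I(BE:R)_{\sigma}\geq I(B:R)_{\sigma}$ is the same as $S(BE)_{\sigma}+S(BR)_{\sigma}\geq S(B)_{\sigma}+S(BER)_{\sigma}$. I would therefore cite strong subadditivity of the von Neumann entropy (Lieb--Ruskai) as the one nontrivial ingredient; equivalently one may write $I(BE:R)_{\sigma}=D\!\left(\sigma^{BER}\,\big\|\,\sigma^{BE}\otimes\sigma^{R}\right)$ and invoke monotonicity of the quantum relative entropy under the partial trace over $E$. Either way, the reduction and dilation steps are routine bookkeeping; the main obstacle is that the lemma is essentially equivalent to strong subadditivity, so a fully self-contained argument would have to reproduce one of that theorem's standard proofs (via Lieb's concavity theorem or a Golden--Thompson estimate) rather than anything lighter already on hand in the excerpt.
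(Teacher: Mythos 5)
Your proof is correct. The paper states Lemma~\ref{lem:data_processing} as a standard background fact and gives no proof of its own, so there is nothing to diverge from: your argument --- splitting the two channels via symmetry of the mutual information, purifying $\mathcal{N}$ with a Stinespring isometry so that $I(A{:}R)_{\rho}=I(BE{:}R)_{\sigma}$, and reducing the discarding of $E$ to strong subadditivity $S(BE)_{\sigma}+S(BR)_{\sigma}\geq S(B)_{\sigma}+S(BER)_{\sigma}$ --- is exactly the canonical derivation, and your closing remark that the lemma is essentially equivalent to strong subadditivity (or monotonicity of relative entropy under partial trace) correctly identifies the one genuinely nontrivial ingredient.
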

\noindent The quantum mutual information also satisfies a property called \textit{superadditivity}. 
\begin{lemma}[\cite{Datta2013b}]~\label{lem:superadditivity-orig}
    Let $\rho^{A_1R_1}$ and $\sigma^{A_2R_2}$ be pure quantum states on composite systems $A_1R_1$ and $A_2R_2$. 
    Let $\mathcal{N}^{A_1A_2\to B_1B_2}$ be a quantum channel, and 
    $\omega^{B_1B_2R_1R_2} \coloneqq \mathcal{N}^{A_1A_2\to B_1B_2}(\rho^{A_1R_1}\otimes\sigma^{A_2R_2})$.
    Then, 
    \begin{align*}
        I(B_1B_2:R_1R_2)_{\omega} \geq I(B_1:R_1)_{\omega} + I(B_2:R_2)_{\omega}. 
    \end{align*}
\end{lemma}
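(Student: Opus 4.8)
The plan is to rewrite all three mutual informations in terms of quantum conditional entropies, $S(Y|X)_\omega \coloneqq S(XY)_\omega - S(X)_\omega$, and to reduce the asserted inequality to a single conditional-entropy inequality that follows from strong subadditivity. The only structural input I would draw from the hypotheses is this: since $\mathcal{N}^{A_1A_2\to B_1B_2}$ is trace-preserving and acts on the systems $A_1A_2$, which are disjoint from $R_1R_2$, the reduced state $\omega^{R_1R_2}$ equals the input marginal $\rho^{R_1}\otimes\sigma^{R_2}$ and is therefore a \emph{product} state; in particular $S(R_1R_2)_\omega = S(R_1)_\omega + S(R_2)_\omega$. (Note that purity of $\rho$ and $\sigma$ is not needed for the inequality itself; only the product form of the joint input on $R_1R_2$ is used.)

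Next I would expand, using $I(X:Y)_\omega = S(Y)_\omega - S(Y|X)_\omega$,
\[ I(B_1B_2:R_1R_2)_\omega = S(R_1R_2)_\omega - S(R_1R_2\,|\,B_1B_2)_\omega, \]
\[ I(B_1:R_1)_\omega = S(R_1)_\omega - S(R_1\,|\,B_1)_\omega, \qquad I(B_2:R_2)_\omega = S(R_2)_\omega - S(R_2\,|\,B_2)_\omega. \]
Substituting $S(R_1R_2)_\omega = S(R_1)_\omega + S(R_2)_\omega$, the single-system reference entropies cancel, so $I(B_1B_2:R_1R_2)_\omega \geq I(B_1:R_1)_\omega + I(B_2:R_2)_\omega$ becomes equivalent to
\[ S(R_1\,|\,B_1)_\omega + S(R_2\,|\,B_2)_\omega \ \geq\ S(R_1R_2\,|\,B_1B_2)_\omega. \]

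To finish, I would apply the chain rule $S(R_1R_2\,|\,B_1B_2)_\omega = S(R_1\,|\,B_1B_2)_\omega + S(R_2\,|\,R_1B_1B_2)_\omega$ and then bound each summand by monotonicity of the conditional entropy under enlarging the conditioning register (strong subadditivity, itself a consequence of the data-processing inequality, Lemma~\ref{lem:data_processing}): since $B_1 \subseteq B_1B_2$ one has $S(R_1\,|\,B_1B_2)_\omega \leq S(R_1\,|\,B_1)_\omega$, and since $B_2 \subseteq R_1B_1B_2$ one has $S(R_2\,|\,R_1B_1B_2)_\omega \leq S(R_2\,|\,B_2)_\omega$. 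Adding these two bounds yields the displayed inequality, hence the lemma. I do not expect a genuine obstacle: the argument is essentially bookkeeping, and the only points requiring care are invoking the product structure of $\omega^{R_1R_2}$ at the right moment so that the $S(R_i)_\omega$ terms cancel, and pairing each term of the chain-rule expansion with the correct instance of the conditioning-reduces-entropy inequality.
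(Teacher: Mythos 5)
Your proof is correct: the reduction to $S(R_1|B_1)_\omega + S(R_2|B_2)_\omega \geq S(R_1R_2|B_1B_2)_\omega$ via the product structure of $\omega^{R_1R_2}$, followed by the chain rule and two applications of strong subadditivity, is exactly the standard argument, and it is essentially the proof given in the cited reference \cite{Datta2013b} (the paper itself states this lemma without proof). Your side remark is also right that only the product form of the input marginal on $R_1R_2$ is used, not purity.
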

\noindent 
We will use the following generalization of the above lemma.
\begin{lemma}~\label{lem:superadditivity}
    For any integer $n \geq 2$, let $\rho_1^{A_1R_1},\cdots,\rho_n^{A_nR_n}$ be pure quantum states on $A_1R_1,\cdots,A_nR_n$. 
    Let $\mathcal{N}^{A_1\cdots A_n\to B_1 \cdots B_n}$ be a quantum channel, and 
    $\omega^{B_1 \cdots B_n R_1 \cdots R_n} \coloneqq \mathcal{N}^{A_1\cdots A_n\to B_1 \cdots B_n}(\rho_1^{A_1R_1}\otimes\cdots \otimes \rho_n^{A_nR_n})$.
    Then, 
    \begin{align*}
        I(B_1 \cdots B_n:R_1 \cdots R_n)_{\omega} \geq I(B_1:R_1)_{\omega} + \cdots + I(B_n:R_n)_{\omega}. 
    \end{align*}
\end{lemma}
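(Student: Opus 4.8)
The plan is to deduce the $n$-party statement from the two-party version (Lemma~\ref{lem:superadditivity-orig}) by induction on $n$, grouping the first $n-1$ systems together into a single composite system. First I would set up the induction: the base case $n=2$ is exactly Lemma~\ref{lem:superadditivity-orig}. For the inductive step, assume the claim holds for $n-1$ and consider pure states $\rho_1^{A_1R_1},\dots,\rho_n^{A_nR_n}$ and a channel $\mathcal{N}^{A_1\cdots A_n\to B_1\cdots B_n}$. Write $\rho^{A_1R_1}\otimes\cdots\otimes\rho_{n-1}^{A_{n-1}R_{n-1}}$ as a single pure state on the bipartite cut $(A_1\cdots A_{n-1})\!:\!(R_1\cdots R_{n-1})$, and view $\mathcal{N}$ as a channel from $(A_1\cdots A_{n-1})A_n$ to $(B_1\cdots B_{n-1})B_n$. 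Applying the two-party lemma with $A_1'\coloneqq A_1\cdots A_{n-1}$, $R_1'\coloneqq R_1\cdots R_{n-1}$, $A_2'\coloneqq A_n$, $R_2'\coloneqq R_n$ gives
\begin{equation*}
I(B_1\cdots B_n:R_1\cdots R_n)_{\omega}\geq I(B_1\cdots B_{n-1}:R_1\cdots R_{n-1})_{\omega}+I(B_n:R_n)_{\omega}.
\end{equation*}

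Next I would bound the first term on the right using the inductive hypothesis. The state $\omega^{B_1\cdots B_{n-1}R_1\cdots R_{n-1}}$ is obtained from $\rho_1^{A_1R_1}\otimes\cdots\otimes\rho_{n-1}^{A_{n-1}R_{n-1}}$ by the channel $\mathcal{N}'\coloneqq \operatorname{Tr}_{B_n}\circ\,\mathcal{N}$ composed with tracing out $A_n$; more precisely, since the $\rho_i$ are i.i.d.-like tensor factors, the reduced state on $B_1\cdots B_{n-1}R_1\cdots R_{n-1}$ is $\mathcal{N}'^{A_1\cdots A_{n-1}\to B_1\cdots B_{n-1}}(\rho_1^{A_1R_1}\otimes\cdots\otimes\rho_{n-1}^{A_{n-1}R_{n-1}})$ where $\mathcal{N}'$ acts as $\mathcal{N}$ followed by discarding $B_n$ and where $A_n$ has been fed a fixed input (equivalently, absorb $\rho_n$'s $A_n$-marginal into the channel). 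By the $(n-1)$-party hypothesis applied to $\mathcal{N}'$,
\begin{equation*}
I(B_1\cdots B_{n-1}:R_1\cdots R_{n-1})_{\omega}\geq I(B_1:R_1)_{\omega}+\cdots+I(B_{n-1}:R_{n-1})_{\omega}.
\end{equation*}
Combining the two displayed inequalities yields the claim for $n$.

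The main obstacle I anticipate is purely bookkeeping rather than conceptual: one must be careful that when grouping the first $n-1$ input systems the relevant state is genuinely pure on the bipartition $(A_1\cdots A_{n-1})\!:\!(R_1\cdots R_{n-1})$ — which it is, being a tensor product of pure states — so that Lemma~\ref{lem:superadditivity-orig} applies verbatim. A second small point is to phrase the reduction cleanly so that the inductive hypothesis is invoked for an honest channel on $A_1\cdots A_{n-1}$; the cleanest way is to note that $I(B_1\cdots B_{n-1}:R_1\cdots R_{n-1})$ is evaluated on the state $(\operatorname{Tr}_{A_nB_n}\circ\,\mathcal{N})$ applied after attaching the $A_n$-part of $\rho_n$, which is a legitimate channel $A_1\cdots A_{n-1}\to B_1\cdots B_{n-1}$, and the $R_n$ system is simply traced out and plays no role. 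No deeper inequality (such as strong subadditivity) is needed beyond what is already packaged in the $n=2$ case.
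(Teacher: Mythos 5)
Your proof is correct and follows essentially the same route as the paper: induction on $n$, applying the two-party Lemma~\ref{lem:superadditivity-orig} to the bipartition $(A_1\cdots A_{n-1}):A_n$, and then invoking the induction hypothesis on the reduced state viewed as the output of the composite channel that attaches $\rho_n$, applies $\mathcal{N}$, and discards the last systems. The bookkeeping points you flag (purity across the grouped cut, and that the reduction defines an honest channel on $A_1\cdots A_{n-1}$) are exactly the ones the paper's sketch relies on.
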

\begin{proof}
We sketch a proof by induction on $n$.  Lemma \ref{lem:superadditivity-orig} is the base case for $n=2$. 
Furthermore, using Lemma \ref{lem:superadditivity-orig},  
$$I(B_1 \cdots B_n:R_1 \cdots R_n)_{\omega} \geq I(B_1\cdots B_{n-1}:R_1\cdots R_{n-1})_{\omega} + I(B_n:R_n)_{\omega}.$$  
Let $\tilde{\omega} = \tr_{B_n R_n} \omega$.  We can see that $\tilde{\omega} = \tilde{\mathcal{N}}^{A_1\cdots A_{n-1}\to B_1 \cdots B_{n-1}}(\rho_1^{A_1R_1}\otimes\cdots\otimes \rho_{n-1}^{A_{n-1}R_{n-1}})$ for some quantum channel $\tilde{\mathcal{N}}$, which acts by attaching $\rho_n^{A_nR_n}$, applying $\mathcal{N}$, and then tracing out $B_n R_n$.  Thus we can apply the induction hypothesis to $I(B_1 \cdots B_{n-1}:R_1 \cdots R_{n-1})_{\tilde{\omega}}$.  Finally, noting that $I(B_i:R_i)_{\omega} = I(B_i:R_i)_{\tilde{\omega}}$ completes the induction step.  
\end{proof}

\noindent 
To characterize the rate-distortion function, we use the entanglement of purification \cite{Terhal2002}
which can be expressed as follows. 
\begin{theorem}[\cite{Terhal2002}]\label{thm:entanglement_of_purification}
The entanglement of purification of a bipartite state $\rho^{AR}$ is given by 
\begin{align}
    E_{p}(A:R)_{\rho}
= \min_{\cN^{A'\to A''}} S(AA'')_{\sigma}, \nonumber
\end{align}
where $\rho^{AR}$ is first purified to $\ket{\psi}^{AA'R}$ and then 
a quantum channel $\cN^{A'\to A''}$ is applied to the purifying system $A'$ 
to minimize the entropy of $\sigma^{AA''}=(\cid^A \otimes \cN^{A'\to A''})(\rho^{AA'})$.
\end{theorem}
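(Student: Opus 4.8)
\noindent\emph{Proof idea.} The plan is to reconcile the channel-on-$A'$ formula above with the defining expression $E_p(A:R)_\rho=\min_{\phi} S(AA_1)_\phi$, where the minimum is over all purifications $\ket{\phi}^{AA_1RR_1}$ of $\rho^{AR}$ with $A_1$ an ancilla on the $A$-side and $R_1$ an ancilla on the $R$-side. The only ingredients beyond bookkeeping are the Stinespring dilation recalled above and the standard fact that two purifications of the same state are related by an isometry between their purifying systems; I would prove ``$\ge$'' and ``$\le$'' separately.

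For ``$\ge$'', I would start from an arbitrary $\cN^{A'\to A''}$, fix a Stinespring dilation $U^{A'\to A''E}_{\cN}$, and apply it to the purifying system of $\ket{\psi}^{AA'R}$ to form $\ket{\phi}^{AA''RE}:=(\1^{AR}\otimes U^{A'\to A''E}_{\cN})\ket{\psi}^{AA'R}$. A short computation using only $U_{\cN}^{\dg}U_{\cN}=\1^{A'}$ gives $\Tr_{A''E}\proj{\phi}=\rho^{AR}$, so $\ket{\phi}$ is a purification of $\rho^{AR}$ with ancilla $A''$ on the $A$-side and ancilla $E$ on the $R$-side; and $\Tr_{RE}\proj{\phi}=(\cid^A\otimes\cN^{A'\to A''})(\rho^{AA'})=\sigma^{AA''}$. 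Hence $S(AA'')_\sigma=S(AA'')_\phi\ge E_p(A:R)_\rho$, and minimizing over $\cN$ gives the inequality.

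For ``$\le$'', I would reverse this: take a purification $\ket{\phi}^{AA_1RR_1}$ attaining $E_p(A:R)_\rho=S(AA_1)_\phi$ (the minimum is attained since, by a standard argument, one may restrict to $A_1,R_1$ of bounded dimension, on which the entropy is continuous by Lemma~\ref{lem:fannes}). Choosing $A'$ to be a minimal purifying system of $\rho^{AR}$, so $\dim A'=\Rank\rho^{AR}\le\dim(A_1R_1)$, the equivalence of purifications yields an isometry $W^{A'\to A_1R_1}$ with $(\1^{AR}\otimes W^{A'\to A_1R_1})\ket{\psi}^{AA'R}=\ket{\phi}^{AA_1RR_1}$. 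Setting $A'':=A_1$ and $\cN^{A'\to A''}(\cdot):=\Tr_{R_1}[W(\cdot)W^{\dg}]$, pushing the trace through gives $\sigma^{AA''}=(\cid^A\otimes\cN^{A'\to A''})(\rho^{AA'})=\Tr_{RR_1}\proj{\phi}=\phi^{AA_1}$, so $\min_{\cN}S(AA'')_\sigma\le S(AA_1)_\phi=E_p(A:R)_\rho$.

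There is no deep obstacle: the statement is a change of viewpoint, and its content is precisely that every two-sided ancilla extension of a purification of $\rho^{AR}$ arises by dilating a channel on the single purifying system $A'$, and conversely. The two spots that need care are (a) placing $A''$ and the dilation environment $E$ on the correct sides of the bipartition in the ``$\ge$'' step, and (b) justifying that both minima are attained, which needs the dimension bound on optimal purifications together with continuity of the von Neumann entropy.
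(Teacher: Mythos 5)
Your argument is correct, but note that the paper does not prove this statement at all --- it is imported verbatim from \cite{Terhal2002} as a cited characterization of the entanglement of purification, so there is no in-paper proof to compare against. What you have written is essentially the standard derivation from that reference (Stinespring dilation of $\cN$ to pass from a channel on $A'$ to a two-sided ancilla extension, and the isometric equivalence of purifications with a minimal $A'$ for the converse), and both directions, including your care about which side of the cut $A''$ and $E$ land on and the attainment of the minima via dimension bounds, check out.
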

\noindent The entanglement of purification is upper-bounded by the von Neumann entropy.  
\begin{lemma}[\cite{Terhal2002}]~\label{lem:upper_bound}
    Let $\rho^{AR}$ be a quantum state on a composite system $AR$. 
    Then, 
    \begin{align*}
        E_p(A:R)_{\rho} \leq \min\{S(A)_{\rho},S(R)_{\rho}\}.
    \end{align*}
\end{lemma}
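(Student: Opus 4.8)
The plan is to exploit the variational formula in Theorem~\ref{thm:entanglement_of_purification} directly: since $E_p(A:R)_\rho$ is a \emph{minimum} over channels $\cN^{A'\to A''}$ acting on the purifying system, it suffices to exhibit, for each of the two bounds separately, one particular channel whose associated entropy $S(AA'')_\sigma$ equals the claimed upper bound. So first I would fix a purification $\ket{\psi}^{AA'R}$ of $\rho^{AR}$ and set $\rho^{AA'} = \Tr_R \proj{\psi}^{AA'R}$, exactly as in the statement of the theorem.

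For the bound $E_p(A:R)_\rho \leq S(A)_\rho$, I would take $\cN^{A'\to A''}$ to be the channel that discards $A'$ and prepares a fixed pure state on a one-dimensional output system $A''$. This is a legitimate CPTP map, and it yields $\sigma^{AA''} = \rho^A \otimes \proj{0}^{A''}$, so that $S(AA'')_\sigma = S(\rho^A) + 0 = S(A)_\rho$. Hence the minimum defining $E_p(A:R)_\rho$ is at most $S(A)_\rho$.

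For the bound $E_p(A:R)_\rho \leq S(R)_\rho$, I would instead take $\cN^{A'\to A''}$ to be the identity channel, so that $A'' \cong A'$ and $\sigma^{AA''} = \rho^{AA'}$. Because $\ket{\psi}^{AA'R}$ is pure, the reduced states on $AA'$ and on $R$ are isospectral, so $S(AA')_\rho = S(R)_\psi$; and since $\ket{\psi}^{AA'R}$ purifies $\rho^{AR}$ we have $\Tr_{AA'}\proj{\psi} = \rho^R$, giving $S(R)_\psi = S(R)_\rho$. Combining these, $S(AA'')_\sigma = S(R)_\rho$, so the minimum is at most $S(R)_\rho$ as well. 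Taking the smaller of the two bounds yields $E_p(A:R)_\rho \leq \min\{S(A)_\rho, S(R)_\rho\}$.

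I do not expect a substantive obstacle: the only points needing (minimal) care are to confirm that the ``discard and replace by a fixed pure state'' map is a valid quantum channel and that the formula of Theorem~\ref{thm:entanglement_of_purification} permits a trivial output system $A''$. Everything else follows immediately from the purity of $\ket{\psi}^{AA'R}$ together with the fact that the von Neumann entropy of a reduced state is independent of the chosen purification.
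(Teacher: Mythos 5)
Your proposal is correct. The paper does not prove this lemma itself---it imports it from \cite{Terhal2002}---and your argument is precisely the standard one from that reference: both bounds follow by exhibiting feasible points in the variational formula of Theorem~\ref{thm:entanglement_of_purification}, namely the trace-out channel on $A'$ (giving $S(AA'')_\sigma = S(A)_\rho$) and the identity channel on $A'$ (giving $S(AA')_\rho = S(R)_\rho$ by purity of $\ket{\psi}^{AA'R}$). Both channels are legitimate CPTP maps and the minimization places no restriction on the output system $A''$, so there is no gap.
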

\noindent The entanglement of purification also satisfies the monotonicity property, 
which is analoguous to the data processing inequality for quantum mutual information.
\begin{lemma}[\cite{Terhal2002}]~\label{lem:monotonicity}
    Let $\rho^{AR}$ be a quantum state on a composite system $AR$, 
    and let $\mathcal{N}^{A\to B}$ be a quantum channel. 
    Then, 
    \begin{align*}
        E_p(A:R)_{\rho} &\geq E_p(B:R)_{(\mathcal{N}\otimes \; \cid)(\rho)}.
    \end{align*}
\end{lemma}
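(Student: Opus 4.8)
The plan is to derive Lemma~\ref{lem:monotonicity} directly from the variational characterization of $E_p$ in Theorem~\ref{thm:entanglement_of_purification}. Writing $\tau^{BR} := (\mathcal{N}^{A\to B}\otimes\cid^R)(\rho^{AR})$, the idea is to exhibit, for the minimization defining $E_p(B:R)_\tau$, one feasible choice whose objective value equals $E_p(A:R)_\rho$; since $E_p(B:R)_\tau$ is the minimum of that objective, this immediately yields $E_p(B:R)_\tau \le E_p(A:R)_\rho$. Concretely, let $\ket{\psi}^{AA'R}$ be a purification of $\rho^{AR}$ (the one used to evaluate the right-hand side), set $\rho^{AA'} := \tr_R\proj{\psi}^{AA'R}$, and let $U_{\mathcal{N}}^{A\to BE}$ be a Stinespring dilation of $\mathcal{N}$. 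Then $\ket{\phi}^{BEA'R} := (U_{\mathcal{N}}^{A\to BE}\otimes\1^{A'R})\ket{\psi}^{AA'R}$ is a purification of $\tau^{BR}$ whose purifying system is $EA'$.

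First I would fix a channel $\mathcal{M}^{A'\to A''}$ attaining the minimum on the right-hand side, so that $E_p(A:R)_\rho = S(AA'')_\sigma$ with $\sigma^{AA''} := (\cid^A\otimes\mathcal{M}^{A'\to A''})(\rho^{AA'})$. Then, evaluating $E_p(B:R)_\tau$ via Theorem~\ref{thm:entanglement_of_purification} with the purification $\ket{\phi}^{BEA'R}$, I would test the channel $\cid^E\otimes\mathcal{M}^{A'\to A''}$ applied to its purifying system $EA'$, with output system $EA''$, so that $EA''$ gets grouped with $B$. The corresponding state is $\xi^{B(EA'')} = (\cid^{BE}\otimes\mathcal{M}^{A'\to A''})(\Omega^{BEA'})$, where $\Omega^{BEA'} := \tr_R\proj{\phi}^{BEA'R} = (U_{\mathcal{N}}^{A\to BE}\otimes\1^{A'})\,\rho^{AA'}\,((U_{\mathcal{N}}^{A\to BE})^{\dagger}\otimes\1^{A'})$.

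The key step is to observe that $U_{\mathcal{N}}^{A\to BE}$, which acts on $A$, and $\mathcal{M}^{A'\to A''}$, which acts on $A'$, act on disjoint tensor factors and therefore commute, so that
\begin{equation*}
\xi^{B(EA'')} = \left(U_{\mathcal{N}}^{A\to BE}\otimes\1^{A''}\right)\,\sigma^{AA''}\,\left((U_{\mathcal{N}}^{A\to BE})^{\dagger}\otimes\1^{A''}\right).
\end{equation*}
Since $U_{\mathcal{N}}^{A\to BE}$ is an isometry, conjugation by $U_{\mathcal{N}}^{A\to BE}\otimes\1^{A''}$ preserves the von Neumann entropy, giving $S(BEA'')_\xi = S(AA'')_\sigma = E_p(A:R)_\rho$. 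Treating $EA''$ as the channel output in Theorem~\ref{thm:entanglement_of_purification}, feasibility of this choice gives $E_p(B:R)_\tau \le S\!\left(B(EA'')\right)_\xi = E_p(A:R)_\rho$, as claimed.

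The one subtlety, and the main (albeit mild) obstacle, is that Theorem~\ref{thm:entanglement_of_purification} is phrased in terms of a single fixed purification of $\tau^{BR}$, whereas the argument above uses the particular purification $\ket{\phi}^{BEA'R}$. This is harmless because $E_p(B:R)_\tau$ does not depend on the choice of purification: any two purifications of the same state are related by an isometry on the purifying system (after padding the smaller one with an ancilla), and such an isometry can be absorbed into the minimizing channel of Theorem~\ref{thm:entanglement_of_purification}~\cite{Terhal2002}. Apart from this, the proof is just bookkeeping — tracking which ancillary systems are grouped with $B$ versus $R$, and verifying the commutation displayed above.
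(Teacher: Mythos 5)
Your proof is correct. The paper does not actually prove Lemma~\ref{lem:monotonicity}; it imports it from \cite{Terhal2002}, and your argument is the standard one for that result: exhibit a feasible point for the minimization defining $E_p(B:R)_\tau$ by Stinespring-dilating $\mathcal{N}$, reuse the optimal channel $\mathcal{M}^{A'\to A''}$ from the $E_p(A:R)_\rho$ side on the $A'$ factor of the new purifying system $EA'$, and invoke isometric invariance of the von Neumann entropy to equate the two objective values. The commutation of $U_{\mathcal{N}}^{A\to BE}$ with $\mathcal{M}^{A'\to A''}$ and the purification-independence of the variational formula are exactly the two points that need checking, and you handle both.
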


\noindent 
Our results rely on quantum state redistribution~\cite{Devetak2008a,Yard2009},
which can be summarized as follows.  
\begin{theorem}[Quantum state redistribution~\cite{Devetak2008a,Yard2009}]\label{qsrrecap}
Consider an arbitrary tripartite state on $ACB$, with purification
$|\psi\rangle^{ACBR}$.  Consider $n$ copies of the state for large
$n$, on systems $A_1, \cdots A_n, C_1, \cdots C_n, B_1, \cdots B_n,
R_1, \cdots R_n$.  Suppose initially Alice has systems $A_1, \cdots
A_n, C_1, \cdots C_n$, and Bob has systems $B_1, \cdots B_n$.  Then,
there is a protocol transmitting
$nQ$ qubits 
from Alice to Bob for
$Q=\frac{1}{2} I(C:R|B) + \eta_n$, and consuming $nE$ ebits shared between them, 
where $Q+E = S(C|B) + \eta_n$, so that the final state is
$\epsilon_n$-close to $(|\psi\rangle^{ACBR})^{\otimes n}$
but now $C_1, \cdots C_n$ are in the possession
of Bob, and such that
$\{\eta_n\}$, $\{\epsilon_n\}$ are vanishing non-negative sequences.
\end{theorem}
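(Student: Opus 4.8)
\noindent\emph{Proof strategy (sketch).} This is a known result; the plan is to recall the argument, for which the cleanest route is the \emph{decoupling} method (the original proofs in~\cite{Devetak2008a,Yard2009} differ only in packaging). There are two halves: achievability and a matching converse.

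For achievability, first pass to the typical subspaces of the $n$ copies, so that up to errors vanishing in $n$ each of $A^{\otimes n},C^{\otimes n},B^{\otimes n},R^{\otimes n}$ may be treated as a flat state on a subspace of dimension $\approx 2^{nS(A)}$, etc. Alice adjoins a local pure ancilla, applies a Haar-random unitary to the $C$-systems together with her half of a pre-shared maximally entangled register, and partitions the output into a register $T$ that she transmits to Bob and a register $W$ that she keeps; take $\log\dim T = n\bigl(\tfrac12 I(C:R|B)+\eta_n\bigr)$. The key step is a \emph{conditional} decoupling estimate: on average over the random unitary the reduced state on $WR^{\otimes n}$ is $\epsilon_n$-close to a product state once $\log\dim W$ drops below the complementary threshold, and because Bob holds $B^{\otimes n}$ the relevant threshold involves the conditional quantity $I(C:R|B)$ rather than $I(C:R)$, with the factor $\tfrac12$ falling out of the entropy count. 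Fixing a good unitary and invoking Uhlmann's theorem then supplies an isometry that Bob applies to $T B^{\otimes n}$ (together with his half of the shared entanglement) to reconstitute $C^{\otimes n}$ in his hands along with the correct global purification on $A^{\otimes n}C^{\otimes n}B^{\otimes n}R^{\otimes n}$; counting the dimensions of the pre-shared and leftover maximally entangled registers gives the ebit count, and tracking the entropy of Alice's total system across the protocol forces the balance $Q+E=S(C|B)$.

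For the converse, the plan is the standard entropic bookkeeping: follow a generic protocol, use that local operations on either party do not increase the quantum mutual information with the reference (Lemma~\ref{lem:data_processing}) while transmitting $k$ qubits changes the relevant mutual informations by at most $2k$, chain these inequalities with subadditivity (Lemma~\ref{lem:subadditivity}), and absorb the approximation using Fannes-type continuity (Lemma~\ref{lem:fannes}) on the $\epsilon_n$-approximate output; this yields $Q\geq \tfrac{n}{2}I(C:R|B)-n\delta_n$ with $\delta_n\to 0$, and the identity $Q+E=S(C|B)$ again follows from the local-entropy accounting. The main obstacle is the technical heart of achievability: proving the one-shot smoothed conditional decoupling estimate and then checking, via the asymptotic equipartition property, that the smoothed conditional min-/max-entropies appearing there converge to precisely the von Neumann quantities that assemble into $\tfrac12 I(C:R|B)$ and $S(C|B)$; everything else is routine.
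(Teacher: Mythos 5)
The first thing to note is that the paper does not prove Theorem~\ref{qsrrecap} at all: it is imported as a black box from the cited works~\cite{Devetak2008a,Yard2009}, and is in fact only invoked in the degenerate case where $B$ is trivial (so the rates reduce to $\tfrac12 I(C:R)$ and $Q+E=S(C)$). There is therefore no in-paper argument to compare yours against; the question is only whether your sketch would reconstruct the known proof. Your overall architecture --- typicality reduction, Haar-random unitary on $C^{\otimes n}$ together with half of the pre-shared entanglement, a split into a transmitted register $T$ and a retained register $W$, decoupling plus Uhlmann for Bob's decoder, and dimension-bound/continuity bookkeeping for the converse --- is indeed the standard decoupling route, and the converse outline is fine.

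The genuine gap is in the step you yourself flag as the ``main obstacle,'' and it is not a deferrable technicality: it is the entire content of the theorem. As literally written, your decoupling criterion --- that $W R^{\otimes n}$ become close to a product state --- is the FQSW condition, and its threshold for $\log\dim W$ is governed by the \emph{unconditional} quantity ($\tfrac12 I(C:R)$, or $\tfrac12 I(C:AR)$ if Alice's retained $A^{\otimes n}$ must also be decoupled so that Uhlmann returns the full purification). The mere fact that Bob holds $B^{\otimes n}$ does not make it any cheaper to decouple $W$ from $R^{\otimes n}$, so the conditional rate $\tfrac12 I(C:R|B)$ does not ``fall out'' of that estimate. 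The known proofs obtain the conditioning by a genuinely different mechanism: Devetak--Yard's cascaded coherent-channel construction, the time-reversal symmetry of state redistribution (composing a merging with a reverse merging so that the $I(C:B)$ and $I(C:A)$ contributions cancel against $I(C:RB)$), or a bona fide conditional decoupling theorem phrased with smooth conditional min-entropies. Any of these would complete your argument, but your sketch asserts the key estimate in a form that, taken at face value, yields the wrong (larger) rate. Two smaller points: the statement permits $E<0$ (entanglement \emph{generated}, since $S(C|B)$ can be negative), so your ebit count must be read as a net resource balance with a coherent catalytic implementation; and the $\epsilon_n$-closeness is required for the full global state on $A^nC^nB^nR^n$, so the Uhlmann isometry must be checked to restore the purification including Alice's retained $A^{\otimes n}$, not just Bob's output.
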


\noindent In our paper, we only consider the situation when Alice and
Bob share ebits, and $B$ is trivial, so, there is no conditioning on
$B$ in the rates.

%%%%%%%%%%%%%%%%%%%%%%%%%%%%%%%%%%%%%%%%%%%%%%%%%%%%%%%%%%%%%%%%%%%%%%%%%%%%%%%%
\section{Setup for Rate Distortion Coding for Ensemble Sources}
\label{sec:setup}
In this section, we define our setup for rate distortion coding. 
This is summarized in Fig.~\ref{fig:compression model} below, 
and described in detail in the rest of the section.  
\begin{figure}[ht] 
\centering
  \includegraphics[width=0.7\textwidth]{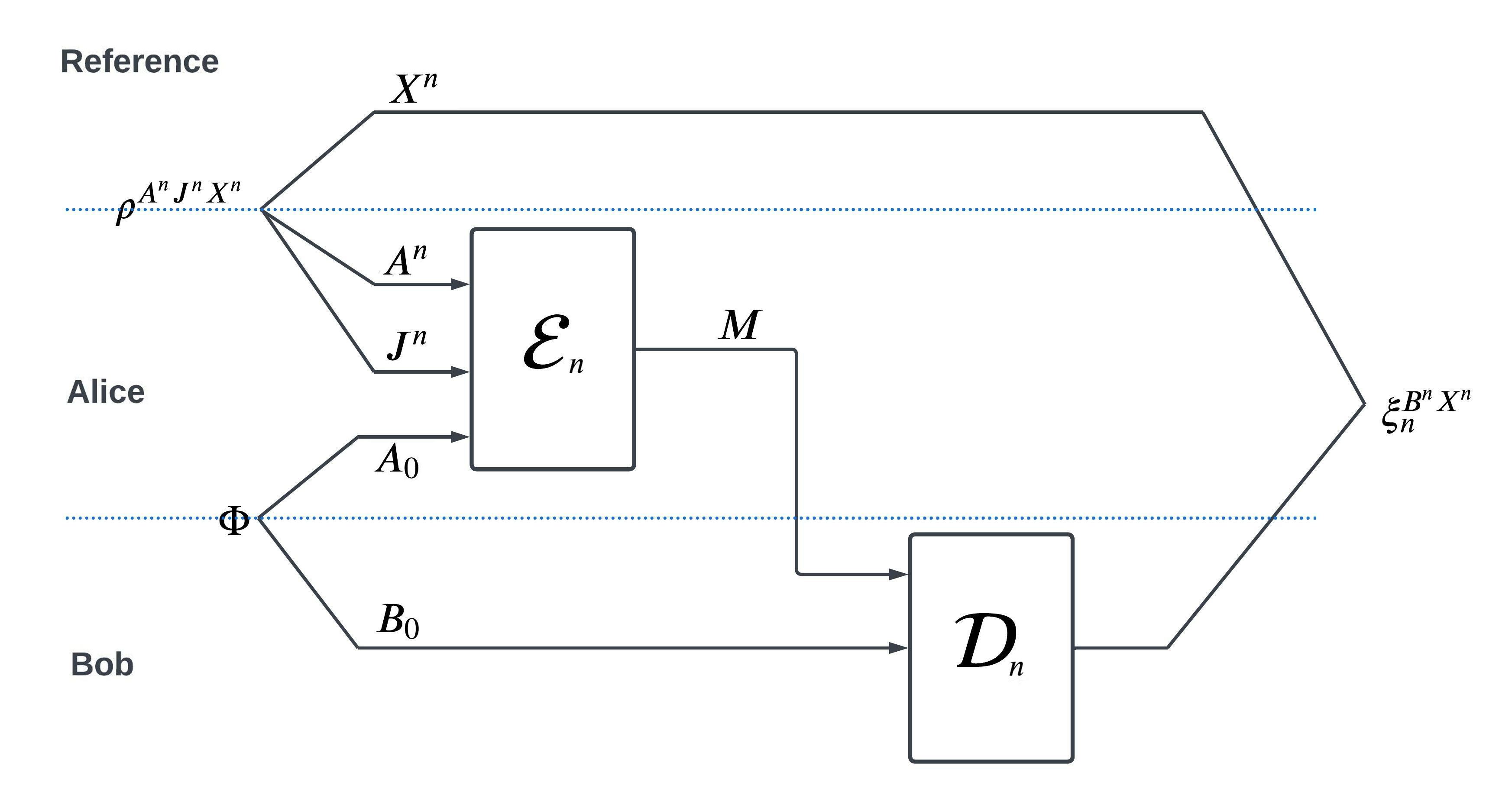} 
  \caption{Circuit diagram of the compression model. Dotted lines are used to demarcate domains controlled
by the different participants. }
  \label{fig:compression model}
\end{figure}

We consider a source given by the ensemble $\{p_x,\rho_x^A\otimes
\proj{j_x}^J\}_{x\in\Sigma}$ with side information in the system $J$.
Note that the side information can be quantum, as the states
$\proj{j_x}$ need not be orthonormal.  
The most general side information need not be pure,
and can be entangled with the system $A$.
We consider the simpler pure state side information when deriving
most of our results; in Section \ref{sec:discon}, we formalize 
the most general side information, and outline the extensions 
of our results to the general setting.  
When $J$ is one-dimensional, there is no side information  
and the task is called blind compression.  In contrast, when $|j_x\rangle$ 
contains a classical description of the state $\rho_x$, meaning that the 
sender knows the state to be compressed, then, the task 
is called visible compression.  Our general model for side information 
includes both blind and visible compression as special cases. 
The ensemble source can be equivalently defined as 
a classical-quantum (cq) state where the system $X$ plays the role of a
classical 
reference system 
\begin{equation}
    \rho^{AJX} \coloneqq \sum_{x \in \Sigma} p_x \rho_x^A \otimes\proj{j_x}^{J} \otimes \proj{x}^X , 
\label{eq:source}
\end{equation}
and symbols in $\Sigma$ are associated with an orthonormal
basis $\{\proj{x}\}_{x \in \Sigma}$ on $X$. 
We also consider a purification of the source: 
\begin{equation}
    \ket{\psi}^{AJXX'R} \coloneqq \sum_{x \in \Sigma} \sqrt{p_x} \ket{\varphi_x}^{AR} \ket{j_x}^J \ket{x}^X \ket{x}^{X'}, 
\label{eq:pursource}
\end{equation}
where $R$ purifies $\rho_x^A$ and $X'$ purifies the distribution over the reference system $X$.  
In rate distortion coding, 
the source first generates $\ket{\psi^n}^{A^nJ^nX^nX'^nR^n}$, which contains  
$n$ copies of $\ket{\psi}^{AJXX'R}$.
We call the sender or the encoder Alice, and the receiver
or the decoder Bob. 
We suppose that Alice and Bob initially share some entangled state $\ket{\Phi}$
in systems $A_0 B_0$.
This assumption is fully general, as the dimension of
$A_0 B_0$ can be chosen to be $1$ for the unassisted case, unbounded for the
entanglement-assisted case, and anything in between when we study the
entanglement-qubit rate pair.  
Alice applies an encoding channel $\mathcal{E}_n^{A^nJ^nA_0 \to M}$, 
and sends system $M$ to Bob. 
Receiving $M$, Bob applies a decoding channel $\mathcal{D}_n^{MB_0\to B^n}$. 
We define 
\begin{align*}
    \sigma_n^{MB_0X^n} &\coloneqq (\mathcal{E}_n^{A^nJ^nA_0 \to M}\otimes \cid^{X^nB_0})(\rho^{A^nJ^nX^n}\otimes \proj{\Phi}^{A_0B_0}),\\ 
    \xi_n^{B^nX^n} &\coloneqq (\mathcal{D}_n^{MB_0\to B^n}\otimes \cid^{X^n})(\sigma_n^{MB_0X^n}).
\end{align*}
Furthermore, consider the Stinespring dilations $U^{A^nJ^nA_0 \to MW_A}_{\mathcal{E}_n}$ and $U^{MB_0\to B^nW_B}_{\mathcal{D}_n}$ for the encoding and the decoding maps. 
We consider the following purifications for the states $\sigma_n$ and $\xi_n$, 
\begin{equation*}
    \begin{aligned}
    &\ket{\sigma_n}^{MX^nX'^nR^nW_AB_0} \\
    &\coloneqq \sum_{x^n \in \Sigma^n} \sqrt{p_{x^n}} \ket{\sigma_{x^n}}^{MB_0R^nW_A} \ket{x^n}^{X^n}\ket{x^n}^{X'^n} \\ 
    &\coloneqq \sum_{x^n \in \Sigma^n} \sqrt{p_{x^n}} (U^{A^nJ^nA_0 \to MW_A}_{\mathcal{E}_n}\otimes \1^{R^n})(\ket{\varphi_{x^n}}^{A^nR^n}\ket{j_{x^n}}^{J^n}\ket{\Phi}^{A_0B_0}) \ket{x^n}^{X^n}\ket{x^n}^{X'^n}, 
\\[3ex]      
    &\ket{\xi_n}^{B^nW_AW_BX^nX'^nR^n} \\
    &\coloneqq \sum_{x^n \in \Sigma^n} \sqrt{p_{x^n}} \ket{\xi_{x^n}}^{B^nR^nW_AW_B} \ket{x^n}^{X^n}\ket{x^n}^{X'^n} \\ 
    &\coloneqq \sum_{x^n \in \Sigma^n} \sqrt{p_{x^n}} (U^{MB_0 \to B^nW_B}_{\mathcal{D}_n}\otimes \1^{R^nW_A})\ket{\sigma_{x^n}}^{MB_0R^nW_A} \ket{x^n}^{X^n}\ket{x^n}^{X'^n}. 
    \end{aligned}
\end{equation*}

A common method to certify the quality of data transmission in many
protocols revolves around bounding the trace distance or the fidelity
between the state to be transmitted and the output state of the
protocol (including the references).
In this paper, we consider a general notion of distortion which
include both of these common measures, and we consider an arbitrary
amount of distortion.
Recall that the systems $A, X$ and the source $\rho^{AX}$ are specified
by the compression problem.
The ideal output state of the protocol is $\rho^{BX}$ and
$B \sim A$.
The distortion quantifies how different any state
$\tau^{BX}$ is from the ideal state $\rho^{BX}$,  
formalized as follows:
\begin{definition}~\label{def:distortion}
Let $B, X$ be some fixed quantum systems, and $\rho^{BX}$ a fixed state on $BX$.  
The \emph{distortion} of a quantum state $\tau^{BX}$ with respect to $\rho^{BX}$,  
denoted by $\Delta(\tau^{BX})$, is any non-negative real-valued function $\Delta$ on 
$\tau^{BX}$ which is (1) continuous, (2) convex, and (3) vanishing on $\rho^{BX}$.
For concreteness, we express continuity as 
$|\Delta(\tau^{BX})-\Delta(\tilde{\tau}^{BX})| \leq \|\tau^{BX}-\tilde{\tau}^{BX}\|_1 K_\Delta$ 
where $K_\Delta$ is a constant for a given $\Delta$. 
\end{definition}

%%%%%%%%%%%%%%%%%%%%%%%%%%%%%%%%%%%%%%%%%%%%%%%%%%%%%%%%%%%%%%%
We note that a convex function on a finite dimensional space is 
always continuous, but we opt to introduce the notation for the 
Lipschitz constant in the definition above.  We also want to discuss 
the requirement of convexity and continuity separately, in the
hope that convexity can be removed in future studies.  
The third property, $\Delta(\rho^{BX})=0$, ensures that 
all relevant optimizations in this paper are feasible.  
Since the state space is compact, the distortion defined above is always bounded.  
The dependency of the distortion on $\rho^{BX}$ is embedded in the
choice of the function $\Delta$.  
Note that each choice of the distortion determines a specific 
rate-distortion theory.  
A canonical example of a distortion under definition \ref{def:distortion}
is $1-F$ where $F$ is the fidelity of $\tau^{BX}$ with respect to $\rho^{BX}$.  
In this case, the distortion is also \emph{faithful} (that is, 
$\Delta(\tau) > 0$ for $\tau \neq \rho$).  
Another choice of the distortion is the constant function with value
$0$ which is independent of $\rho^{BX}$.  (See further discussion later 
in this section.)  
We derive and present our results for our general definition of distortion.

For the compression setup defined above, with protocol described by 
$\mathcal{E}_n^{A^nJ^nA_0\to M}$ and $\mathcal{D}_n^{MB_0 \to B^n}$, 
we define 
\begin{eqnarray}
    \Delta^{(n)}_{\rm max} (\xi_n^{B^nX^n}) & \coloneqq & \max_{i} \Delta(\xi_n^{B_iX_i}), 
\label{eq:distortion-max} \\
     \Delta^{(n)}_{\rm ave} (\xi_n^{B^nX^n}) & \coloneqq & \frac{1}{n} \sum_{i=1}^n \Delta(\xi_n^{B_iX_i}),  
\label{eq:distortion-ave}
\end{eqnarray}
where $\xi_n^{B_iX_i}$ is the reduced state of $\xi_n^{B^nX^n}$ to the $i$-th system. 
Note that with the above definitions, we quantify distortion letter-wise, 
adopting a worst case or an average case
\textit{local error criterion} in our rate distortion theory. 
The worse case local error criterion 
is suitable for any use of the resulting states that do not
correlate between the copies, and the average case criterion
is suitable in applications such as parameter estimation via averaging.
We use $\Delta^{(n)}$ to denote one of $\Delta^{(n)}_{\rm max}$ or $\Delta^{(n)}_{\rm ave}$.

In the unassisted scenario, 
for a given positive number $D>0$,  
we say that a pair of qubit rate $R$ and distortion $D$ is achievable
if there exists a sequence of codes $\{(\mathcal{E}_n,\mathcal{D}_n)\}_n$ such that 
\begin{align*}
    \Delta^{(n)}(\xi_n^{B^nX^n}) \leq D + \delta_n ~~~{\rm and}~~~ 
    \frac{1}{n} \log \mathrm{dim}(M) \leq R + \eta_n  
\end{align*}
for some vanishing non-negative sequences $\{ \delta_n \}$, $\{ \eta_n \}$.
The \emph{rate-distortion function} is defined as 
\begin{equation*}
    \R(D) \coloneqq \inf\{R : (R,D) \mathrm{\,\,is\,\,achievable}\}. 
\end{equation*}
The entanglement-assisted scenario is essentially the same 
with unlimited $\dim(A_0), \dim(B_0)$ and arbitrary choice of entangled 
state $|\Phi\rangle$ shared on $A_0B_0$, and  
with the notation $\R_{\rm ea}(D)$ for the distortion function.  

We also consider the most general scenario when we count both the 
qubit rate and the entanglement rate.  In this case, we say that a tuple
of qubit rate $R$, entanglement rate $E$, and distortion $D$ is achievable
if there exists a sequence $\{(\mathcal{E}_n,\mathcal{D}_n)\}_n$ 
where $\Delta^{(n)}(\xi_n^{B^nX^n}) \leq D+\delta_n$,  
\begin{align*}
    \frac{1}{n} \log \mathrm{dim}(M) & \leq R+ \eta_n, \\
    \frac{1}{n} \log \mathrm{dim}(A_0) & \leq E+\eta_n,
\end{align*}
for some vanishing non-negative sequences $\{ \delta_n \}$, $\{ \eta_n \}$.
Similarly, the rate-distortion function in this case is defined as 
\begin{equation*}
    \R(D,E) \coloneqq \inf\{R : (R,E,D) \mathrm{\,\,is\,\,achievable}\}. 
\end{equation*}
Note that we use the symbol $E$ to denote some environment system
elsewhere in the paper, and occasionally we use $E$ to denote the
entanglement rate.  Similarly, we use the symbol $R$ to denote some
purifying system; we also use $R$ 
to denote the qubit rate
for compression.  We hope that the different contexts will minimize the
chance of any confusion.

Although the worst-case distortion $\Delta^{(n)}_{\mathrm{max}}$ 
imposes a more stringent error criterion than the average-case 
distortion $\Delta^{(n)}_{\mathrm{ave}}$, remarkably, 
these two error criteria yield the same rate-distortion function 
if Alice and Bob share a sublinear amount of randomness. 
We will discuss this observation in more detail in Section~\ref{sec:discon}.

We illustrate the setup with some examples.  Consider the unassisted
case.  If the distortion is $1-F$ with $F$ being the fidelity, $\R(D)$ 
captures the qubit rate needed to attain the per-copy fidelity $1-D$.   
If the distortion is the constant zero function, $\R(D)=0$ for all $D$.
While mathematically permissible, this
distortion does not lead to an interesting theory.

%%%%%%%%%%%%%%%%%%%%%%%%%%%%%%%%%%%%%%%%%%%%%%%%%%%%%%%%%%%%%%%%%%%%%%%%%%%%%%%%%%%%%%%%%%%%%%%%%%%%%
\section{Main results}~\label{sec:results}
In this section, we present our main results.  We start with  
the entanglement-assisted rate-distortion function in Theorem~\ref{thm:assisted},
followed by the unassisted case in Theorem~\ref{thm:unassisted}.   
The full region of achievable qubit and entanglement rate pairs, as a function of distortion, 
is given by Theorem~\ref{thm:full_rate}. 

%---------------------------------------------------------------------------------------------------
\subsection{Entanglement Assisted Rate Distortion Theory}~\label{subsec:assisted}
In this scenario, Alice and Bob have free access to \emph{unlimited} entanglement for the compression task. 
We prove in the following that the entanglement-assisted rate-distortion function for an ensemble source is given by an optimized expression involving the quantum mutual information in single-letter form.

\begin{theorem}[Entanglement-assisted rate-distortion theory]~\label{thm:assisted}
We use the setting in Section \ref{sec:setup} and consider the purification $\ket{\psi}^{AJXX'R}=\sum_x \sqrt{p_x} \ket{\psi_x}^{AR} \ket{j_x}^J \ket{x}^X \ket{x}^{X'}$ of the source $\rho^{AJX}$ (see Eqs.~(\ref{eq:source}) and (\ref{eq:pursource})).  
For both error criteria $\Delta^{(n)}_{\rm max}$ and $\Delta^{(n)}_{\rm ave}$, 
for $D\geq 0$,
the entanglement-assisted rate-distortion function is given by  
    \begin{equation}
        \R_{\rm ea}(D) = \min_{\mathcal{N}^{AJ\to B}} \frac{1}{2}I(B:XX'R)_{\tau},  
    \label{eq:rateexpression1}
    \end{equation}    
    where the minimum is taken over all quantum channels $\mathcal{N}:{AJ\to B}$ such that 
    \begin{equation}
        \Delta \left(  \left(\mathcal{N}\otimes \cid^{X}\right)\left(\rho^{AJX}\right) \right) \leq D,  
    \label{eq:distortpromise1}
    \end{equation}
    and the quantum 
    mutual information is evaluated on the state 
    \begin{align*}
    \tau^{BXX'R}:=(\mathcal{N}\otimes \cid^{XX'R}) \left(\proj{\psi}^{AJXX'R}\right).
    \end{align*}
\end{theorem}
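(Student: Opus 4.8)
For the \textbf{achievability} direction, given any channel $\mathcal{N}^{AJ\to B}$ satisfying the distortion constraint \eqref{eq:distortpromise1}, I would apply it copy-wise on the $n$ i.i.d.\ copies and then use quantum state redistribution (Theorem~\ref{qsrrecap}) to transmit $B^n$ from Alice to Bob. More precisely, consider the Stinespring dilation $U_{\mathcal{N}}^{AJ\to BE}$. Alice holds $A^nJ^n$ (and free entanglement with Bob), applies $U_{\mathcal{N}}^{\otimes n}$ locally to get $B^nE^n$, keeps $E^n$ as part of her reference, and then needs to send $B^n$ to Bob. Identifying the state $\ket{\psi}^{AJXX'R}$ purified through $\mathcal{N}$ as $\ket{\tau}^{BEXX'R}$, the system to be redistributed is $C=B$, there is no $B$-side information at Bob (trivial conditioning), so the qubit rate achievable is $\tfrac12 I(B : XX'R)_\tau$ plus a vanishing term, consuming some finite amount of entanglement (which is free here). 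Because $\mathcal{N}$ acts i.i.d.\ and the distortion is letterwise, each output copy $\xi_n^{B_iX_i}$ is $\epsilon_n$-close to $(\mathcal{N}\otimes\cid^X)(\rho^{AJX})$, so by continuity of $\Delta$ (Definition~\ref{def:distortion}) we get $\Delta^{(n)}(\xi_n^{B^nX^n}) \le D + \delta_n$ for both the max and average criteria. Minimizing over admissible $\mathcal{N}$ gives $\R_{\rm ea}(D) \le \min_{\mathcal{N}} \tfrac12 I(B:XX'R)_\tau$.

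For the \textbf{converse}, suppose $(R,D)$ is achievable via $(\mathcal{E}_n,\mathcal{D}_n)$. I would lower-bound $\tfrac1n\log\dim(M)$ using the entanglement-assisted one-shot/asymptotic bounds: since Alice sends $M$ and shares free entanglement, the quantum mutual information that Bob's output $B^n$ has with the reference $X^nX'^nR^n$ is at most $2\log\dim(M)$, i.e.\ $\tfrac1n\log\dim(M) \ge \tfrac1n \cdot \tfrac12 I(B^n : X^nX'^nR^n)_{\xi_n}$ (this is the standard cut bound: starting from the shared entanglement, $I(\text{Bob}:\text{ref})=0$, and sending $M$ can increase it by at most $2\log\dim M$; local decoding at Bob cannot increase it). Then I would single-letterize: using superadditivity of mutual information under channels on i.i.d.\ pure states (Lemma~\ref{lem:superadditivity}) applied to the channel $\mathcal{D}_n\circ\mathcal{E}_n$ acting on $(\ket{\psi}^{AJXX'R})^{\otimes n}$, we get $I(B^n:X^nX'^nR^n)_{\xi_n} \ge \sum_i I(B_i : X_iX_i'R_i)_{\xi_n}$. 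Each term defines an effective single-copy channel $\mathcal{N}_i^{AJ\to B}$ (the marginal channel from the $i$-th input to the $i$-th output), and I would then define the averaged channel $\bar{\mathcal{N}} = \tfrac1n\sum_i \mathcal{N}_i$. By convexity of mutual information under channel mixing (Lemma~\ref{lem:convexity}), $\tfrac1n\sum_i I(B_i:X_iX_i'R_i) \ge I(B:XX'R)_{\bar\tau}$ where $\bar\tau = (\bar{\mathcal{N}}\otimes\cid)(\proj{\psi})$; and by convexity of the distortion $\Delta$ together with the per-copy distortion bound, $\bar{\mathcal{N}}$ satisfies the distortion constraint (up to $\delta_n$). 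Taking $n\to\infty$ and using continuity of $\Delta$ and of the mutual information then yields $R \ge \min_{\mathcal{N}} \tfrac12 I(B:XX'R)_\tau$.

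\textbf{The main obstacle} I expect is the single-letterization in the converse, specifically verifying that the averaged channel $\bar{\mathcal{N}}$ both satisfies the distortion promise and that the convexity inequalities combine in the right direction — one needs the distortion constraint to be on the \emph{averaged} channel's output (which is fine for $\Delta^{(n)}_{\rm ave}$ by convexity, but for $\Delta^{(n)}_{\rm max}$ one must first pass through the average, since $\max_i \Delta(\xi_n^{B_iX_i}) \ge \tfrac1n\sum_i\Delta(\xi_n^{B_iX_i}) \ge \Delta(\bar\tau^{BX})$). A secondary subtlety is handling the $\delta_n$ slack carefully: the averaged channel only satisfies $\Delta(\bar\tau) \le D+\delta_n$, so the converse gives $R \ge \min\{\tfrac12 I(B:XX'R)_\tau : \Delta \le D+\delta_n\}$, and one closes the gap by a compactness/continuity argument letting $\delta_n\to0$, which requires the feasible set to vary continuously — guaranteed by $\Delta$ being continuous and $\Delta(\rho^{BX})=0$ making the $D=0$ case nonempty as well. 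I would also need to confirm that the minimum in \eqref{eq:rateexpression1} is attained (compactness of the channel set plus continuity of the objective and the constraint).
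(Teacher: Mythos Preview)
Your proposal is correct and follows essentially the same architecture as the paper: achievability via copy-wise application of $\mathcal{N}$ followed by quantum state redistribution (Theorem~\ref{qsrrecap}), and converse via the dimension bound $2\log\dim(M)\ge I(MB_0:R^nX^nX'^n)$, data processing to $I(B^n:R^nX^nX'^n)$, and superadditivity (Lemma~\ref{lem:superadditivity}).

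The one genuine methodological difference is in the final single-letterization step of the converse. You form the averaged channel $\bar{\mathcal{N}}=\tfrac1n\sum_i\mathcal{N}_i$, apply convexity of mutual information (Lemma~\ref{lem:convexity}) to get $\tfrac1n\sum_i I(B_i{:}X_iX_i'R_i)\ge I(B{:}XX'R)_{\bar\tau}$, and use convexity of $\Delta$ to verify $\bar{\mathcal{N}}$ is feasible at level $D+\delta_n$; this handles both error criteria uniformly. The paper instead bounds each term $I(B_i{:}X_iX_i'R_i)\ge 2f(D+\delta_n)$ directly for the max criterion (since each $\mathcal{N}_i$ is individually feasible), and for the average criterion invokes convexity of the function $f$ itself (proved in Appendix~\ref{sec:fd}) to pass the average inside. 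The two routes are equivalent in content---indeed the paper's proof of convexity of $f$ in Lemma~\ref{lem:rate_convex} is exactly your channel-averaging argument---but your packaging is slightly cleaner in that it avoids separately establishing convexity of $f$ as an auxiliary lemma. Both approaches still need continuity of $f$ at $D$ to absorb the $\delta_n$ slack, which follows from convexity on the open domain.
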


\begin{proof}
Let $f(D)$ denote the solution to the following optimization problem:
$f(D) = \min_{\mathcal{N}^{AJ\to B}} \frac{1}{2}I(B:XX'R)_{\tau}$
where $\tau^{BXX'R}:=(\mathcal{N}\otimes \cid^{XX'R})
\left(\proj{\psi}^{AJXX'R}\right)$ and subject to the constraint
$\Delta \left( \left(\mathcal{N}\otimes
\cid^{X}\right)\left(\rho^{AJX}\right) \right) \leq D$.
We first show achievability ($\R_{\rm ea}(D) \leq f(D)$) and then the converse
($\R_{\rm ea}(D) \geq f(D)$).  

To show achievability, we construct a sequence of codes
$\{(\mathcal{E}_n,\mathcal{D}_n)\}_n$ based on
quantum state redistribution~\cite{Devetak2008a,Yard2009} (Theorem \ref{qsrrecap}) such that 
$\Delta^{(n)}(\xi_n^{B^nX^n}) \leq D + \delta_n$ and 
$\frac{1}{n} \log \mathrm{dim}(M) \leq f(D) + \eta_n$   
for some vanishing non-negative sequences $\{ \delta_n \}$, $\{ \eta_n \}$.
Let $\mathcal{N}^{AJ\to B}$ be a quantum channel satisfying $\Delta (\mathcal{N}^{AJ \to B}(\rho^{AJX})) \leq D$, 
and $U^{AJ \to BE}$ be its Stinespring dilation.  
Suppose that the source generates $n$ copies of the ensemble, resulting in the state 
$\ket{\psi^n}^{A^nJ^nX^nX'^nR^n}$. 
Alice applies $U^{AJ \to BE}$ to each of her $n$ copies of her system, then, the state becomes
\begin{align*}
    \ket{\tau_n}^{B^nE^nX^nX'^nR^n} 
    &\coloneqq \left( 
    (U^{AJ \to BE}\otimes \1^{XX'R})\ket{\psi}^{AJXX'R} \right)^{\otimes n} \\ 
    &\coloneqq \sum_{x^n \in \Sigma^n} \sqrt{p_{x^n}} \ket{\tau_{x^n}}^{B^nR^nE^n}\ket{x^n}^{X^n}\ket{x^n}^{X'^n}. 
\end{align*}
In the above, we have defined  
$\ket{\tau_x} \coloneqq (\1^{R} \otimes U) \ket{\psi_x}^{AR} \ket{j_x}^J$ 
and for $x^n = x_1 x_2 \cdots x_n$, 
$\ket{\tau_{x^n}} \coloneqq \otimes_{i=1}^n  \ket{\tau_{x_i}}$.  
Alice and Bob then perform quantum state redistribution as follows.
Recall that Alice and Bob share entanglement in systems $A_0B_0$.  
At this point, Bob has the system $B_0$, the referee has $R^nX^nX'^n$, and 
Alice has $A_0 B^nE^n$.  Alice tries to transmit $B^n$ to Bob.  
There exists a quantum state redistribution protocol with
qubit rate at most $\frac{1}{2} I(B:RXX') + \eta_n$ and block
error $\epsilon_n$.
Therefore, the state after quantum state redistribution, denoted $\tilde{\tau}_n^{B^nE^nX^nX'^nR^n}$,
satisfies
\begin{equation*}
    \|\tilde{\tau}_n - \tau_n\|_1 \leq \epsilon_n \, 
\end{equation*}
which by monotonicity also implies, for each $i$,  
\begin{equation*}
    \|\tilde{\tau}_n^{B_iX_i} - \tau_n^{B_iX_i}\|_1 \leq \epsilon_n \,. 
\end{equation*}
By continuity of the distortion, we have 
\begin{align*}
    \Delta(\tilde{\tau}_n^{B_iX_i}) \leq D + \epsilon_n \; K_\Delta . 
\end{align*}
The above holds for each $i$, so, both 
$\Delta^{(n)}_{\rm max}$ and $\Delta^{(n)}_{\rm ave}$
are upper-bounded by $D + \epsilon_n \; K_\Delta$.
Since $\frac{1}{2} I(B:RXX')$ is an achievable rate for quantum
state redistribution, $\{\epsilon_n\}$ and $\{\eta_n\}$ are
vanishing.  Letting $\delta_n = \epsilon_n K_\Delta$, $\{\delta_n\}$
is also vanishing.   
Thus $f(D) = \frac{1}{2} I(B:RXX')$ and $D$ are achievable.  

Next, we show the converse.
Suppose a rate $R$ is achievable because of a sequence of codes
$\{(\mathcal{E}_n,\mathcal{D}_n)\}_n$ such that 
$\Delta^{(n)}_{\rm max}(\xi_n^{B^nX^n}) \leq D + \delta_n$ and 
$\frac{1}{n} \log \mathrm{dim}(M) \leq R + \eta_n$   
for some vanishing non-negative sequences $\{ \delta_n \}$, $\{ \eta_n \}$.
We have to show that $R \geq f(D)$.
Consider the following chain of inequalities: 
\begin{align*}
    2n (R+\eta_n)
    &= 2 \log \mathrm{dim}(M)\\ 
    &\geq I(M:R^n X^n X'^n B_0)_{\sigma_n} \\ 
    &= S(M)_{\sigma_n} + S(R^nX^nX'^nB_0)_{\sigma_n} - S(MR^nX^nX'^nB_0)_{\sigma_n}\\ 
    &= S(M)_{\sigma_n} + S(B_0)_{\sigma_n} + S(R^nX^nX'^n)_{\sigma_n} - S(MR^nX^nX'^nB_0)_{\sigma_n}\\
    &\geq S(MB_0)_{\sigma_n} + S(R^nX^nX'^n)_{\sigma_n} - S(MR^nX^nX'^nB_0)_{\sigma_n}\\ 
    & = I(MB_0:R^n X^n X'^n)_{\sigma_n} \\ 
    &\geq I(B^n: R^n X^n X'^n)_{\xi_n} \\ 
    &\geq \sum_{i=1}^n I(B_i:R_iX_iX'_i)_{\xi_n}\\ 
    &\geq 2 n f(D + \delta_n). 
\end{align*}
The second line follows from the dimension upper bound of the quantum mutual information. 
The fourth line holds because preshared entanglement is independent of the data, so, 
system $B_0$ is independent of $R^nX^nX'^n$. 
The fifth line follows from subadditivity of the von Neumann entropy (Lemma~\ref{lem:subadditivity}). 
The seventh line follows from the data-processing inequality (Lemma~\ref{lem:data_processing}). 
Note that from the second to the sixth line, the entanglement assistance is represented 
by $B_0$ which is originally grouped with $R^n X^n X'^n$ and is finally grouped with $M$,  
and $MB_0$ is then turned into $B^n$ in the seventh line.  
The eighth line follows from our generalized lemma for superadditivity
of quantum mutual information (Lemma~\ref{lem:superadditivity}).
To obtain the last inequality, 
let $\mathcal{N}_i^{A_iJ_i \rightarrow B_i}$ denote the quantum channel
on the $i$-th copy of the source obtained by attaching the other $n-1$
copies of the source, followed by applying the protocol $(\mathcal{E}_n,\mathcal{D}_n)$
and then tracing out all but the $i$-th copy.
Since the protocol attains the distortion (Eq.~(\ref{eq:distortion-max})),
$\Delta(\xi_n^{B_iX_i}) \leq D+\delta_n$.
Therefore, the channel 
$\mathcal{N}_i^{A_iJ_i \rightarrow B_i}$
satisfies the constraint 
$\Delta \left( \left( \mathcal{N}_i \otimes
\cid^{X}\right)\left(\rho^{AJX}\right) \right) \leq D+\delta_n$.
Thus $I(B_i:R_iX_iX'_i)_{\xi_n} \geq 2 f(D+\delta_n)$ by the minimizing definition of
the function $f$. 
Finally, as $\delta_n \rightarrow 0$, $f(D+\delta_n) \rightarrow f(D)$ because of the
continuity proved in Appendix \ref{sec:fd}. 
This completes the proof for the converse for the worse case local error criterion.  
Combining with the achieving protocol, the theorem is established.

The theorem also holds for the average local error
criteria.
To see this, the above 
achieving protocol satisfies the global error criteria.  We only need to 
modify the last step of the above converse proof.  We have instead  
$$\sum_{i=1}^n I(B_i:R_iX_iX'_i)_{\xi_n} 
\geq \sum_{i=1}^n 2 f\left(\Delta(\xi_n^{B_iX_i})+\delta_n\right)  
\geq 2 n f\left(\frac{1}{n} \sum_{i=1}^n \Delta(\xi_n^{B_iX_i})+\delta_n\right) 
\geq 2 n f(D+\delta_n)$$ 
where the first inequality holds term-wise, similarly to 
the last step of the original converse proof, but with the $i$-th distortion 
$\Delta(\xi_n^{B_iX_i})+\delta_n$ replacing $D+\delta_n$.  The second inequality is due to 
the convexity of $f(D)$, 
and this is proved in Appendix~\ref{sec:fd}.  The last step
follows from the local error criterion (Eq.~\ref{eq:distortion-ave})
and the fact that, by definition, $f(D)$ is decreasing with $D$.  This
completes the proof for the theorem for the local error criterion.  As
a final remark, with entanglement assistance, the same rate-distortion
function holds for both the worse case and the average case local
error criteria.

\end{proof}

%---------------------------------------------------------------------------------------------------
\subsection{Unassisted Rate-Distortion Theory}~\label{subsec:unassisted}
In this subsection, we analyze optimal rate-distortion trade-off when
compressing an ensemble source without any entanglement assistance, 
under the average local error criterion.  
In this case, the unassisted rate-distortion function 
is given by a regularized and optimized expression involving the entanglement of purification. 
\begin{theorem}[Unassisted rate-distortion theory]~\label{thm:unassisted}
We use the setting in Section \ref{sec:setup} and consider the purification $\ket{\psi}^{AJXX'R}=\sum_x \sqrt{p_x} \ket{\psi_x}^{AR} \ket{j_x}^J \ket{x}^X \ket{x}^{X'}$ of the source $\rho^{AJX}$ (see Eqs.~(\ref{eq:source}) and (\ref{eq:pursource})). 
   For $\Delta^{(n)} = \Delta^{(n)}_{\rm ave}$, and for $D>0$, the unassisted rate-distortion function 
   is given by 
    \begin{equation}
        \R(D) =\lim_{k\to\infty} \frac{1}{k} \min_{\mathcal{N}^{A^kJ^k\to B^k}_k} E_p(B^k:X^kX'^kR^k)_{{\tau_k}},  
    \label{eq:proposeduarate}
    \end{equation} 
    where the minimum is taken over all quantum channels $\mathcal{N}_k:{A^kJ^k\to B^k}$ such that 
    \begin{equation}
        \Delta^{(k)}_{\rm ave} \left( (\mathcal{N}_k\otimes \cid^{X^k}) \left( (\rho^{AJX})^{\otimes k}\right) \right) \leq D,
    \label{eq:error-ua}  
    \end{equation}
    and the entanglement of purification is evaluated on the state
    \begin{align}
    \tau_k^{B^kX^kX'^kR^k}:=\left(\mathcal{N}_k\otimes \cid^{X^k{X'}^kR^k}\right) \left( ({\proj{\psi}}^{AJXX'R})^{\otimes k}\right).
    \label{eq:state-ua}
    \end{align}
\end{theorem}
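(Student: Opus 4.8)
The plan is to follow the template of the proof of Theorem~\ref{thm:assisted}: establish the achievability bound $\R(D)\le\inf_k\frac1k g_k(D)$ and the matching converse $\R(D)\ge\inf_k\frac1k g_k(D)$, where I abbreviate by $g_k(D)\coloneqq\min_{\mathcal N_k}E_p(B^k:X^kX'^kR^k)_{\tau_k}$ the block-$k$ minimization appearing in (\ref{eq:proposeduarate}), the minimum being over channels $\mathcal N_k:A^kJ^k\to B^k$ obeying (\ref{eq:error-ua}). The only structural change from the entanglement-assisted proof is that, without free entanglement, the data-transmission step must be carried out by coherent (Schumacher) compression instead of quantum state redistribution; the extra qubit cost this incurs is exactly what replaces $\frac12 I(B:XX'R)$ by $E_p(B:XX'R)$ via Theorem~\ref{thm:entanglement_of_purification}, and it is also what forces the regularization.

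For achievability, fix $k$ and a channel $\mathcal N_k^{A^kJ^k\to B^k}$ obeying (\ref{eq:error-ua}), with Stinespring environment $B'$; applying the dilation of $\mathcal N_k$ to $(\proj{\psi}^{AJXX'R})^{\otimes k}$ yields a purification $\ket{\hat\psi_k}^{B^kB'X^kX'^kR^k}$ of $\tau_k$. By Theorem~\ref{thm:entanglement_of_purification} there is a channel $\mathcal M^{B'\to B''}$ with $S(B^kB'')_{\sigma}=E_p(B^k:X^kX'^kR^k)_{\tau_k}$, where $\sigma^{B^kB''}=(\cid^{B^k}\otimes\mathcal M)(\hat\psi_k^{B^kB'})$. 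Let $V:A^kJ^k\hookrightarrow B^kB''G$ be the isometry obtained by composing the dilation of $\mathcal N_k$ with a Stinespring dilation of $\mathcal M$ acting on the $B'$ system. For block length $n=km$ the code is as follows: Alice, who holds $A^{km}$ together with the side information $J^{km}$, applies $V^{\otimes m}$ and then Schumacher-compresses the subsystem $(B^kB'')^{\otimes m}$, whose reduced state is $\sigma^{\otimes m}$, into a register $M$ with $\frac1n\log\dim M\le\frac1k S(B^kB'')_{\sigma}+\mu$; Bob decompresses, traces out the $m$ copies of $B''$, and outputs $B^{km}$. Since coherent compression preserves the global purification up to trace distance $\epsilon_m\to0$, monotonicity of $\|\cdot\|_1$ under partial trace together with continuity of $\Delta$ gives $\Delta^{(n)}_{\rm ave}(\xi_n^{B^nX^n})\le D+\epsilon_m K_\Delta$. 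Letting $m\to\infty$, then $\mu\to0$, and finally optimizing over $\mathcal N_k$ and over $k$ yields $\R(D)\le\inf_k\frac1k g_k(D)$; note that the side information $J$ is available to Alice throughout, which is what lets this argument cover the blind and visible compression models at once.

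For the converse, suppose a sequence of unassisted codes achieves rate $R$, so that $\Delta^{(n)}_{\rm ave}(\xi_n^{B^nX^n})\le D+\delta_n$ and $\frac1n\log\dim M\le R+\eta_n$. Put $\mathcal N_n\coloneqq\mathcal D_n\circ\mathcal E_n$; then $\xi_n^{B^nX^nX'^nR^n}$ is the state $\tau_n$ of (\ref{eq:state-ua}) with $k=n$, and $\mathcal N_n$ obeys (\ref{eq:error-ua}) with parameter $D+\delta_n$. Applying Theorem~\ref{thm:entanglement_of_purification} to $\xi_n^{B^nX^nX'^nR^n}$, using the purification $\ket{\xi_n}^{B^nW_AW_BX^nX'^nR^n}$ and the channel $\Tr_{W_A}$ on the purifying system $W_AW_B$, gives $E_p(B^n:X^nX'^nR^n)_{\xi_n}\le S(B^nW_B)_{\xi_n}$. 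Since $\ket{\xi_n}$ arises from $\ket{\sigma_n}$ by the decoding isometry $U^{MB_0\to B^nW_B}_{\mathcal D_n}$ with $B_0$ trivial, $S(B^nW_B)_{\xi_n}=S(M)_{\sigma_n}\le\log\dim M\le n(R+\eta_n)$. Combining, $R+\eta_n\ge\frac1n E_p(B^n:X^nX'^nR^n)_{\tau_n}\ge\frac1n g_n(D+\delta_n)$.

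It remains to let $n\to\infty$ and match the two bounds, and this is where I expect the real work. First, $k\mapsto g_k(D)$ is subadditive: the tensor product of two channels meeting the averaged-distortion constraint again meets it, since the combined average is a convex combination of the two averages, and $E_p$ is subadditive on product states; hence $\lim_k\frac1k g_k(D)=\inf_k\frac1k g_k(D)$ exists by Fekete's lemma, which settles the achievability side and justifies the limit in (\ref{eq:proposeduarate}). Second, since each $g_n$ is nonincreasing in $D$, one has $\liminf_n\frac1n g_n(D+\delta_n)\ge\inf_k\frac1k g_k(D+\delta)$ for every $\delta>0$, so the converse closes provided $D\mapsto\inf_k\frac1k g_k(D)$ is right-continuous at $D>0$; I would derive this from convexity of the regularized rate function together with its finiteness on $[0,\infty)$, establishing these regularity properties in an appendix analogous to Appendix~\ref{sec:fd} for the entanglement-assisted case. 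The delicate point there is that $E_p$ is not itself convex, so convexity has to be argued for the minimized and regularized quantity rather than for $E_p$; this, and the fact that the constraint (\ref{eq:error-ua}) is phrased with $\Delta^{(k)}_{\rm ave}$, are why the theorem is stated for the average local error criterion and for $D>0$. Finally, the specializations to visible and blind compression (Corollary~\ref{cor:unassisted_visible}, Theorem~\ref{thm:blind_unassisted}) should then follow by identifying which of the registers $C,N,Q,X,X',R$ remain relevant in those settings.
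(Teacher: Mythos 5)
Your proposal is correct and follows essentially the same route as the paper's proof: achievability by preprocessing each $k$-block with the optimal $\mathcal{N}_k$ plus the $E_p$-achieving channel on its environment and then Schumacher-compressing, and a converse via $n(R+\eta_n)\geq S(M)_{\sigma_n}\geq E_p(B^n:X^nX'^nR^n)_{\xi_n}\geq n\,g_n(D+\delta_n)$, closed by subadditivity of $k\mapsto k g_k(D)$ and a continuity argument in $D$. The only divergence is bookkeeping: you invoke Fekete's lemma and right-continuity from convexity of the regularized function, whereas the paper's Appendix~\ref{sec:gd} proves convergence and continuity directly (and notes that convexity was not used there); both resolutions of the deferred regularity claims are sound for $D>0$.
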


\begin{proof}
We first show achievability, and then the converse.  
We label various expressions on the 
RHS of Eq.~(\ref{eq:proposeduarate}) as follows. 
Let $g_k(D) = \frac{1}{k} \min_{\mathcal{N}^{A^kJ^k\to B^k}_k} E_p(B^k:X^kX'^kR^k)_{{\tau_k}}$ where $\mathcal{N}^{A^kJ^k\to B^k}_k$ is subject to the constraint $\Delta^{(k)}_{\rm ave} \left( (\mathcal{N}_k\otimes \cid^{X^k}) \left( (\rho^{AJX})^{\otimes k}\right) \right) \leq D$, and $g(D) = \lim_{k \rightarrow \infty} g_k(D)$.   
We prove important properties of $g(D)$ and $g_k(D)$ in Appendix \ref{sec:gd}.

To show achievability, we construct a sequence of codes
$\{(\mathcal{E}_n,\mathcal{D}_n)\}_n$ such that 
$\Delta^{(n)}_{\rm ave}(\xi_n^{B^nX^n}) \leq D + \delta_n$ and 
$\frac{1}{n} \log \mathrm{dim}(M) \leq g(D) + \eta_n$   
for some vanishing non-negative sequences $\{ \delta_n \}$, $\{ \eta_n \}$.
In these codes, Alice applies Schumacher compression, and she
processes her state before the compression to minimize the cost.
To construct one of these codes,
first pick $\gamma > 0$.  We use the fact that $g_k(D)$ converges
uniformly to $g(D)$ (see Appendix \ref{sec:gd}) to choose $k$ 
so that $g_k(D) \leq g(D) + \gamma$ for all $D$.
Consider $k$ copies of the source given by the state $\ket{\psi^k}^{A^kJ^kX^kX'^kR^k}$. 
Let $\mathcal{N}_k^{A^kJ^k\to B^k}$ be a quantum channel minimizing
the expression in $g_k(D)$ and satisfying $\Delta^{(k)}_{\rm ave} 
(\mathcal{N}^{A^kJ^k \to B^k}_k((\rho^{\otimes k})^{A^kJ^kX^k})) \leq D$.
Alice applies the quantum channel $\mathcal{N}_k^{A^kJ^k\to B^k}$ to her system $A^kJ^k$. 
Let $U^{A^kJ^k \to B^kE}_{\mathcal{N}_k}$ be the Stinespring dilation of the channel $\mathcal{N}^{A^kJ^k\to B^k}_k$. 
Then, Alice produces a $k$-copy state with purification 
\begin{align*}
    \ket{\tau_k}^{B^kE X^kX'^kR^k} 
    &\coloneqq (U^{A^kJ^k \to B^kE}_{\mathcal{N}_k}\otimes \1^{X^kX'^kR^k}) \ket{\psi^k}^{A^kJ^kX^kX'^kR^k} \\
    &= \sum_{x^k \in \Sigma^k} \sqrt{p_{x^k}} \ket{\tau_{x^k}}^{B^kR^kE}\ket{x^k}^{X^k}\ket{x^k}^{X'^k}. 
\end{align*}
Note that Alice has system $E$ because she locally applies $U^{A^kJ^k \to B^kE}_{\mathcal{N}_k}$. 
She further applies a quantum channel $\Lambda_k^{E \to E_B}$ with isometry $U^{E\to E_BE_A}_{\Lambda_k}$ 
such that $S(B^kE_B)_{\Lambda_k(\tau_k)}$ is minimized. 

Now take a sufficiently large $m$ and consider $n=mk$ copies of the source.  
In the protocol, Alice repeats the above processing $m$ times
to obtain $m$ copies of $\ket{\tau_k}^{B^kE^kX^kX'^kR^k}$, applies $U^{E\to E_BE_A}_{\Lambda_k}$ to each copy, 
and transmits system $(B^k E_B)^{\otimes m}$ to Bob using Schumacher compression~\cite{Schumacher1995,Lo1995}.
Let $\epsilon_m$ be the error in the compression.
Let $\tilde{\tau}_{mk}$ denote the resulting state. 
Then, $\| \tilde{\tau}_{mk} - \tau_{mk} \|_1 \leq \epsilon_m$, so, 
each copy of $\ket{\tau_k}^{B^kE^kX^kX'^kR^k}$ is transmitted with error at most $\epsilon_m$.
As in Theorem~\ref{thm:assisted}, by monotonicity of the trace distance and continuity 
of the distortion, for each $i$, 
\begin{align*}
    \Delta(\tilde{\tau}_k^{B_iX_i}) 
    & \leq D + \epsilon_m \; K_\Delta \,, 
\end{align*}
where $K_\Delta$ is some constant independent of $k$.
Using the known achievable rate for Schumacher 
compression, our rate-distortion code can have rate 
$\frac{1}{k} S(B^kE_B)_{\Lambda_k(\tau_k)} + \gamma_m$, 
so that $\{\epsilon_m\}$ and $\{\gamma_m\}$ are both
vanishing sequences.
We can see that 
\begin{align*}
    S(B^kE_B)_{\Lambda_k(\tau_k)} 
    &= \min_{\Lambda'_k} S(\Lambda'_k(\tau_k))\\ 
    &= E_p(B^k:R^kX^kX'^k)_{\tau_k}, 
\end{align*}
where the minimum is taken over all channels $\Lambda'$ on $E$.
The rate is therefore $g_k(D) + \gamma_m \leq g(D) + \gamma + \gamma_m$.
The quantity $\gamma + \gamma_m$ can be made arbitrarily small as 
$k,m  \rightarrow \infty$.  
Recall that our rate-distortion code has block-length $n=mk$.  Let
  $\delta_n = \epsilon_m \; K_\Delta$.  We have a subsequence of codes
  with distortion $D+\delta_n$ and rate $\frac{1}{k}
  S(B^kE_B)_{\Lambda_k(\tau_k)} + \gamma_m$ where $\{\delta_n\}$ and
  $\{\gamma_m\}$ are both vanishing sequences.
For block-lengths that are not a multiple of $k$, we can use the 
above subsequence of codes for most copies, and use a
noiseless code for the remaining (fewer than $k$) copies.  This 
preserves the distortion 
and the effect on the rate is vanishing. 
Therefore, $g(D)$ is achievable.  

Next, we show the converse, that any achievable rate is at least $g(D)$. 
First pick $\gamma > 0$, and using (2) and (3) proved in 
Appendix \ref{sec:gd}, choose $k$ 
so that $g_k(D) \leq g(D) + \gamma$ for all $D$ and that $g_k(D)$
is continuous.  For a code on $k$-copies with rate $R$ and distortion
$D+\delta$, we have the following chain of inequalities:
\begin{align*}
    kR
    &= \log \mathrm{dim}(M)\\ 
    &\geq S(M)_{\sigma_k}\\
    &\geq E_p(M:R^kX^kX'^k)_{\sigma_k} \\
    &\geq E_p(B^k:R^kX^kX'^k)_{\xi_k}.
\end{align*}
The first inequality follows from the dimension bound of the von Neumann entropy. 
The second inequality follows from Lemma~\ref{lem:upper_bound}. 
The third inequality follows from monotonicity of entanglement of purification (Lemma~\ref{lem:monotonicity}). 
Minimizing over the choice of $\xi_k$ under the distortion constraint for
$D + \delta$, the last expression is no less than $g_k(D+\delta)$.  
Using the continuity of $g_k$ and the uniform convergence of $g_k$ to $g$,
$g_k(D+\delta) - g(D)$ can be made to vanish, as $k$ increases and as
$\delta$ decreases.  

Combining the achievability and the converse, the rate distortion
function for unassisted compression $\R(D)$ is equal to $g(D)$.
Note that the achieving protocol attains worse case local error criterion, but 
the last steps of the converse proof requires the results in Appendix
\ref{sec:gd} which only holds for the average local error criterion. 
\end{proof}

\begin{remark}\label{rem:d0case}
The $D=0$ case requires some special treatments, and the theorem can
still be established in some cases.  First, $g(0)$ remains achievable
for the most general side information system, but the proof for the
converse cannot be established without the continuity and convergence
results in Appendix \ref{sec:gd}.  Second, for blind compression with
trivial side information system, and using a more standard distance
measure for the distortion function $\Delta$ such as the trace
distance or $1-F$ where $F$ is the fidelity, we can adapt other
results in the literature to conclude Theorem \ref{thm:unassisted}.
In particular, references \cite{Koashi2001,Koashi2002} imply $\R(0) =
S(CQ)$ where the state on system $CQ$ is as defined in the
decomposition in Theorem \ref{thm:ki1}.  Meanwhile, when $D=0$, for
each $k$, $g_k$ is minimized over $\mathcal{N}^{A^kJ^k\to B^k}_k$ that
are characterized by Lemma \ref{lem:kiinvar}, and $g_k(0) = S(CQ)$ so
Theorem \ref{thm:unassisted} still holds.
\end{remark}

\begin{remark}
While the rate distortion function for entanglement-assisted
compression is given by a single-letter quantum mutual information
expression, that for unassisted compression is characterized by
multi-letter entanglement of purification.
There are two main hurdles to single-letterize $g(D)$: first, it is
still open if the entanglement of purification is additive, second,
the minimization is over CPTP maps acting on $k$ copies of the source.
\end{remark}

We have derived the rate-distortion function in the most general
unassisted setting, with side information spanning all possible
scenarios between visible and blind compression.
The rate-distortion function is quite complex, and for the rest of
this subsection, we derive simplifications for two special cases,
namely, \textit{visible compression} and \textit{blind compression}.
In the case of visible compression, we show that the rate-distortion
function can be simplified to the entanglement of purification between
Bob's systems $B^k$ and the reference systems $X^k$.  In the case of
blind compression, in the limit of $D\to 0$, we prove that our
multi-letter rate-distortion function becomes single-letter, which
can be related to the previously known optimal rate of blind data
compression.

%----------------------------------------------------------------------------------------------
\subsubsection{Unassisted Visible Compression}

\begin{corollary}~\label{cor:unassisted_visible}
Assuming the set-up in Theorem \ref{thm:unassisted}, but specializing
to visible compression so that $\ket{j_x} = \ket{x}$ in the register
$J$.  The rate distortion function given by Eq.~(\ref{eq:proposeduarate}) 
can be simplified to 
    \begin{equation}~\label{eq:visible_unassisted}
        \R(D) = \lim_{k\to\infty} \frac{1}{k} \min_{\mathcal{N}^{A^kJ^k\to B^k}_k} E_p(B^k:X^k)_{\tau_k}, 
    \end{equation}
with Eqs.~(\ref{eq:error-ua}) and (\ref{eq:state-ua}) continue to define 
$\tau_k$ and the constraints on $\mathcal{N}^{A^kJ^k\to B^k}_k$.
\end{corollary}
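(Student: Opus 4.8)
The plan is to prove both inequalities between $\R(D)$, as given by Eq.~(\ref{eq:proposeduarate}), and the right-hand side of Eq.~(\ref{eq:visible_unassisted}). The direction ``$\ge$'' is immediate: for every $k$ and every channel $\mathcal{N}_k$ obeying the distortion constraint~(\ref{eq:error-ua}), monotonicity of the entanglement of purification (Lemma~\ref{lem:monotonicity}), applied to the channel that discards $X'^kR^k$, gives $E_p(B^k:X^kX'^kR^k)_{\tau_k}\ge E_p(B^k:X^k)_{\tau_k}$, and taking the minimum over $\mathcal{N}_k$ followed by $\lim_{k\to\infty}\tfrac1k$ preserves the inequality.

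For ``$\le$'' I would exploit the hypothesis $\ket{j_x}=\ket{x}$: the encoder's register $J$ holds a classical copy of the label $x$, so the encoder can regenerate the source locally and decouple the true purifying system $R$. Fix $k$ and a channel $\mathcal{N}_k : A^kJ^k\to B^k$ satisfying~(\ref{eq:error-ua}), and define a channel $\mathcal{N}'_k : A^kJ^k\to B^k$ that (i)~measures $J^k$ in the computational basis to learn $x^k$; (ii)~discards the incoming $A^k$; (iii)~prepares a fresh purification $\ket{\varphi_{x^k}}^{\widehat A^k\widehat R^k}$ of $\rho^{A^k}_{x^k}$, keeping $\widehat R^k$ as an internal register; and (iv)~applies $\mathcal{N}_k$ to $\widehat A^k$ together with $J^k$, outputting $B^k$. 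Since the $A$-marginal fed into $\mathcal{N}_k$ is unchanged and $J^k$ is already classical in $(\rho^{AJX})^{\otimes k}$, step~(i) is non-disturbing there, so $(\mathcal{N}'_k\otimes\cid^{X^k})((\rho^{AJX})^{\otimes k})=(\mathcal{N}_k\otimes\cid^{X^k})((\rho^{AJX})^{\otimes k})=\tau_k^{B^kX^k}$ and $\mathcal{N}'_k$ also meets~(\ref{eq:error-ua}). On the source purification $(\proj{\psi})^{\otimes k}$, however, the measurement in step~(i) decoheres the correlations among $J^k,X^k,X'^k$ and, together with discarding $A^k$, leaves the true reference $R^k$ in the state $\tr_{A^k}\!\proj{\varphi_{x^k}}$ conditioned on $X^k=x^k$. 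Hence, writing $\tau'_k$ for the state~(\ref{eq:state-ua}) built from $\mathcal{N}'_k$, one verifies $\tau'_k=(\cid^{B^k}\otimes\mathcal{G}^{X^k\to X^kX'^kR^k})(\tau_k^{B^kX^k})$, where $\mathcal{G}$ reads $x^k$ off $X^k$, copies it to $X'^k$, and prepares $\tr_{A^k}\!\proj{\varphi_{x^k}}$ on $R^k$.

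To conclude I would invoke the elementary identity: if $\rho^{BXY}=(\cid^B\otimes\mathcal{G}^{X\to XY})(\rho^{BX})$ for a channel $\mathcal{G}$ leaving $X$ intact, then $E_p(B:XY)_\rho=E_p(B:X)_\rho$. Here ``$\ge$'' is Lemma~\ref{lem:monotonicity}; ``$\le$'' holds because a Stinespring dilation of $\mathcal{G}$ carries any purification of $\rho^{BX}$ into one of $\rho^{BXY}$ whose purifying system merely adjoins the environment of $\mathcal{G}$, so the minimization of Theorem~\ref{thm:entanglement_of_purification} for $E_p(B:XY)$ includes, upon discarding that environment, every post-processing allowed for $E_p(B:X)$. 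Applying this with $Y=X'^kR^k$ gives $E_p(B^k:X^kX'^kR^k)_{\tau'_k}=E_p(B^k:X^k)_{\tau'_k}$, which equals $E_p(B^k:X^k)_{\tau_k}$ since the entanglement of purification depends only on the marginal $\tau_k^{B^kX^k}$. Thus for each $k$ the two minima coincide, and after $\lim_{k\to\infty}\tfrac1k$ we get $\R(D)\le$ the right-hand side of Eq.~(\ref{eq:visible_unassisted}); existence of the limits and the standing hypothesis $D>0$ are inherited from Theorem~\ref{thm:unassisted} and Appendix~\ref{sec:gd}.

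The crux is the ``$\le$'' direction, specifically verifying that $\mathcal{N}'_k$ simultaneously leaves $\tau^{B^kX^k}$ — hence the distortion — untouched and turns $X'^kR^k$ into a channel image of $X^k$; granting that, the $E_p$ identity is routine. Visibility is used essentially in step~(i): in the blind setting $J^k$ does not determine $x^k$, measuring it genuinely damages the state, and the reduction breaks down.
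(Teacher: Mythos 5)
Your proof is correct and follows essentially the same route as the paper's: both directions hinge on using the visible label in $J^k$ to regenerate a fresh copy of the source so that the output $B^k$ decouples from the true reference $R^k$ (and $X'^k$), after which monotonicity and the purification structure of the entanglement of purification give equality of the two minima for each $k$. The only differences are cosmetic --- you decohere $J^k$ explicitly and package the last step as a general identity $E_p(B:XY)_\rho=E_p(B:X)_\rho$ for $Y$ locally generated from a classical $X$, whereas the paper keeps the Stinespring dilation coherent on $J$ and exhibits the optimal post-processing channel by hand.
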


\begin{proof}
We will prove that for each $k$, 
\begin{equation}
\min_{\mathcal{N}^{A^kJ^k\to B^k}_k} E_p(B^k:X^kX'^kR^k)_{{\tau_k}} = 
\min_{\mathcal{M}^{A^kJ^k\to B^k}_k} E_p(B^k:X^k)_{\tilde{\tau}_k} \,.
\label{eq:epsimp}
\end{equation}
Note that by the monotonicity of entanglement of purification, the RHS is 
no greater than the LHS. 
It remains to show the opposite inequality.   
Such an inequality means that the correlation between $R^k$ and $B^k$ does
not increase the rate. The physical intuition is that, the error 
constraint Eq.~(\ref{eq:error-ua}) does not involve $R^k$ for mixed-state 
ensembles.  Using her side information $J^k$, Alice's encoding map 
$\mathcal{N}_k$ does not need to operate on the given system $(AR)^k$;  
instead, she can use $J^k$ to generate new systems $(A'R')^k$ 
(with state identical to that in $(AR)^k$) 
before transforming $(JA')^k$ to $B^k$.  The output in $(XB)^k$ still 
satisfies the error constraint, but now $B^k$ is not correlated with
$R^k$.  We provide a rigorous proof in the following.  The value 
of $k$ does not affect the proof idea, so, 
we first focus on $k=1$, with the source 
  \begin{equation}~\label{eq:visible_source}
  \ket{\psi}^{XX'RAJ} \coloneqq \sum_{x \in \Sigma}
  \sqrt{p_x} \ket{x}^X \ket{x}^{X'} \ket{\varphi_x}^{AR} \ket{x}^J .  
  \end{equation}
Consider the RHS of Eq.~(\ref{eq:epsimp}).  
Let $\mathcal{M}^{AJ\to B}$ be the channel minimizing $E_p(B:X)_{\tilde{\tau}}$ while satisfying the error criterion
    \begin{equation}
        \Delta \left( (\mathcal{M} \otimes \cid^{X}) \left( \rho^{AJX} \right) \right) \leq D \,.
    \label{eq:error-ua-k1}
    \end{equation}
    Since $J$ is classical, without loss of generality, the Stinespring dilation for $\mathcal{M}^{AJ\to B}$ 
    can be chosen to be $V^{AJ \to JEB}$ such that $\ket{x}^J$ is preserved.  In other words, 
    $V^{AJ \to JEB} = \sum_x \proj{x}^J \otimes V_x^{A \to EB}$ for some isometries $V_x$.  Let 
    \begin{align}
    \ket{\tilde{\tau}}^{XX'RJEB}:= \left( I^{XX'R} \otimes V^{AJ \to JEB} \right) \ket{\psi}^{XX'RAJ} 
    = \sum_{x \in \Sigma} \sqrt{p_x} \ket{x}^X \ket{x}^{X'} \ket{x}^J \left( (I^R \otimes V_x^{A \to EB}) \ket{\varphi_x}^{AR} \right) .
    \label{eq:state-ua-k1}
    \end{align}
Furthermore, when evaluating $E_p(X{:}B)_{\tilde{\tau}}$, let $\tilde{\Lambda}^{X'JRE \to F}$
be the channel that minimizes $S(BF)_{ (\cid^{BX} \otimes \tilde{\Lambda}^{X'JRE \to F})(\tilde{\tau})} {=}{:} \alpha$.

We now show that the LHS of 
Eq.~(\ref{eq:epsimp}), 
$\min_{\mathcal{N}^{AJ\to B}} E_p(XX'R:B)_{\tau}$, is at most $\alpha$. 
The proof idea is summarized in Fig. \ref{fig:visible-unassisted-pf-idea}. 
%

% \vspace*{-3ex}
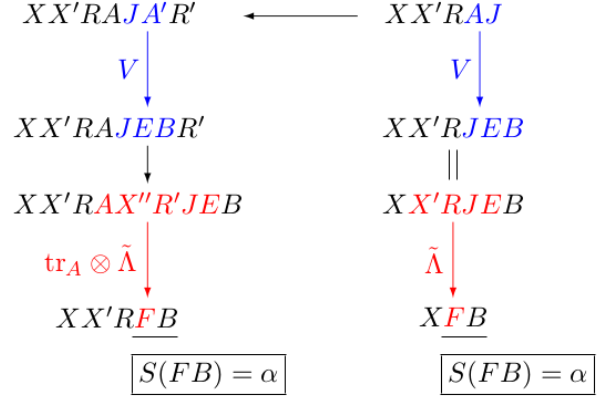
\begin{figure}[ht] 
    \centering
\setlength\unitlength{1.5pt}
\begin{picture}(140,100)(0,0)
\put(10,100){\makebox(50,10){$XX'RA\color{blue}{JA'}\color{black}R'$}}
\color{blue}
\put(45,100){\vector(0,-1){20}}
\put(35,85){\makebox(10,10){$V$}}
\color{black}
\put(10,70){\makebox(50,10){$XX'RA\color{blue}{JEB}\color{black}R'$}}
\put(45,70){\vector(0,-1){10}}
\put(10,50){\makebox(60,10){$XX'R\color{red}AX''R'JE\color{black}B$}}
\color{red}
\put(45,50){\vector(0,-1){20}}
\put(15,35){\makebox(30,10){$\tr_A \otimes \tilde{\Lambda}$}}
\color{black}
\put(22,20){\makebox(30,10){$XX'R\color{red}F\color{black}B$}}
\put(41,20){\line(1,0){12}}
\put(41,05){\framebox(40,10){$S(FB) = \alpha$}}
\put(100,104){\vector(-1,0){30}}
\put(105,100){\makebox(35,10){$XX'R\color{blue}{AJ}$}}
\color{blue}
\put(132,100){\vector(0,-1){20}}
\put(122,85){\makebox(10,10){$V$}}
\color{black}
\put(105,70){\makebox(40,10){$XX'R\color{blue}JEB$}}
\put(124,69){\line(0,-1){8}}
\put(126,69){\line(0,-1){8}}
\put(105,50){\makebox(40,10){$X \color{red} X'RJE \color{black}B$}}
\color{red}
\put(125,50){\vector(0,-1){20}}
\put(115,35){\makebox(10,10){$\tilde{\Lambda}$}}
\color{black}
\color{black}
\put(115,20){\makebox(20,10){$X\color{red}F\color{black}B$}}
\put(122,20){\line(1,0){12}}
\put(122,05){\framebox(40,10){$S(FB) = \alpha$}}
\end{picture}
  \caption{The right column has operations defined by the RHS of Eq.~(\ref{eq:epsimp}).  
  The left column is in terms of the operations in the right column, and presents
  a feasible solution for the minimization of the LHS of Eq.~(\ref{eq:epsimp}) with value 
  same as the RHS.} 
  \label{fig:visible-unassisted-pf-idea} 
\end{figure}

\noindent To show that the LHS of 
Eq.~(\ref{eq:epsimp}), 
$\min_{\mathcal{N}^{AJ\to B}} E_p(XX'R:B)_{\tau}$, is at most $\alpha$, 
we pick a valid (feasible) $\mathcal{N}^{AJ\to B}$ to produce a valid $\tau^{XX'RB}$ 
and a map $\Lambda$ from $\tau^{XX'RB}$'s purifying system to system $F$ so that 
the resulting $S(BF) = \alpha$.  These choices will be in terms of the isometry $V$ 
and the map $\tilde{\Lambda}^{X'JRE \to F}$ developed above in evaluating the RHS of Eq.~(\ref{eq:epsimp}).  
So, let $U^{AJ \to JEB}$ be the Stinespring dilation of $\mathcal{N}^{AJ\to B}$ 
and $U$ acts on the source as follows:  
    \begin{align}
    \ket{\psi}^{XX'RAJ} = & \sum_{x \in \Sigma} \sqrt{p_x} \ket{x}^X \ket{x}^{X'} \ket{\varphi_x}^{AR} \ket{x}^J \label{eq:source-lo}\\ 
    \to & \sum_{x \in \Sigma} \sqrt{p_x} \ket{x}^X \ket{x}^{X'} \ket{\varphi_x}^{AR} \ket{x}^J \ket{\varphi_x}^{A'R'} \label{eq:regen} \\
    \to & \sum_{x \in \Sigma} \sqrt{p_x} \ket{x}^X \ket{x}^{X'} \ket{\varphi_x}^{AR} \ket{x}^J 
                                              \left( (I^{R'} \otimes V_x^{A' \to EB}) \ket{\varphi_x}^{A'R'} \right) {=}{:} ~\ket{\tau}^{XX'ARR'JEB},  
    \label{eq:simbefore}
    \end{align}
where $\ket{x}^J$ generates $\ket{\varphi_x}^{A'R'}$ 
from Eq.~(\ref{eq:source-lo})
to Eq.~(\ref{eq:regen}), 
and then $V^{A'J \to JEB}$ is applied to give Eq.~(\ref{eq:simbefore})
(note that here, $V$ acts on $A'J$).
To see that $\mathcal{N}^{AJ\to B}$ is a valid map, 
note that the final reduced state on $XB$ is the same as in $\ket{\tilde{\tau}}^{XX'RJEB}$ so by 
Eq.~(\ref{eq:error-ua-k1}) $\mathcal{N}^{AJ\to B}$ satisfies the error criterion.  
We can upper-bound the entanglement of purification $E_p(XX'R:B)_\tau$ by 
choosing a channel $\Lambda^{AR'JE \to F}$ to act on $\ket{\tau}^{XX'ARR'JEB}$ 
and evaluating $S(BF)$.  This $\Lambda^{AR'JE \to F}$ creates $X''$ from $J$ 
and then applies $\tilde{\Lambda}^{X''JRE \to F} \otimes \tr^A$.  The 
resulting $S(BF)$ is indeed $\alpha$.  
This completes the proof that the LHS is no bigger than the RHS
in Eq.~(\ref{eq:epsimp}) for $k=1$.
% Further minimization over the channel $\mathcal{N}$ and for the
% entanglement of purification means the LHS is no bigger than the RHS
% in Eq.~(\ref{eq:epsimp}) for $k=1$.
% 
The proof for other values of $k$ can be obtained by replacing $A,X,X',R,B,J,X'',R',A'$ by
$A^k,X^k,(X')^k,R^k,B^k,J^k,(X'')^k,(R')^k,(A')^k$ respectively in the
above proof.
\end{proof}

%---------------------------------------------------------------------------------
\subsubsection{Unassisted Blind Compression}
Next we consider blind compression, in which Alice does not know the state she receives.  
We use the results in Ref.~\cite{Koashi2002}, and the notations in 
Theorem~\ref{thm:ki1} and Theorem~\ref{thm:unassisted}
throughout this subsection.  
We take $U_{KI}=\1$ without loss of generality.  
Blind compression is 
modeled by a trivial side information system with $\dim(J) = 1$, and the
source simplifies to 
    \begin{equation}~\label{eq:blind_source}
        \ket{\psi}^{AXX'R} = \sum_{x \in \Sigma} \sqrt{p_x} \ket{\varphi_x}^{AR} \ket{x}^X \ket{x}^{X'}.  
    \end{equation}
For simplicity of notation, we omit writing out the identity channels
acting on some states; for example, we write $\mathcal{N}^{A \to
  B}(\proj{\psi})$ or $\mathcal{N}(\proj{\psi})$ omitting writing out
the identity channel on $XX'R$.

Reference \cite{Koashi2002} shows that, under vanishing local or
global error criterion, the optimal rate of blind quantum data
compression is $S(CQ)_{\proj{\psi}}$ where $C$ and $Q$ are the
classical and quantum parts of the ensemble defined in Theorem
\ref{thm:ki1}.  We can connect this result to Theorem \ref{thm:unassisted} 
if we choose the distortion to be 
\begin{equation*}
    \Delta(\tau^{BX}) = 1 - F(\rho^{BX},\tau^{BX}),  
\end{equation*}
where $\rho^{BX}$ is obtained from $\rho^{AX}$ by an identity channel from $A$ to $B$, 
and if we take $D \rightarrow 0$.
Under this distortion, and using Theorem \ref{thm:unassisted} and
Remark \ref{rem:d0case}, $$\R(0) = S(CQ)_{\proj{\psi}} = \lim_{D\to
0} g(D) = \lim_{D\to 0} \lim_{k\to\infty} g_k(D).$$ 
In this subsubsection, using techniques separate from Theorem \ref{thm:unassisted}, 
we show the following, which removes the regularization over $k$ in the above. 
This also gives semicontinuity of $g(D)$ at $D=0$ for the unassisted blind setting
(see Remark \ref{rem:d0case}).
\begin{theorem}~\label{thm:blind_unassisted}
    Consider the blind compression of the source given by Eq.~\eqref{eq:blind_source}. 
    Then, 
    \begin{equation}
        \lim_{D\to 0} \min_{\mathcal{N}^{A\to B}} E_p(B:XX'R)_{\mathcal{N}(\proj{\psi})} = S(CQ)_{\proj{\psi}},  \label{eq:simp-ua-ki}
    \end{equation}
    where the minimum is taken over all channels $\mathcal{N}^{A\to B}$ such that 
    \begin{equation*}
        1 - F(\rho^{BX},\mathcal{N}^{A \to B}(\rho^{AX})) \leq D. 
    \end{equation*}
\end{theorem}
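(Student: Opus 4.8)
The plan is to prove the two inequalities separately. For the lower bound $\lim_{D\to 0} \min_{\mathcal{N}} E_p(B{:}XX'R)_{\mathcal{N}(\proj{\psi})} \geq S(CQ)_{\proj{\psi}}$, I would argue that any channel $\mathcal{N}^{A\to B}$ achieving distortion $D$ small must approximately preserve the ensemble, hence by a continuity/robustness version of Lemma~\ref{lem:kiinvar} its Stinespring dilation must be close to the Koashi--Imoto form $\sum_c \proj{c}^C \otimes U_c^{N\to NE}\otimes \1^Q$ (up to attaching the redundant part $N$ if it was removed, but we took $U_{KI}=\1$, so $B$ contains $CNQ$). For such a channel, the purification $\tau^{BXX'R}$ has $B^k=CNQ$ carrying the full correlation with $XX'R$ through the $C$ and $Q$ registers, and the best one can do when purifying and minimizing entropy over the environment's channel is to keep $S(CQ)$; the $N$ part and the environment introduced by the $U_c$'s can be discarded or merged. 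So $E_p(B{:}XX'R)_{\tau} \geq S(CQ) - o(1)$ as $D\to 0$, using Fannes (Lemma~\ref{lem:fannes}) to control the perturbation. The clean way to get the exact bound is: $E_p(B{:}XX'R)_\tau = \min_{\Lambda} S(B\Lambda(\cdot))$, and since the reduced state on $B=CNQ$ is (approximately) $\omega^{CNQ}$ with $N$ recoverable from $C$, one shows $S(B\,\widetilde{B}) \geq S(CQ)$ for any purifying extension, because tracing $N$ and the ancilla can only decrease entropy below... — actually the inequality I want is $S(CNQ\,F) \geq I$-type; more carefully, $E_p$ is monotone under the channel that removes $N$ (Lemma~\ref{lem:monotonicity}), giving $E_p(CNQ{:}XX'R) \geq E_p(CQ{:}XX'R)$, and then on the KI state $E_p(CQ{:}XX'R) = S(CQ)$ since $C$ is classical and correlated with $X$ while $Q$ is pure-ish conditioned on $c,x$ — this needs the structure from Theorem~\ref{thm:ki1} items (ii),(iii) to rule out further compression. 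I would cite or adapt the relevant computation from \cite{Koashi2001,Koashi2002}.

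For the upper bound $\lim_{D\to 0}\min_{\mathcal{N}} E_p(B{:}XX'R)_{\mathcal{N}(\proj{\psi})} \leq S(CQ)_{\proj{\psi}}$, it suffices to exhibit, for each $D>0$ (in fact for $D=0$), a single channel $\mathcal{N}^{A\to B}$ with $1-F \leq D$ and $E_p(B{:}XX'R)_{\mathcal{N}(\proj{\psi})} \leq S(CQ)$. Take $\mathcal{N} = \cid^{A\to B}$ (the identity, which has $F=1$, so distortion $0 \leq D$). Then $\tau^{BXX'R} = \proj{\psi}^{AXX'R}$ with $B=A=CNQ$ (recall $U_{KI}=\1$). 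Now I compute $E_p(CNQ{:}XX'R)_{\proj{\psi}}$ using Theorem~\ref{thm:entanglement_of_purification}: purify $\proj{\psi}^{AXX'R}$ — but $\ket{\psi}$ is already pure on $AXX'R$, so the purifying system $A'$ of the bipartition $(B)(XX'R)$ is trivial, which would give $E_p = S(B) = S(CNQ)$, not $S(CQ)$. That is too big, so the identity channel is \emph{not} optimal; I need a channel that actively strips the redundant part. The right choice is $\mathcal{N} = \mathcal{K}_{\mathrm{on}}^{CQ\to A}\circ \mathcal{K}_{\mathrm{off}}^{A\to CQ}$ from Lemma~\ref{lem:kiop}, or more simply work with the channel that maps $A\to B$ by first discarding $N$ and regenerating it from $C$: this channel preserves $\rho^{AX}$ exactly (distortion $0$), and its Stinespring dilation sends $N$ into the environment where it sits in a fixed-given-$c$ pure-ish form. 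Then when evaluating $E_p(B{:}XX'R)$ I purify and choose the environment channel $\cN^{A'\to A''}$ to be the one that traces out everything except enough to keep $S$ down to $S(CQ)$; concretely, after regenerating $N$ from $C$ the state on $B$ is $\omega^{CNQ}$ but the purification has the $N$ part disentangled from $XX'R$ and attached coherently to $C$, so tracing it (via the minimizing $\cN$) leaves entropy $S(CQ)$. I should write this out carefully: the key identity is $E_p(CNQ{:}XX'R)_\tau = S(CQ)$ for the specific $\tau$ produced by the KI-regeneration channel, which follows because $N$ is generated by a channel applied to $C$ alone (hence redundant in the sense of contributing nothing to the minimal $S(BF)$ once $F$ absorbs $N$'s purification).

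The main obstacle I anticipate is the lower bound's robustness argument: turning the exact structural statement (Lemma~\ref{lem:kiinvar}, which characterizes channels \emph{exactly} preserving the ensemble) into an approximate one (channels with $1-F\leq D$ small) and then tracking how the $o(1)$ error propagates through the $E_p$ minimization, which itself involves an optimization over environment channels on a space whose dimension is not a priori bounded. The standard fix is to invoke continuity of $E_p$ (which follows from Fannes-type bounds, Lemma~\ref{lem:fannes}, since $E_p$ is a min of entropies of states that are close when the inputs are close) together with an approximate Koashi--Imoto theorem — this is exactly the kind of statement proven in \cite{Khanian2022} or derivable from the perturbation analysis in \cite{Koashi2002}. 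A cleaner route that sidesteps approximate structure entirely: since the quantity inside the limit is monotone non-increasing in $D$ (larger $D$, more channels allowed) and we established the upper bound $\leq S(CQ)$ for all $D>0$, it suffices to show $\liminf_{D\to 0}\min_\mathcal{N} E_p \geq S(CQ)$; and for this one can use Theorem~\ref{thm:unassisted} together with Remark~\ref{rem:d0case}, which already identifies $\R(0) = S(CQ)$ and $\R(0) = \lim_{D\to 0}\lim_{k\to\infty} g_k(D)$ — so the content here is really the \emph{interchange of limits / single-letterization}, i.e. showing $\lim_{D\to 0} g_1(D) = \lim_{D\to 0}\lim_{k\to\infty} g_k(D)$, which I would attack by showing $g_1(D) \leq g_k(D) + (\text{something} \to 0)$ via a single-copy simulation that exploits, once more, the product structure of the KI decomposition of $(\rho^{AX})^{\otimes k}$ (its classical and quantum parts are the $k$-fold tensor powers of those of $\rho^{AX}$, by the tensor-stability of the KI decomposition).
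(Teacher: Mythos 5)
Your upper bound, once you abandon the identity channel and settle on the Koashi--Imoto regeneration map, is essentially the paper's argument: take the zero-distortion channel $\mathcal{K}_{\mathrm{on}}^{CQ\to B}\circ\mathcal{K}_{\mathrm{off}}^{A\to CQ}$ of Lemma~\ref{lem:kiop}, which is feasible for every $D\geq 0$, and bound
$E_p(B{:}XX'R)_{(\mathcal{K}_{\mathrm{on}}\circ\mathcal{K}_{\mathrm{off}})(\proj{\psi})}\leq E_p(CQ{:}XX'R)_{\mathcal{K}_{\mathrm{off}}(\proj{\psi})}\leq S(CQ)_{\proj{\psi}}$
by Lemma~\ref{lem:monotonicity} and Lemma~\ref{lem:upper_bound}. (The paper instead composes $\mathcal{K}_{\mathrm{on}}\circ\mathcal{K}_{\mathrm{off}}$ with the optimal $\tilde{\mathcal{N}}$ at distortion $D$ and pays a Fannes correction; your direct choice is cleaner, and note you only need ``$\leq S(CQ)$,'' not the exact identity $E_p=S(CQ)$ you assert.)

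The lower bound is where your proposal is incomplete. Your Route~A hinges on an ``approximate Koashi--Imoto theorem'' that you neither prove nor need: the paper avoids any quantitative robustness statement by a soft limiting argument. It takes the optimal $\tilde{\mathcal{N}}$ and the optimal environment channel $\Lambda_*$ in the $E_p$ minimization, and observes that as $D\to 0$ these converge (by compactness of the channel sets and continuity of fidelity and entropy) to a combined channel $\mathcal{M}^{A\to BE_B}$ satisfying $\Tr_{E_B}[\mathcal{M}(\rho^{AX})]=\rho^{AX}$ \emph{exactly}; the exact Lemma~\ref{lem:kiinvar} then forces the conditional product form $\sum_x p_x\sum_c p_{c|x}\proj{c}\otimes\omega_c^{NE_B}\otimes\rho_{cx}^{Q}\otimes\proj{x}$, and the chain $S(BE_B)=S(C)+S(Q|C)+S(NE_B|C)\geq S(CQ)$ finishes the argument. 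Your Route~B (via Theorem~\ref{thm:unassisted} and the operational result of \cite{Koashi2001,Koashi2002}) is valid --- the paper itself notes it --- but you are aiming at the wrong inequality: the hard single-letterization $g_1(D)\leq g_k(D)+o(1)$ is not required. For the lower bound you only need $g_k(D)\leq g_1(D)$, which is immediate from tensoring a feasible single-copy channel and subadditivity of the entanglement of purification (exactly as in Appendix~\ref{sec:gd}); this gives $\lim_{D\to 0}g_1(D)\geq\lim_{D\to 0}g(D)=S(CQ)_{\proj{\psi}}$ at once. Either fix closes the gap; as written, neither route is carried to completion.
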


\begin{proof}
Fix $D \geq 0$.  
We first make a useful observation.  
Let $\tilde{\mathcal{N}}^{A\to B}$ be an optimal channel achieving the minimum in the left-hand side
of Eq.~(\ref{eq:simp-ua-ki}).  
Consider the KI operations from Lemma \ref{lem:kiop}
    \begin{align}
        \mathcal{K}^{A\to CQ}_{\mathrm{off}}(\rho^{AX}) &= \omega^{CQX}, \label{eq:koff2}\\
        \mathcal{K}^{CQ\to A}_{\mathrm{on}}(\omega^{CQX}) &= \rho^{AX}. \label{eq:kon2}
    \end{align}
and let $\mathcal{K}^{B\to \hat{B}}_{\mathrm{off}}$ and $\mathcal{K}^{\hat{B}\to B}_{\mathrm{on}}$ 
be the above maps with system $B$ replacing $A$ and $\hat{B}$ replacing $CQ$.  
Let $\mathcal{N}_*^{A\to B} = \mathcal{K}^{\hat{B}\to B}_{\mathrm{on}}\circ \mathcal{K}^{B\to \hat{B}}_{\mathrm{off}} \circ \mathcal{N}^{A\to B}$. 
We now show that $\mathcal{N}_*^{A\to B}$ is also an optimal map. 
First, we check that $\mathcal{N}_*$ also satisfies the distortion condition. 
    \begin{equation*}
        \begin{aligned}
            F(\rho^{BX},\mathcal{N}_*^{A\to B}(\rho^{AX}))
            &= F(\mathcal{K}^{CQ\to B}_{\mathrm{on}}\circ \mathcal{K}^{B\to CQ}_{\mathrm{off}}(\rho^{BX}), \mathcal{K}^{\hat{B}\to B}_{\mathrm{on}}\circ \mathcal{K}^{B\to \hat{B}}_{\mathrm{off}} \circ \tilde{\mathcal{N}}^{A\to B}(\rho^{AX})) \\ 
            &\geq F(\rho^{BX}, \tilde{\mathcal{N}}^{A\to B}(\rho^{AX})) \\ 
            &\geq 1-D, 
        \end{aligned}
    \end{equation*}
    where the first line comes from Eqs.~(\ref{eq:koff2}) and (\ref{eq:kon2}) and the 
    definition of $\mathcal{N}_*$, the second line follows from 
    monotonicity of the fidelity, and the last line holds because   
    the optimal map $\tilde{\mathcal{N}}^{A\to B}$ must satisfy the distortion constraint.  

Second, we show that $\mathcal{N}_*$ also attains the minimum in the left-hand side
of Eq.~(\ref{eq:simp-ua-ki}).  By monotonicity of the entanglement of purification
    \begin{equation*}
        E_p(B:XX'R)_{\mathcal{N}_*(\proj{\psi})} 
        \leq E_p(B:XX'R)_{\tilde{\mathcal{N}}(\proj{\psi})} .
    \end{equation*}
Since $\tilde{\mathcal{N}}^{A\to B}$ is an optimal map achieving the minimum, 
$\mathcal{N}_*^{A\to B}$ is also an optimal map, as claimed. 

    We now show that 
    \begin{equation}
        \lim_{D\to 0} \min_{\stackrel{\mathcal{N}^{A\to B}}{1{-}F(\rho^{BX}\!,\,\mathcal{N}^{A \to B}(\rho^{AX})) \leq D} } E_p(B:XX'R)_{\mathcal{N}(\proj{\psi})} \leq S(CQ)_{\proj{\psi}} 
    \label{eq:ki-connect-first-half}
    \end{equation}
    using the following chain of inequalities
    \begin{align*}
        \min_{\stackrel{\mathcal{N}^{A\to B}}{1{-}F(\rho^{BX}\!,\,\mathcal{N}^{A \to B}(\rho^{AX})) \leq D} } 
        E_p(B:XX'R)_{\mathcal{N}(\proj{\psi})}
        &= E_p(B:XX'R)_{\mathcal{N}_*(\proj{\psi})}  \\[-2ex]
        &\leq E_p(\hat{B}:XX'R)_{\mathcal{K}_{\mathrm{off}}\circ\tilde{\mathcal{N}}(\proj{\psi})} \\[0.5ex]
        &\leq S(\hat{B})_{\mathcal{K}_{\mathrm{off}}\circ\tilde{\mathcal{N}}(\proj{\psi})} \\ 
        &\leq S(CQ)_{\proj{\psi}} - h_2(D) - D\log(\dim(CQ)), 
    \end{align*}    
    where the first line comes from the optimality of $\mathcal{N}_*$, the second
    and third lines come from the monotonicity and dimension bound of the entanglement 
    of purification, and the last line follows from Fannes inequality (Lemma \ref{lem:fannes}).
Thus, in the limit of $D \to 0$, the inequality (\ref{eq:ki-connect-first-half}) holds.
% $\min_{\mathcal{N}^{A\to B}} E_p(B:XX'R)_{\mathcal{N}(\proj{\psi})} \leq S(CQ)_{\proj{\psi}}$. 

Next, we show the opposite inequality 
    \begin{equation}\label{eq:ki-connect-second-half}
        \lim_{D\to 0} \min_{\stackrel{\mathcal{N}^{A\to B}}{1{-}F(\rho^{BX}\!,\,\mathcal{N}^{A \to B}(\rho^{AX})) \leq D} } 
                    E_p(B:XX'R)_{\mathcal{N}(\proj{\psi})} \geq S(CQ)_{\proj{\psi}}.  
    \end{equation}
Note that the above follows from the results in
\cite{Koashi2002}, since our proof for Theorem \ref{thm:unassisted}
can be applied to the LHS to show that it is an achievable rate under
vanishing local error.  But we also present a direct proof in the 
following. 

Fixed $D \geq 0$ and an optimal map $\tilde{\mathcal{N}}^{A\to B}$ (under constraints as defined before), 
with Stinespring dilation $U_{\tilde{\mathcal{N}}}^{A\to BE}$. 
Let $\ket{\tau_x}^{BER} = (U_{\tilde{\mathcal{N}}}^{A\to BE} \otimes \1^R) \ket{\varphi_x}^{AR}$, 
and 
\begin{equation*}
    \ket{\tau}^{BERXX'} =  \left( U_{\tilde{\mathcal{N}}}^{A\to BE} \otimes \1^{RXX'} \right)  \ket{\psi}^{AXX'R} 
    = \sum_{x \in \Sigma} \sqrt{p_x} \ket{\tau_x}^{BER} \ket{x}^X \ket{x}^{X'}.   
\end{equation*}
By the definition of the entanglement of purification, 
\begin{equation*}
    E_p(B:XX'R)_{\tau} = \min_{\Lambda^{E\to E_B}} S(BE_B)_{\Lambda(\tau)}.  
\end{equation*}
Let $\Lambda_*^{E\to E_B}$ be an optimal channel achieving the minimum above. 
As $D\to 0$, $S(BE_B)_{\Lambda_*(\tau)}$ converges to $S(BE_B)_{\mathcal{M}(\rho)}$, 
where $\mathcal{M}^{A \to BE_B}$ is a quantum channel such that $\tr_{E_B}[\mathcal{M}^{A\to BE_B}(\rho^{AX})] = \rho^{AX}$. 
By Koashi-Imoto's theorem (Theorem~\ref{thm:ki1}), the channel $\mathcal{M}^{A\to BE_B}$ only acts on the redundant part of the state $\rho^{AX}$; 
that is, we can write 
\begin{equation*}
    \mathcal{M}^{A\to BE_B}(\rho^{AX}) 
    \coloneqq \sum_x p_x \sum_{c} p_{c|x} \proj{c}^C \otimes \omega_c^{NE_B} \otimes \rho_{xc}^Q  \otimes \proj{x}^X, 
\end{equation*}
where $\omega_c^{NE_B}$ is a state such that $\tr_{E_B}[\omega_c^{NE_B}] = \omega^N_c$. 
Therefore, 
\begin{align*}
    S(BE_B)_{\mathcal{M}(\rho)}
    &= S(CQNE_B)_{\mathcal{M}(\rho)}\\ 
    &= S(C)_{\mathcal{M}(\rho)} + S(QNE_B|C)_{\mathcal{M}(\rho)} \\
    &= S(C)_{\mathcal{M}(\rho)} + S(Q|C)_{\mathcal{M}(\rho)} + S(NE_B|C)_{\mathcal{M}(\rho)} \\
    &\geq S(C)_{\mathcal{M}(\rho)} + S(Q|C)_{\mathcal{M}(\rho)} \\ 
    &= S(CQ)_{\mathcal{M}(\rho)} \\ 
    &= S(CQ)_{\rho} \\ &= S(CQ)_{\proj{\psi}}  
\end{align*}
where the third line is obtained because the state on $QNE_B$ is a product state 
when we conditioned on system $C$; in particular, for each $c$, the state on $QNE_B$ 
is $\omega_c^{NE_B} \otimes \sum_x p_x \rho_{cx}^{Q}$. 
Therefore, the inequality (\ref{eq:ki-connect-second-half}) holds. 
\end{proof}

%---------------------------------------------------------------------------------------------------
\subsection{Full Rate Region of Rate Distortion Theory}~\label{subsec:rate_region}
An ultimate goal in quantum information theory is to understand the fundamental performance and cost of a task.  
Here, we generalize our arguments from the previous sections to provide the full qubit-entanglement rate region of rate distortion coding for ensemble sources. 

\begin{theorem}[Full rate region of rate-distortion compression]~\label{thm:full_rate}
    Rate-distortion coding of a given source $\ket{\psi}^{AJXX'R}$ with distortion $D$ is achievable if and only if 
    its qubit rate $R$ and entanglement rate $E$ satisfy
    \begin{align}
        R &\geq \frac{1}{k} \; \frac{1}{2} I(B^kE_B:R^kX^kX'^k)_{\Lambda_k(\tau_k)}, \label{eq:qrate} \\
        R+E &\geq \frac{1}{k} \; S(B^kE_B)_{\Lambda_k(\tau_k)}, \label{eq:erate}
    \end{align}
for some $k$ and for the state $\tau_k$ defined in the statement of Theorem \ref{thm:unassisted}, 
involving a channel $\mathcal{N}_k^{A^kJ^k\to B^k}$ satisfying the distortion 
condition in the statement of Theorem \ref{thm:unassisted}, and 
some channel $\Lambda_k^{E \to E_B}$.
\end{theorem}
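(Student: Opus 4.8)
The plan is to prove the two directions separately, reusing the machinery already developed for Theorems~\ref{thm:assisted} and \ref{thm:unassisted}. The achievability direction combines the two compression strategies used in those proofs: given $k$, a channel $\mathcal{N}_k^{A^kJ^k\to B^k}$ satisfying the distortion constraint, and a channel $\Lambda_k^{E\to E_B}$, Alice applies $\mathcal{N}_k$ (via its Stinespring dilation) followed by $\Lambda_k$ on the environment, producing many i.i.d.\ copies of the state $\Lambda_k(\tau_k)$ on $B^kE_B$ together with the reference $R^kX^kX'^k$. She then uses quantum state redistribution (Theorem~\ref{qsrrecap}) with $B$ trivial to send the $B^kE_B$ systems to Bob, who discards $E_B$. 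By Theorem~\ref{qsrrecap} this costs a qubit rate of $\tfrac{1}{k}\cdot\tfrac12 I(B^kE_B:R^kX^kX'^k)_{\Lambda_k(\tau_k)}$ (up to vanishing terms) and an entanglement rate $E$ with $R+E = \tfrac{1}{k} S(B^kE_B)_{\Lambda_k(\tau_k)}$; allowing Alice to also waste entanglement or pad the qubit rate gives the full region above the two bounds. Continuity of the distortion and monotonicity of trace distance (exactly as in the proof of Theorem~\ref{thm:assisted}) ensure the per-copy distortion stays within $D+\delta_n$.

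\textbf{For the converse}, suppose $(R,E,D)$ is achievable via codes $(\mathcal{E}_n,\mathcal{D}_n)_n$. The qubit bound~\eqref{eq:qrate} is obtained by the same chain of inequalities as in the converse of Theorem~\ref{thm:assisted}: starting from $2\log\dim(M) \geq I(M:R^nX^nX'^nB_0)_{\sigma_n}$, regrouping $B_0$ with $M$ using independence of the preshared entanglement from the source and subadditivity, applying data processing to pass to $B^n$, and then identifying the resulting expression with $n$ times a single-copy (or $k$-copy) optimized mutual information. The entanglement bound~\eqref{eq:erate} comes from a cut-counting / dimension argument: the total quantum information Bob must end up holding in $B^n$ (equivalently the entropy $S(B^kE_B)$ of the purified output before discarding $E_B$) is at most what arrives through $M$ plus what was preshared, i.e.\ $\log\dim(M) + \log\dim(A_0) \geq S(B^n)$-type bound, combined with the von Neumann entropy dimension bound $S(M)\le \log\dim(M)$ and $S(B_0)\le\log\dim(A_0)$. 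One then extracts the $k$-copy single-letterization by defining, for each block of $k$ consecutive copies, the induced channel $\mathcal{N}_k$ (attach the other copies, run the protocol, trace out), as in Theorem~\ref{thm:unassisted}, and using superadditivity of mutual information (Lemma~\ref{lem:superadditivity}) to reduce the $n$-letter quantity to a sum of $k$-letter quantities.

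\textbf{The main obstacle} I expect is correctly handling the role of the auxiliary channel $\Lambda_k^{E\to E_B}$ in the converse: the state $\Lambda_k(\tau_k)$ appearing in~\eqref{eq:qrate}–\eqref{eq:erate} is not directly produced by an arbitrary protocol, so one must argue that for any achievable code there exists \emph{some} choice of $\mathcal{N}_k$ and $\Lambda_k$ realizing (or lower-bounding) both rates simultaneously. The natural route is to take $\Lambda_k$ to be the channel that discards everything except a purification of what Bob receives — i.e.\ to reconstruct $E_B$ from the environment of $\mathcal{N}_k$ so that $B^kE_B$ purifies the protocol's output on $B^k$ against $R^kX^kX'^k$ — and then to verify that both the mutual-information and the entropy quantities evaluated on this particular $\Lambda_k(\tau_k)$ are no larger than the achieved $R$ and $R+E$ respectively. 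Making this simultaneous choice consistent, and checking it survives the asymptotic limits and the continuity/convergence arguments from Appendices~\ref{sec:fd} and~\ref{sec:gd}, is the delicate part; the rest is bookkeeping with tools already in hand.
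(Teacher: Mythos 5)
Your achievability argument is essentially the paper's: run $\mathcal{N}_k$ and $\Lambda_k$, then use quantum state redistribution on the i.i.d.\ copies of $\Lambda_k(\tau_k)$, with the same continuity bookkeeping as in Theorems~\ref{thm:assisted} and~\ref{thm:unassisted}. That direction is fine.

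The converse, however, has a genuine gap, and it is exactly the point you flag as ``the delicate part'' without resolving. The two bounds \eqref{eq:qrate} and \eqref{eq:erate} must hold \emph{simultaneously} for a single choice of $k$, $\mathcal{N}_k$, and $\Lambda_k$. Your route for the mutual-information bound follows the converse of Theorem~\ref{thm:assisted}: after regrouping $B_0$ with $M$ you apply data processing to pass to $B^n$ and then invoke superadditivity (Lemma~\ref{lem:superadditivity}) per block. But data processing to $B^n$ discards the decoder's environment, so the resulting quantity is $I(B^k:R^kX^kX'^k)$, i.e.\ the bound for a $\Lambda_k$ with \emph{trivial} $E_B$; whereas your entanglement bound naturally produces $S(MB_0)=S(B^kW_B)$, i.e.\ the entropy of $B^k$ together with a \emph{nontrivial} $E_B=W_B$ (and $S(B^kW_B)\geq S(B^k)$ is false in general, so you cannot simply drop $W_B$ there either). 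The two chains therefore certify the two inequalities for two different choices of $\Lambda_k$, which does not prove the theorem. Moreover, the per-block induced channels combined with superadditivity do not repair this: Lemma~\ref{lem:superadditivity} applies to the $B_i$ alone and gives no handle on a per-block $S(B^kE_B)$ with a consistent $E_B$.

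The fix --- and what the paper actually does --- is to not single-letterize at all. Since the statement only requires the inequalities to hold ``for some $k$,'' take $k$ to be the block length of the protocol itself, define $\mathcal{N}_k^{A^kJ^k\to B^k}$ as (append $\Phi^{A_0B_0}$) $\circ\,\mathcal{E}_k\circ\mathcal{D}_k$ with environment $W_AW_B$, and take $\Lambda_k$ to discard $W_A$ and rename $W_B$ as $E_B$. Then the decoding isometry gives the exact identities $I(MB_0:R^kX^kX'^k)=I(B^kW_B:R^kX^kX'^k)$ and $S(MB_0)=S(B^kW_B)$, so both bounds hold for this one pair $(\mathcal{N}_k,\Lambda_k)$ with no superadditivity step; only the passage from distortion $D+\delta_k$ to $D$ then needs the continuity/convexity argument, which you mention but do not carry out.
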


\begin{proof}
We first show achievability, and then the converse.

The achievability of the given rate-region can be shown by considering a 
protocol very similar to the one used in the proof of Theorem \ref{thm:unassisted}, 
so we only describe the two differences: 
(1) Alice's processing need not minimize $S(B^kE_B)_{\Lambda_k(\tau_k)}$ here. 
(2) Instead of Schumacher compression, $m$ copies of the $k$-copy state are transmitted using 
quantum state redistribution~\cite{Devetak2008a,Yard2009} with error $\epsilon_m$.  
This quantum state redistribution can be performed with 
qubit rate $R+\eta_m$ and entanglement rate $E+\eta_m$ if 
\begin{align*}
    R &\geq \frac{1}{k} \; \frac{1}{2} I(B^kE_B:R^kX^kX'^k)_{\Lambda_k(\tau_k)}, \\
    R+E &\geq \frac{1}{k} \; S(B^kE_B)_{\Lambda_k(\tau_k)} \,    
\end{align*}
for some vanishing non-negative sequence $\{\eta_m\}$.  
The same continuity argument in Theorem \ref{thm:unassisted} also 
applies to show that the protocol satisfies the distortion condition 
up to a vanishing amount as $\epsilon_m \rightarrow 0$.  
Finally, for large block-lengths that are not multiples of $k$, 
we can again use the above subsequence of codes on most of the 
copies, and transmit the remainder noiselessly, with vanishing 
increase to the above rates, similarly to the discussion in the 
proof of Theorem \ref{thm:unassisted}.

For the converse, consider a fixed $k$ 
% that is sufficiently large so that there is
and a compression protocol $(\mathcal{E}_k, \mathcal{D}_k)$
with qubit rate $R$ and entanglement rate $E$, and satisfying the
distortion condition given by Definition \ref{def:distortion} and
Eq.~(\ref{eq:distortion-ave}), with $D$ replaced by $D+\delta_k$  
for a small $\delta_k$ (and a large $k$).  
For this protocol, $\sigma_k$, 
$\xi_k$ are as defined in Section \ref{sec:setup} (with $k$ instead of $n$
for the block size).  
We need to propose quantum channels $\mathcal{N}_k^{A^kJ^k\to B^k}$
with Stinespring dilation $U_{\mathcal{N}_k}^{A^kJ^k \to B^kE}$ and
$\Lambda_k^{E \to E_B}$ with Stinespring dilation 
$U_{\Lambda_k}^{E \to E_AE_B}$ so that $\mathcal{N}_k^{A^kJ^k\to B^k}$ 
satisfies the distortion condition (with $D$ replaced by $D+\delta_k$), and the inequalities 
(\ref{eq:qrate})
and 
(\ref{eq:erate})
on the rates $R,E$ are satisfied.  
To this end, we take $\mathcal{N}_k^{A^kJ^k\to B^k}$ to be the composition of 
the following steps:  \\
(1) appending the entangled state $\Phi^{A_0B_0}$ to $(\rho^{\otimes k})^{A^kJ^kX^k}$, \\ 
(2) applying the encoding operation $\mathcal{E}_k^{A^kJ^kA_0 \to M}$ (with environment system $W_A$), \\
(3) applying the decoding operation $\mathcal{D}_k^{MB_0 \to B^k}$ (with environment system $W_B$), \\ 
and $\mathcal{N}_k^{A^kJ^k\to B^k}$ has environment system $E=W_AW_B$.  We take 
$\Lambda_k^{E \to E_B}$ as the quantum channel that first discards system $W_A$ and then 
renames $W_B$ as $E_B$. 

We now have the following chain of inequalities for $k$ copies with qubit rate $R$: 
\begin{align*}
    2kR
    &= 2 \log \mathrm{dim}(M)\\ 
    &\geq I(MB_0:R^k X^k X'^k)_{\sigma_k} \\ 
    &= I(B^kW_B: R^k X^k X'^k)_{\xi_k}, 
\end{align*}
where the first inequality is obtained with the same arguments 
for the first 6 lines of the chain of inequalities in the converse 
proof of Theorem~\ref{thm:assisted}. 
We also have the following chain of inequalities with qubit rate $R$ and entanglement rate $E$: 
\begin{align*}
    kR + kE
    &= \log \mathrm{dim}(M) + \log \mathrm{dim}(B_0)\\ 
    &\geq S(M)_{\sigma_k} + S(B_0)_{\sigma_k} \\
    &\geq S(MB_0)_{\sigma_k} \\  
    &= S(B^kW_B)_{\xi_k}. 
\end{align*}
Identifying $\xi_k$ as $\Lambda_k(\tau_k)$ and $W_B$ as $E_B$, we obtain
\begin{align*}
    2kR & \geq I(B^kE_B: R^k X^k X'^k)_{\Lambda_k(\tau_k)}, \\ 
    kR + kE& \geq S(B^kE_B)_{\Lambda_k(\tau_k)}. 
\end{align*}
To see that the above bounds do not change abruptly when $D+\delta_k$ 
is replaced by $D$, we use an argument similar to that used in Appendix 
\ref{sec:gd}, taking $l$ copies of a $k$-copy protocol for $D_1$ and $1$ 
copy of a $k$-copy protocol for $D_2$, and noting that the overall 
achievable distortion parameter is convex, while the RHS of the above
bounds are additive.  This removes the abrupt changes in the bounds 
as $l$ becomes large.  

\end{proof}

\begin{remark}
Comparing the rates in Theorem \ref{thm:full_rate}
with those in Theorems 
\ref{thm:assisted} and \ref{thm:unassisted}.  
The qubit rate $\frac{1}{k} I(B^kE_B:R^kX^kX'^k)_{\Lambda_k(\tau_k)}$
in Theorem \ref{thm:full_rate}
is lower-bounded by the assisted rate
$\frac{1}{2}I(B:XX'R)_{\mathcal{N}(\proj{\psi})}$ from Theorem
\ref{thm:assisted} (by monotonicity when dropping $E_B$ and by
superadditivity).
Choosing a trivial $E_B$ can increase $R+E$ but that is not a concern
with entanglement asssistance.  Operationally, if Alice and Bob share
enough entanglement, Alice need not apply her local operation
$\Lambda^{E \to E_B}$ to achieve the optimal qubit rate.
Meanwhile, minimizing the rate sum $R+E$ in 
Theorem \ref{thm:full_rate}
matches the unassisted rate
in Theorem \ref{thm:unassisted}; operationally, it says that
transmitting the required entanglement and performing quantum 
state redistribution is as optimal as Schumacher compression 
used in Theorem \ref{thm:unassisted}.
\end{remark}

% \clearpage
%%%%%%%%%%%%%%%%%%%%%%%%%%%%%%%%%%%%%%%%%%%%%%%%%%%%%%%%%%%%%%%%%%%%%%%%%%%%%%%%%%%%%%%%%%%%%%%%%%%%%
\section{Discussion and Conclusion}
\label{sec:discon}

First, we summarize our main results.
We investigated quantum rate-distortion compression with and without entanglement. 
We assumed a quantum mixed-state ensemble source with side information so that our analysis covers both visible and blind compression.
First, we derived the rate-distortion function for entanglement assisted compression, and we proved that the optimal rate can be expressed using a single-letter formula of quantum mutual information. 
Second, we showed that the rate-distortion function for the unassisted case is expressed in terms of a regularized entanglement of purification. We showed interesting implications of this result to visible and blind scenarios. 
In the visible scenario, the distortion function can be simplified, and it only depends on Bob's system and the reference system. 
In the blind case, we proved that our multi-letter rate distortion function can be single-letterized in the limit of $D\to 0$. 
Finally, we found the full qubit-entanglement rate region for rate-distortion theory of mixed states. 
Thus, we characterized rate-distortion compression of mixed states, which has not been covered before. 
We believe our work furthers the understanding of the limits of quantum data compression with finite approximations. 

Second, we explore several observations and implications of our results. 
\begin{itemize}
\item We can compare our results to those in references
    \cite{Datta2013b,Datta2013c}
    for a pure-state source 
    \begin{equation}
        \rho^{AX} = \sum_{x\in\Sigma} p_x \proj{\varphi_x}^A\otimes \proj{x}^X
    \end{equation}
    with purification 
    \begin{equation}
        \ket{\psi}^{AXX'} = \sum_{x\in\Sigma} \sqrt{p_x} \ket{\varphi_x}^A\ket{x}^X\ket{x}^{X'}. 
    \end{equation} 
    We omit the side-information system $J$ in this comparison since \cite{Datta2013b,Datta2013c} 
    considered side-information differently from our treatment.  Note also that the system $R$ 
    is trivial for a pure state ensemble.   
    
    First, we note that the figure of merit in this paper is different from 
    that in Refs.~\cite{Datta2013b,Datta2013c},  
    even though both adopted the (average-case) local error criterion.
    The distortion function $\Delta$ in this paper evaluates the quality of the data transmission on the joint system $AX$ (Definition~\ref{def:distortion}).  
    On the other hand, the distortion function 
    in Refs.~\cite{Datta2013b,Datta2013c} 
    is with respect to the system $AXX'$. (Note that we have adjusted the notation accordingly.)
    Our distortion condition is less stringent in that we do not require the coherence over the label $x$ to be preserved. 

    Now, we compare our rate-distortion functions to those in Refs.~\cite{Datta2013b,Datta2013c}, and  
    observe some similarities.  
    For example, in the entanglement-assisted scenario, 
    our rate-distortion function for the pure-state source is 
    \begin{equation}
        \label{eq:ea_distortion_pure}
        \mathscr{R}_{\mathrm{ea}}(D) = \min_{\mathcal{N}^{A\to B}} \frac{1}{2}I(B:XX')_{(\mathcal{N}\otimes \mathcal{I})(\proj{\psi})}, 
    \end{equation}    
    where the minimum is taken over all quantum channels $\mathcal{N}:{A\to B}$ such that 
    \begin{equation}
        \label{eq:ea_condition_pure}
        \Delta \left(  \left(\mathcal{N}\otimes \mathcal{I}^{X}\right)\left(\rho^{AX}\right) \right) \leq D. 
    \end{equation}
    Meanwhile, the rate-distortion function in Refs.~\cite{Datta2013b,Datta2013c} 
    has the same form as Eq.~\eqref{eq:ea_distortion_pure}, 
    but the minimization is taken over 
    quantum channels $\mathcal{N}:{A\to B}$ such that 
    \begin{equation}
        \label{eq:ea_condition_pure_prev}
        \Delta \left(  \left(\mathcal{N}\otimes \mathcal{I}^{XX'}\right)\left(\proj{\psi}^{AXX'}\right) \right) \leq D. 
    \end{equation} 
    Even though the two rate-distortion functions have almost identical expressions, 
    they can have potentially different values due to the different 
    distortion conditions (Eqs.~\eqref{eq:ea_condition_pure} and \eqref{eq:ea_condition_pure_prev}). 
    Very similar relations between the two results also hold for 
    the unassisted case via a similar argument.

    \item The achievability proofs of Theorems~\ref{thm:assisted}, \ref{thm:unassisted}, and \ref{thm:full_rate} 
    indicate that the optimal rates for all our rate-distortion theories 
    can always be achieved by faithfully transmitting an appropriately distorted source. 
    Hence, in the rate-distortion coding, it is sufficient for the sender to prepare a good approximation of the given ensemble.

    \item 
    If Alice and Bob are allowed to use shared randomness for their protocol, 
    the worst-case distortion $\Delta^{(n)}_{\mathrm{max}}$ and 
    the average-case distortion $\Delta^{(n)}_{\mathrm{ave}}$ 
    result in the same rate distortion function. 
    To see this, first note that since $\Delta^{(n)}_{\mathrm{ave}} \leq \Delta^{(n)}_{\mathrm{max}}$ by definition, 
    the rate distortion function for $\Delta^{(n)}_{\mathrm{ave}}$ 
    is no greater than that for $\Delta^{(n)}_{\mathrm{max}}$. 
    Next, to show the opposite inequality, 
    given any protocol $(\mathcal{E}_n, \mathcal{D}_n)$, 
    we construct a new protocol $(\tilde{\mathcal{E}}_n, \tilde{\mathcal{D}}_n)$ 
    in which Alice applies a random cyclic permutation $\pi$ on $A_1 J_1$, $\cdots$, 
    $A_n J_n$ \emph{before} $\mathcal{E}_n$, and Bob reverts $\pi$ \emph{after} $\mathcal{D}_n$.  (More precisely, $\pi = (1,\cdots,n)^k$ for a uniformly chosen $k \in \{0,\cdots,n-1\}$.)   
    Note that 
    (i) the qubit and the entanglement rates are the same as in the original protocol, and   
    (ii) the worst-case distortion for the new protocol is no more than the average-case distortion of the original protocol. 
    The latter holds because the new protocol results in the state $\tilde{\xi}^{B^nX^n}$ whose marginals  
    $\tilde{\xi}_n^{B_iX_i} = \frac{1}{n} \sum_{j=1}^n \xi_n^{B_jX_j}$ for all $i$, 
    where $\xi_n^{B^nX^n}$ is the resulting state by the original protocol,   
    because in the new protocol, 
    Alice and Bob just randomly reindex the subsystems 
    while keeping the original encoding and decoding. 
    Thus, 
    \begin{equation*}
        \Delta^{(n)}_{\mathrm{max}}(\tilde{\xi}_n^{B^nX^n})
        = \max_{i} \Delta(\tilde{\xi}_n^{B_iX_i}) 
        = \Delta\left(\frac{1}{n} \sum_{j=1}^n \xi_n^{B_jX_j}\right)
        \leq \frac{1}{n} \sum_{j=1}^n \Delta(\xi_n^{B_jX_j}) 
        = \Delta^{(n)}_{\mathrm{ave}}(\xi_n^{B^nX^n}), 
    \end{equation*}
    where the inequality follows from the convexity of the distortion $\Delta$ (Definition~\ref{def:distortion}). 
    Note that in the new protocol, Alice and Bob only need $\log_2 n$ bits of shared randomness, 
    so only negligible amount of shared randomness is required asymptotically. 
    In particular, when Alice and Bob use a part of their shared entanglement as the shared randomness,  
    they can still achieve the same asymptotic rate region for the average-case distortion and  
    the worst-case distortion without sacrificing the entanglement rate 
    since the entanglement rate consumed for the randomness is $(\log_2n)/n$.

    \item 
    Our results raise the natural question of the computability (analytically or numerically) of our rate-distortion functions for simple examples. 
    To see this, let us formalize the rate-distortion functions as optimization problems.  
    For example, the entanglement-assisted rate-distortion function $\mathscr{R}_{\mathrm{ea}}(D)$ can be obtained as the optimal value of the following problem:  
    \begin{equation}
        \begin{array}{cl}
            \mathrm{minimize} & \frac{1}{2}I(B:XX'R)_{\tau} \\
            \mathrm{subject\,\,to} & \tau^{BXX'R}=(\mathcal{N}\otimes \mathcal{I}^{XX'R}) \left(\proj{\psi}^{AJXX'R}\right) \\
            & \Delta \left(\left(\mathcal{N}\otimes \mathcal{I}^{X}\right)\left(\rho^{AJX}\right) \right) \leq D \\  
            & \mathcal{N}:\mathrm{A\,\,quantum\,\,channel\,\,from\,\,}AJ \mathrm{\,\,to\,\, } B.
        \end{array} 
    \end{equation}    
    This problem is convex because
    \begin{enumerate}
        \item $I(B:RXX')_{\tau}$ is convex with respect to $\cN$ (Lemma~\ref{lem:convexity}), 
        \item the distortion function $\Delta$ is convex with respect to $\cN$ (Definition~\ref{def:distortion}), and
        \item the set of quantum channels from $AJ$ to $B$ is convex.  
    \end{enumerate}
    Hence, we can use convex-optimization algorithms to evaluate this rate-distortion function numerically.
    Since $A$, $J$, and $B$ are finite-dimensional, the domain of optimization is compact and 
    the optimal value can be attained (note that a feasible solution is the identity channel from $A$ to $B$ and tracing out $J$).  
    
    On the other hand, however, in the unassisted case, it is hard to compute the rate-distortion function in general because the expression involves regularization. 
    Moreover, even if we consider the expression with some finite $k$ instead of the limit of $k\to \infty$, 
    the corresponding optimization problem is not convex since the entanglement of purification is neither convex nor concave~\cite{Terhal2002}.

\item 
So far, we have focused on the simple pure state side information 
in our framework and derivations.  
The most general side information for an ensemble source can be
written as $\{p_x, \rho_x^{AJ}\}_{x \in \Sigma}$, where $\rho^{AJ}_x$
is an extension of $\rho_x^{A}$ ($\tr_J \rho^{AJ}_x = \rho_x^{A}$).
In other words, the only restriction for side information is that,
removing it returns the original state.
Note that $\rho^{AJ}_x$ can be entangled across $A$ and $J$, and 
the above definition includes the pure state side information 
as a special case.  
Theorems~\ref{thm:assisted}, \ref{thm:unassisted}, and
\ref{thm:full_rate}, and results in Appendices \ref{sec:fd}
and \ref{sec:gd} can be extended to the most general 
side information, if $R$ is revised to purify $AJ$ for each $x$.  
Our proofs for pure state side information then apply to the most
general side information without change, because all the achieving
protocols and distortion maps in the rate distortion functions operate
on copies of $AJ$ without assumptions on the state on $AJ$.  Detailed
verifications are left to the interested readers.

\item Our extension of quantum rate-distortion theory to mixed states 
ensembles also provides some conclusions that cannot be obtained from
Refs.~\cite{Datta2013b,Datta2013c}.  One such result is proving the fact that 
the rate-distortion function is unaffected by the addition or removal 
of redundant parts to an ensemble.  
Here, we omit writing out the identity channel $\mathcal{I}$ 
when considering a quantum channel acting on a subsystem of a 
given state. More specifically, let the distortion $\Delta$ be 
$1-F$, and suppose 
we have a source
\begin{equation}~\label{eq:redundant_source}
    \omega^{C N Q J X} =
    \sum_x p_x \sum_c p_{c|x} \; \proj{c}^{C} \otimes \omega_c^{N} \otimes \rho_{cx}^{Q} 
    \otimes \proj{j_x}^{J} \otimes \proj{x}^X 
\end{equation}
with purification 
\begin{equation}~\label{eq:redundant_source_purified}
    \ket{\omega}^{C N Q J X X' R_C R_N R_Q}
    \coloneqq \sum_x \sqrt{p_x} \sum_c \sqrt{p_{c|x}} \ket{c}^{C}\ket{c}^{R_C} \ket{\omega_c}^{NR_N} \ket{\rho_{cx}}^{QR_Q} \ket{j_x}^{J} \ket{x}^X\ket{x}^{X'}, 
\end{equation}
where $R_C$ purifies $C$, $R_N$ purifies $N$, $R_Q$ purifies $Q$, 
and $X'$ purifies the distribution over $X$. 
We prove that the entanglement-assisted rate-distortion function of $\omega^{CNQX}$ is unchanged when we remove the redundant part $N$. 
\begin{theorem}~\label{thm:KI_redundant}
    Consider the distortion $\Delta = 1-F$ and the source $\omega^{C N Q J X}$ 
    with purification $\ket{\omega}^{C N Q J X X' R_C R_N R_Q}$ as defined above. 
    Recall that the rate-distortion function for the entanglement-assisted rate-distortion function $\mathscr{R}_{\mathrm{ea}}(D)$ is given by 
    \begin{equation}~\label{eq:KI_ea_rate}
        \mathscr{R}_{\mathrm{ea}}(D) = \min_{\mathcal{N}^{CNQJ \to CNQ}} \frac{1}{2} I(CNQ: R_CR_NR_QXX')_{\mathcal{N}(\proj{\omega})}, 
    \end{equation}
    where the minimum is taken over all quantum channels $\mathcal{N}: CNQJ \to CNQ$ 
    with $1 - F \left(\omega^{CNQX},  \mathcal{N}\left(\omega^{CNQJX}\right) \right) \leq D$. 
    The rate distortion function for this source can be simplified to 
    \begin{equation}
        \mathscr{R}_{\mathrm{ea}}(D) = \min_{\mathcal{M}^{CQJ \to CQ}} \frac{1}{2} I(CQ: R_CR_QXX')_{\mathcal{M}(\proj{\tilde{\omega}})}, 
    \end{equation}
    where 
    \begin{equation}
        \ket{\tilde{\omega}}^{CQJXX'R_CR_Q} \coloneqq \sum_x \sqrt{p_x} \sum_c \sqrt{p_{c|x}} \ket{c}^{C}\ket{c}^{R_C} \ket{\rho_{cx}}^{QR_Q} \ket{j_x}^{J} \ket{x}^X\ket{x}^{X'}
    \end{equation}
    and the minimum is taken over all quantum channels $\mathcal{M}: CQJ \to CQ$ 
    with $1 - F \left(\omega^{CQX},  \mathcal{M}\left(\omega^{CQJX}\right) \right) \leq D$.
\end{theorem}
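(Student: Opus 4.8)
The plan is to prove the equality $\mathscr{R}_{\mathrm{red}}(D)=\mathscr{R}_{\mathrm{ea}}(D)$, where $\mathscr{R}_{\mathrm{red}}(D)$ abbreviates the right-hand side of the claimed simplification, by producing for every channel feasible on one side a channel feasible on the other side whose mutual-information objective is no larger. All the work is done by the KI operation $\mathcal{K}_{\mathrm{on}}^{CQ\to CNQ}$ of Lemma~\ref{lem:kiop} (which satisfies $\mathcal{K}_{\mathrm{on}}(\omega^{CQX})=\omega^{CNQX}$) and the partial trace $\tr_N$ (which sends $\omega^{CNQJX}$ to $\omega^{CQJX}$), together with just two inequalities: monotonicity of fidelity under a channel applied to both arguments, and the data-processing inequality for mutual information (Lemma~\ref{lem:data_processing}). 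The one structural input is a description of $\mathcal{K}_{\mathrm{on}}$ at the level of purifications: its Stinespring dilation is an isometry $V^{CQ\to CNQE}$ whose environment $E$ carries exactly the reference register $R_N$, so that $(V\otimes\1)\ket{\tilde{\omega}}=\ket{\omega}$. (This pins down a convenient purification of each $\omega_c^N$; the one in the statement differs from it by at most a local unitary on the reference side, which leaves all entropic quantities below unchanged.)

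For $\mathscr{R}_{\mathrm{red}}(D)\le\mathscr{R}_{\mathrm{ea}}(D)$: given a feasible $\mathcal{N}^{CNQJ\to CNQ}$, set $\mathcal{M}^{CQJ\to CQ}:=\tr_N\circ\mathcal{N}^{CNQJ\to CNQ}\circ\mathcal{K}_{\mathrm{on}}^{CQ\to CNQ}$. Then $\mathcal{M}(\omega^{CQJX})=\tr_N\,\mathcal{N}(\omega^{CNQJX})$, and since $\mathcal{K}_{\mathrm{on}}(\omega^{CQJX})=\omega^{CNQJX}$ and $F$ is monotone under $\tr_N$, we get $F(\omega^{CQX},\mathcal{M}(\omega^{CQJX}))\ge F(\omega^{CNQX},\mathcal{N}(\omega^{CNQJX}))\ge 1-D$, so $\mathcal{M}$ is feasible. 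For the objective, the dilation identity gives $\mathcal{K}_{\mathrm{on}}(\proj{\tilde{\omega}})=\tr_{R_N}\proj{\omega}$, hence $\mathcal{M}(\proj{\tilde{\omega}})=\tr_{NR_N}\,\mathcal{N}(\proj{\omega})$, and tracing $N$ off Alice and $R_N$ off the reference in Lemma~\ref{lem:data_processing} yields $I(CQ:R_CR_QXX')_{\mathcal{M}(\proj{\tilde{\omega}})}\le I(CNQ:R_CR_NR_QXX')_{\mathcal{N}(\proj{\omega})}$. Minimizing over $\mathcal{N}$ gives the inequality.

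For the reverse inequality $\mathscr{R}_{\mathrm{ea}}(D)\le\mathscr{R}_{\mathrm{red}}(D)$: given a feasible $\mathcal{M}^{CQJ\to CQ}$, set $\mathcal{N}^{CNQJ\to CNQ}:=\mathcal{K}_{\mathrm{on}}^{CQ\to CNQ}\circ\mathcal{M}^{CQJ\to CQ}\circ\tr_N$. Feasibility of $\mathcal{N}$ follows as before, now using $\tr_N\,\omega^{CNQJX}=\omega^{CQJX}$, monotonicity of $F$, and $\mathcal{K}_{\mathrm{on}}(\omega^{CQX})=\omega^{CNQX}$. For the objective, data processing through the outer $\mathcal{K}_{\mathrm{on}}$ gives $I(CNQ:R_CR_NR_QXX')_{\mathcal{N}(\proj{\omega})}\le I(CQ:R_CR_NR_QXX')_{\mathcal{M}(\tr_N\proj{\omega})}$. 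The crucial observation is that $\tr_N\proj{\omega}$ is obtained from $\proj{\tilde{\omega}}$ by a channel $\mathcal{A}^{R_C\to R_CR_N}$ acting only on the reference register $R_C$, namely the reference-side mirror of $\mathcal{K}_{\mathrm{on}}$ that regenerates $R_N$ from the classical label stored in $R_C$ (now with $N$ in the role of the environment); since $\mathcal{A}$ and $\mathcal{M}$ act on disjoint systems they commute, so $\mathcal{M}(\tr_N\proj{\omega})=\mathcal{A}(\mathcal{M}(\proj{\tilde{\omega}}))$, and data processing through $\mathcal{A}$ gives $I(CQ:R_CR_NR_QXX')_{\mathcal{A}(\mathcal{M}(\proj{\tilde{\omega}}))}\le I(CQ:R_CR_QXX')_{\mathcal{M}(\proj{\tilde{\omega}})}$. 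Chaining the two estimates and minimizing over $\mathcal{M}$ finishes this direction, and the two inequalities together prove the theorem. This is the entanglement-assisted counterpart of the step in the proof of Theorem~\ref{thm:blind_unassisted} where the optimal channel is symmetrized through the KI decomposition.

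I expect the main obstacle to be the bookkeeping of the redundant reference register $R_N$: naively, enlarging the reference from $R_CR_QXX'$ to $R_CR_NR_QXX'$ can only raise a mutual information, so the crux is recognizing that moving between the two sources is, at the purified level, the action of a channel supported entirely on the reference register $R_C$ (which creates or deletes $R_N$), after which data processing supplies the inequality in the needed direction. The remaining care, namely that $\mathcal{K}_{\mathrm{on}}$ and its reference-side mirror act correctly on coherent superpositions over the block label $c$ and that the purification of each $\omega_c^N$ is immaterial up to a reference unitary, is routine once the dilation identity $(V\otimes\1)\ket{\tilde{\omega}}=\ket{\omega}$ is written out explicitly.
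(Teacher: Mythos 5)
Your proof is correct, and it follows the same overall strategy as the paper's: sandwich a feasible channel between KI attach/detach operations, get feasibility from monotonicity of the fidelity, and control the objective with data processing. The technical implementation differs in a way worth noting. The paper first proves a pinching lemma (its Lemma~\ref{lem:minimum_proj}), showing that precomposing an optimal channel with the projective measurement $\mathcal{P}^{C}$ preserves feasibility and optimality, and then in the harder direction writes out the output state $(\mathcal{M}\circ\mathcal{K}_{\mathrm{off}})(\proj{\omega})$ explicitly to observe that, conditioned on $R_C$, the register $R_N$ is in a product state with everything else, whence $I(CQ:R_CR_NR_QXX')=I(CQ:R_CR_QXX')$ and the $R_N$-traced state is identified with $(\mathcal{M}\circ\mathcal{P}^{C})(\proj{\tilde{\omega}})$. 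You bypass the pinching lemma entirely by fixing a coherent Stinespring dilation $V^{CQ\to CNQE}$ of $\mathcal{K}_{\mathrm{on}}$ (prepare $\ket{\omega_c}^{NR_N}$ controlled on $\ket{c}^{C}$) adapted to the purification, so that $(V\otimes\1)\ket{\tilde{\omega}}=\ket{\omega}$ holds exactly; the two identities you need, $\mathcal{K}_{\mathrm{on}}(\proj{\tilde{\omega}})=\tr_{R_N}\proj{\omega}$ and $\tr_{N}\proj{\omega}=\mathcal{A}(\proj{\tilde{\omega}})$ for a channel $\mathcal{A}^{R_C\to R_CR_N}$ supported on the reference, then reduce both directions to plain data-processing inequalities (using that $\mathcal{A}$ and $\mathcal{M}$ act on disjoint systems and hence commute). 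This packaging is somewhat cleaner: it replaces the paper's explicit conditional-product computation and the auxiliary optimality-preservation lemma with a single structural statement about where the redundant reference $R_N$ can be regenerated, and it makes transparent why enlarging the reference by $R_N$ costs nothing. Both arguments are sound; yours trades the paper's concrete state calculations for the (correct, and routine once $V$ is written down) verification that the controlled-preparation map is an isometry whose environment realizes $R_N$.
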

We provide a detailed proof in Appendix \ref{app:redundant-part-pf}.

A similar conclusion can be made in the unassisted case, with a detailed proof given in Appendix \ref{app:redundant_un}.  
    \begin{theorem}~\label{thm:KI_redundant_un}
        Consider the source $\omega^{C N Q J X}$ 
        with purification $\ket{\omega}^{C N Q J X X' R_C R_N R_Q}$ as defined above. 
        Recall that the unassisted rate-distortion function $\mathscr{R}(D)$ is given by 
        \begin{equation}
            \mathscr{R}(D) = \lim_{k\to\infty} \frac{1}{k} \min_{\mathcal{N}_k} E_p(C^kN^kQ^k: R_C^kR_N^kR_Q^kX^kX'^k)_{\mathcal{N}_{k}\left(\proj{\omega}^{\otimes k}\right)}, 
        \end{equation}
        where the minimum is taken over all quantum channels $\mathcal{N}_k: C^kN^kQ^kJ^k \to C^kN^kQ^k$ 
        with 
        \begin{equation}
            1 - \frac{1}{k}\sum_{j=1}^{k} F \left(\omega^{C_jN_jQ_jX_j}, \left(\mathcal{N}_{k}\left(\proj{\omega}^{\otimes k}\right)\right)^{C_jN_jQ_jX_j} \right) \leq D. 
        \end{equation}
        Then, the rate distortion function for this source can be simplified to 
        \begin{equation}
            \mathscr{R}(D) = \lim_{k\to\infty} \frac{1}{k} \min_{\mathcal{M}_k} E_p(C^kQ^k: R_C^kR_Q^kX^kX'^k)_{\mathcal{M}_{k}\left(\proj{\tilde{\omega}}^{\otimes k}\right)}, 
        \end{equation}
        where the minimum is taken over all quantum channels $\mathcal{M}_k: C^kQ^kJ^k \to C^kQ^k$ 
        with 
        \begin{equation}
            1 - \frac{1}{k}\sum_{j=1}^{k} F \left(\omega^{C_jQ_jX_j}, \left(\mathcal{M}_{k}\left(\proj{\tilde{\omega}}^{\otimes k}\right)\right)^{C_jQ_jX_j} \right) \leq D. 
        \end{equation}
    \end{theorem}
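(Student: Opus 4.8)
The plan is to show, for each fixed block length $k$, that
$\min_{\mathcal{N}_k} E_p(C^kN^kQ^k{:}R_C^kR_N^kR_Q^kX^kX'^k)_{\mathcal{N}_k(\proj{\omega}^{\otimes k})}$ equals
$\min_{\mathcal{M}_k} E_p(C^kQ^k{:}R_C^kR_Q^kX^kX'^k)_{\mathcal{M}_k(\proj{\tilde\omega}^{\otimes k})}$,
and then to divide by $k$ and pass to $k\to\infty$. The argument runs parallel to that of Theorem~\ref{thm:KI_redundant} (detailed in Appendix~\ref{app:redundant-part-pf}), but with the quantum mutual information replaced throughout by the entanglement of purification and the data-processing inequality replaced by its monotonicity analogue, Lemma~\ref{lem:monotonicity}; the genuinely new work is controlling purifying systems, which I flag below. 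Throughout I use the Koashi--Imoto block structure of Theorem~\ref{thm:ki1}, the KI operations $\mathcal{K}_{\mathrm{off}},\mathcal{K}_{\mathrm{on}}$ of Lemma~\ref{lem:kiop} applied with $A=CNQ$ (so $\mathcal{K}_{\mathrm{off}}^{CNQ\to CQ}$ strips the redundant register $N$ and $\mathcal{K}_{\mathrm{on}}^{CQ\to CNQ}$ restores it), and the elementary fact that $C^k$ is a classical register in both $\proj{\omega}^{\otimes k}$ and $\proj{\tilde\omega}^{\otimes k}$, being decohered by its copy in $R_C^k$.

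For the direction ``removing $N$ does not raise the rate'', I start from a feasible $\mathcal{N}_k\colon C^kN^kQ^kJ^k\to C^kN^kQ^k$ and build $\mathcal{M}_k$ by first using the classicality of $C^k$ to recreate the block-diagonal register $N^k$ in the state $\bigotimes_i\omega_{c_i}^N$, then applying $\mathcal{N}_k$, then discarding the output $N^k$. Feasibility follows because discarding $N^k$ and restoring it this way can only increase the per-copy fidelity with $\omega^{C_jN_jQ_jX_j}$, since that ideal state has exactly the Koashi--Imoto block form so fidelities factorize over the value of $c$; and the rate bound follows from monotonicity of $E_p$ (Lemma~\ref{lem:monotonicity}) under tracing out $N^k$ on the left and $R_N^k$ on the right.

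For the converse direction ``restoring $N$ does not raise the rate'', I start from a feasible $\mathcal{M}_k\colon C^kQ^kJ^k\to C^kQ^k$, first compose it with a dephasing of its output $C^k$ in the basis $\{\ket{c}\}$ (harmless: it does not raise $E_p$ by Lemma~\ref{lem:monotonicity}, and does not lower the fidelity with the $C$-block-diagonal ideal state), and then set $\mathcal{N}_k := \mathcal{K}_{\mathrm{on}}^{\otimes k}\circ(\text{dephased }\mathcal{M}_k)\circ(\text{discard input }N^k)$, where $\mathcal{K}_{\mathrm{on}}$ is taken to act as ``append $\omega_c^N$ controlled on the classical label $c$''. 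Feasibility is again immediate from the block structure. For the rate, I purify $\mathcal{N}_k(\proj{\omega}^{\otimes k})$ and split its purifying system into the environment inherited from $\mathcal{M}_k$ (on which I let the $E_p$-optimal channel for the dephased $\mathcal{M}_k$ act), a fresh ancilla $\nu^k$ purifying the restored register $N^k$, and an ancilla $\mu^k$ purifying the source's reference $R_N^k$; I retain $\nu^k$ and discard $\mu^k$. Since $N^k\nu^k$ is pure conditioned on the classical register $C^k$ and decoupled from everything else given $C^k$, it contributes zero conditional entropy, so the resulting entropy equals that of the dephased $\mathcal{M}_k$, which is at most the value of $\mathcal{M}_k$.

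I expect the last step to be the main obstacle: one must reconcile the fact that the source's reference $R_N^k$ is tied to the \emph{input} value of $C^k$ while the restored $N^k$ is tied to the \emph{output} value of $C^k$, and verify that the claimed three-way split of the purifying system is legitimate despite the off-diagonal $C$-coherences present in $\proj{\omega}^{\otimes k}$ but damped relative to $\proj{\tilde\omega}^{\otimes k}$. This is precisely why the dephasing of the output $C^k$ is inserted and why $\mu^k$ is thrown away rather than used; once both occurrences of $C^k$ are genuinely classical, the Koashi--Imoto decomposition makes both the fidelity and the entropy factorize over $c$, giving the per-$k$ identity. Taking $\tfrac1k$ and the limit $k\to\infty$ then yields the stated equality of the two unassisted rate-distortion functions.
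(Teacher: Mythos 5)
Your proposal is correct in outline and follows essentially the same route as the paper's Appendix~\ref{app:redundant_un}: a per-$k$ identity between the two minimizations, obtained by conjugating with the KI operations $\mathcal{K}_{\mathrm{on}},\mathcal{K}_{\mathrm{off}}$, with feasibility from monotonicity of the fidelity and the rate bound from monotonicity of $E_p$ plus purification bookkeeping. One step needs tightening. In your first direction you justify the rate bound by claiming that tracing out $N^k$ (on the system side) and $R_N^k$ (on the reference side) turns $\mathcal{N}_k(\proj{\omega}^{\otimes k})$ into $\mathcal{M}_k(\proj{\tilde{\omega}}^{\otimes k})$, on the grounds that $C^k$ is ``decohered by its copy in $R_C^k$.'' That is true only for reduced states from which $R_C^k$ has been traced out; the global state $\proj{\omega}^{\otimes k}$ retains joint $C^kR_C^k$ coherences, and these survive the partial trace over $N^kR_N^k$ with weights $\tr_{R_N}\!\left[\ketbra{\omega_c}{\omega_{c'}}^{NR_N}\right]$, whereas your reconstructed state $\mathcal{M}_k(\proj{\tilde{\omega}}^{\otimes k})$ is $c$-diagonal (the controlled re-creation of $N^k$ necessarily dephases $C^k$). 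So the two states do not coincide as claimed. The repair is exactly the paper's Lemma~\ref{lem:minimum_proj_un}: replace the optimal $\mathcal{N}_k$ by $\mathcal{N}_k\circ(\mathcal{P}^{C})^{\otimes k}$, which remains feasible and optimal because dephasing the input $C^k$ acts identically to dephasing the reference copy $R_C^k$ on this source and hence cannot increase $E_p$; after that insertion your partial-trace identity and the monotonicity argument go through. Your converse direction --- dephasing the output $C^k$, retaining the purifier $\nu^k$ of the restored $N^k$ so that $N^k\nu^k$ contributes zero entropy conditioned on the classical $C^k$, and discarding the old input $N^k$ --- is a sound variant of the paper's choice of $\tilde{\Lambda}\circ\Tr_{N}$ as the channel on the purifying system, and the entropy identity holds because $S(C^kQ^kF)$ depends only on the reduced input on $C^kQ^kJ^k$, which is the same in both worlds.
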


    The above discussion relies on the structure of the ensemble, so, 
    it is more difficult to consider the most general side information. 
    It turns out that the above results concerning 
    the addition and removal of redundant parts  
    in the KI decomposition indeed still apply to general side information 
    that entangles only with the quantum part $Q$. 
    More specifically, suppose that we have a source 
    \begin{equation*}
        \omega^{C N Q J X} =
        \sum_x p_x \sum_c p_{c|x} \; \proj{c}^{C} \otimes \omega_c^{N} \otimes \rho_{cx}^{QJ} 
        \otimes \proj{x}^X 
    \end{equation*}
    with purification 
    \begin{equation*}
        \ket{\omega}^{C N Q J X X' R_C R_N R_Q}
        \coloneqq \sum_x \sqrt{p_x} \sum_c \sqrt{p_{c|x}} \ket{c}^{C}\ket{c}^{R_C} \ket{\omega_c}^{NR_N} \ket{\rho_{cx}}^{QJR_Q} \ket{x}^X\ket{x}^{X'}. 
    \end{equation*}
    Then, the statements in Theorems~\ref{thm:KI_redundant} and \ref{thm:KI_redundant_un} hold for this source as well. 
    For example, the entanglement-assisted rate distortion function 
    $\mathscr{R}_{\mathrm{ea}}(D)$ with the fidelity distortion $\Delta = 1 - F$
    is given as 
    \begin{equation*}
        \mathscr{R}_{\mathrm{ea}}(D) = \min_{\mathcal{M}^{CQJ \to CQ}} \frac{1}{2} I(CQ: R_CR_QXX')_{\mathcal{M}(\proj{\tilde{\omega}})} 
    \end{equation*}
    with 
    \begin{equation*}
        \ket{\tilde{\omega}}^{CQJXX'R_CR_Q} \coloneqq \sum_x \sqrt{p_x} \sum_c \sqrt{p_{c|x}} \ket{c}^{C}\ket{c}^{R_C} \ket{\rho_{cx}}^{QJR_Q} \ket{j_x}^{J} \ket{x}^X\ket{x}^{X'}, 
    \end{equation*}
    where the minimum is taken over all quantum channels $\mathcal{M}: CQJ \to CQ$ 
    with $1 - F \left(\omega^{CQX},  \mathcal{M}\left(\omega^{CQJX}\right) \right) \leq D$.
    The proof is almost the same as that of Theorem~\ref{thm:KI_redundant}, 
    for the channel $\mathcal{M}^{CQJ \to CQ}$ in Theorem~\ref{thm:KI_redundant} 
    can take any state on $CQJ$, in particular, even if the state on $QJ$ is entangled. 
    The unassisted rate distortion function can also be characterized similarly. 
    
\end{itemize}

We end the paper with several interesting future directions.
In Theorem~\ref{thm:blind_unassisted}, we proved a relation between Koashi and Imoto's optimal blind compression rate 
and our rate-distortion function. 
On the other hand, since the optimal visible compression rate was analyzed under global error criterion, there is no direct connection between the optimal rate and our rate-distortion function in the visible case. 
In addition, it would be interesting to see if our rate-distortion functions are additive or not. Some of these rate-distortion functions are expressed using multi-letter formulae, and it would be highly useful if there are simplified characterizations.
In this paper, we consider the average (and worst-case) local error, for which we obtained continuous rate distortion functions in all the settings considered, so, strong converse does not hold.  
For mixed state compression under global error, significant progress in obtaining a strong converse has been reported in the unassisted case \cite{Khanian2022}, and results in \cite{Bennett2014a} provide a promising direction for the assisted case.  Further research in these, and also in the general $(Q,E)$ rate region are important open problems to be tackled.
More recently, an alternative approach to
rate-distortion theory has been proposed in \cite{Atif2023}. Their
framework differs significantly from most of the existing work:
a distortion map is applied to the reference system, and it will be
interesting to find potential connections between their framework and ours.
Finally, we believe that our approach taken in this paper can be applied to other setups that involves data processing of mixed states. 
It would be highly important to see what type of trade-off can be seen in other quantum information processing setups. 

%%%%%%%%%%%%%%%%%%%%%%%%%%%%%%%%%%%%%%%%%%%%%

\section*{Acknowledgements}
We are grateful to Masato Koashi for suggesting the equivalence of 
the average-case and the worst-case distortions using random 
permutations. 
We thank Eric Chitambar for bringing up the consideration of 
entangled side information. 
We thank Andreas Winter and Robert K\"{o}nig for additional helpful discussions. 
Z.~B.~K. was supported by
the DFG cluster of excellence 2111 (Munich Center for Quantum
Science and Technology) and the Marie Sk{\l}odowska-Curie Actions 
(MSCA) European Postdoctoral Fellowships (Project 101068785-QUARC). 
K.\ K. was supported by a Mike and Ophelia Lazaridis Fellowship and a 
Funai Overseas Scholarship,
and K.\ K and D.\ L are supported by 
an NSERC discovery grant.

\printbibliography[heading=bibintoc]

\appendix
\subsection{Convexity and continuity of the entanglement-assisted rate-distortion function}~\label{sec:fd}
\begin{lemma}~\label{lem:rate_convex}
   Let $f(D)$ be defined as in the beginning of the proof of  
   Theorem \ref{thm:assisted}.  
   For $D_1,D_2 > 0$ and $\lambda \in [0,1]$, 
   \begin{equation*}
        f(\lambda D_1 + (1-\lambda)D_2) 
        \leq \lambda f(D_1) + (1-\lambda) f(D_2).  
   \end{equation*}
\end{lemma}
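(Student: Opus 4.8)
The plan is to exploit the convexity ingredients already assembled in the excerpt: the distortion $\Delta$ is convex (property~(2) of Definition~\ref{def:distortion}), the quantum mutual information $I(B:XX'R)$ is convex with respect to the channel producing $B$ (Lemma~\ref{lem:convexity}), and the set of channels $AJ\to B$ is convex. Given $D_1,D_2>0$ and $\lambda\in[0,1]$, I would first dispose of the trivial cases $\lambda\in\{0,1\}$ and then fix $\lambda\in(0,1)$.

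Next I would pick channels $\mathcal{N}_1,\mathcal{N}_2:AJ\to B$ attaining the minima defining $f(D_1)$ and $f(D_2)$ (these exist since the channel set is compact in finite dimension and both the objective and $\Delta$ are continuous), so that $\Delta\big((\mathcal{N}_i\otimes\cid^X)(\rho^{AJX})\big)\le D_i$ for $i=1,2$. Form the convex combination $\mathcal{N}_\lambda := \lambda\mathcal{N}_1 + (1-\lambda)\mathcal{N}_2$, which is again a channel $AJ\to B$. By convexity of $\Delta$,
\[
\Delta\big((\mathcal{N}_\lambda\otimes\cid^X)(\rho^{AJX})\big)
\le \lambda\,\Delta\big((\mathcal{N}_1\otimes\cid^X)(\rho^{AJX})\big) + (1-\lambda)\,\Delta\big((\mathcal{N}_2\otimes\cid^X)(\rho^{AJX})\big)
\le \lambda D_1 + (1-\lambda)D_2,
\]
so $\mathcal{N}_\lambda$ is feasible for the optimization defining $f(\lambda D_1+(1-\lambda)D_2)$.

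It then remains to bound the objective. Writing $\tau_i := (\mathcal{N}_i\otimes\cid^{XX'R})(\proj{\psi})$ and $\tau_\lambda := (\mathcal{N}_\lambda\otimes\cid^{XX'R})(\proj{\psi}) = \lambda\tau_1 + (1-\lambda)\tau_2$, Lemma~\ref{lem:convexity} gives
\[
I(B:XX'R)_{\tau_\lambda} \le \lambda\, I(B:XX'R)_{\tau_1} + (1-\lambda)\, I(B:XX'R)_{\tau_2}.
\]
Since $\mathcal{N}_\lambda$ is feasible, $f(\lambda D_1+(1-\lambda)D_2)\le \tfrac12 I(B:XX'R)_{\tau_\lambda}$; combining this with the previous inequality and the optimality of $\mathcal{N}_1,\mathcal{N}_2$ yields $f(\lambda D_1+(1-\lambda)D_2)\le \lambda f(D_1)+(1-\lambda)f(D_2)$, as claimed.

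I do not expect a serious obstacle here; the only points requiring a word of care are that $\mathcal{N}_1$ and $\mathcal{N}_2$ act between the same fixed spaces (which holds because $B\sim A$ is fixed by the problem, so their convex combination is well defined), the existence of the minimizers (compactness plus continuity of the objective and of $\Delta$), and the fact that Lemma~\ref{lem:convexity} is stated only for $\lambda\in(0,1)$, so the endpoint cases must be checked separately by inspection.
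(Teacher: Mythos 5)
Your proposal is correct and follows essentially the same route as the paper's proof: take optimal channels for $f(D_1)$ and $f(D_2)$, form their convex combination, use convexity of $\Delta$ for feasibility and Lemma~\ref{lem:convexity} for the objective. The extra remarks about endpoint cases and existence of minimizers are fine but not needed beyond what the paper already assumes.
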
 
\begin{proof}
    For any state $\psi^{AJXX'R}$, let $\mathcal{N}^{AJ\to B}_1$ and $\mathcal{N}^{AJ\to B}_2$ 
    be quantum channels achieving 
    \begin{align*}
        f(D_1) &= \frac{1}{2} I(B:XX'R)_{(\mathcal{N}_1 \otimes \cid)(\psi)}, \\ 
        f(D_2) &= \frac{1}{2} I(B:XX'R)_{(\mathcal{N}_2 \otimes \cid)(\psi)},  
    \end{align*} 
    with $\Delta(\tr_{X'R}(\mathcal{N}_i \otimes \cid)(\psi)) \leq D_i$ for $i=1,2$.   
    Then, consider the channel $\lambda\mathcal{N}_1 + (1-\lambda)\mathcal{N}_2$.  
    By convexity of $\Delta$, 
    \begin{equation}    
    \Delta \left( \; \tr_{X'R}((\lambda\mathcal{N}_1 + (1-\lambda)\mathcal{N}_2)\otimes \cid)(\psi) \; \right) \leq \lambda D_1 + (1-\lambda)D_2 \,.   
    \label{eq:distbdd} 
    \end{equation} 
    We have the following chain of inequalities 
    \begin{equation*}
        \begin{aligned}
            f(\lambda D_1 + (1-\lambda)D_2) 
            &\leq \frac{1}{2} I(B:XX'R)_{((\lambda\mathcal{N}_1 + (1-\lambda)\mathcal{N}_2)\otimes \cid)(\psi)} \\
            &\leq \lambda  \frac{1}{2} I(B:XX'R)_{(\mathcal{N}_1 \otimes \cid)(\psi)} 
             + (1-\lambda) \frac{1}{2} I(B:XX'R)_{(\mathcal{N}_2 \otimes \cid)(\psi)} \\ 
            &= \lambda f(D_1) + (1-\lambda) f(D_2),  
        \end{aligned}
    \end{equation*}
    where the first inequality comes from Eq.~(\ref{eq:distbdd}) and the minimizing 
    definition of $f(D)$, 
    and the second inequality follows from Lemma~\ref{lem:convexity}. 
\end{proof}

This lemma implies the continuity of $f(D)$.  
Using this lemma and Theorem \ref{thm:assisted}, 
$\R_{\rm ea}(D)$ is convex and continuous.  

\subsection{Properties of the unassisted rate-distortion function}~\label{sec:gd}
\begin{lemma}~\label{lem:gdcts}
   Consider the functions $g(D)$, $g_k(D)$ defined at the beginning of the proof of  
   Theorem \ref{thm:unassisted} and $D>0$.  Then, 
   (1) $g(D)$ is continuous, (2) $g_k(D)$ converges to $g(D)$ uniformly, 
   and (3) for sufficiently large $k$, $g_k(D)$ is continuous.  
\end{lemma}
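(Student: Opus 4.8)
The plan is to follow the spirit of the $f(D)$ argument in Appendix~\ref{sec:fd}, but to work around the fact that the entanglement of purification is not convex in the encoding channel $\mathcal N_k$, so that Lemma~\ref{lem:convexity} is unavailable here: every ``convex combination'' of strategies will be realized physically, as a tensor product of protocols run on disjoint blocks of copies, using only that $E_p$ is subadditive on product states (which is immediate from Theorem~\ref{thm:entanglement_of_purification}).

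First I would record elementary facts about $g_k$: (i) $g_k(D)$ is non-increasing in $D$, since enlarging $D$ enlarges the feasible set of channels in~(\ref{eq:error-ua}); (ii) $0\le g_k(D)\le\log\dim(A)$, since $E_p(B^k:X^kX'^kR^k)_{\tau_k}\le S(B^k)_{\tau_k}\le k\log\dim(A)$ by Lemma~\ref{lem:upper_bound}; (iii) $g_k$ is constant on $[\Delta_{\max},\infty)$, where $\Delta_{\max}$ denotes the finite maximum value of $\Delta$, because there the constraint~(\ref{eq:error-ua}) is vacuous; and (iv) the minimum defining $g_k(D)$ is attained, the feasible set being a nonempty compact set of channels (e.g.\ $\cid^{A\to B}\otimes\tr^J$ applied to each copy is feasible) on which $E_p$ depends continuously. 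The crucial structural input is a block-combining inequality: given feasible channels $\mathcal N_{k_1}$ for $(k_1,D_1)$ and $\mathcal N_{k_2}$ for $(k_2,D_2)$, the tensor product $\mathcal N_{k_1}\otimes\mathcal N_{k_2}$ is feasible for $k_1+k_2$ copies at distortion $\frac{k_1D_1+k_2D_2}{k_1+k_2}$ — exactly the average-case distortion of the combined protocol — and since the corresponding state factorizes as $\tau_{k_1}\otimes\tau_{k_2}$ and $E_p$ is subadditive on product states,
\[
(k_1+k_2)\,g_{k_1+k_2}\!\left(\tfrac{k_1D_1+k_2D_2}{k_1+k_2}\right)\;\le\;k_1\,g_{k_1}(D_1)+k_2\,g_{k_2}(D_2).
\]
Specializing to $D_1=D_2=D$ makes $k\mapsto k\,g_k(D)$ subadditive and bounded below, so Fekete's lemma gives $g(D)=\lim_k g_k(D)=\inf_k g_k(D)$; in particular $g_k\ge g$ pointwise and, by repeating an optimal $k$-copy protocol $m$ times, $g_{mk}(D)\le g_k(D)$.

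For part~(1), I would apply the block-combining inequality with $k_1=\ell_1 k$, $k_2=\ell_2 k$ and near-optimal $k$-copy protocols for $D_1,D_2$, getting $g_{(\ell_1+\ell_2)k}(\lambda D_1+(1-\lambda)D_2)\le\lambda g_k(D_1)+(1-\lambda)g_k(D_2)$ with $\lambda=\ell_1/(\ell_1+\ell_2)$, and then let $k\to\infty$, using $g_k\to g$ pointwise and $g_{(\ell_1+\ell_2)k}\ge g$, to obtain rational convexity of $g$; being rationally convex and bounded on the open interval $(0,\infty)$, $g$ is then convex (Bernstein--Doetsch) and hence continuous there. For part~(2), fix $D_0>0$ and $\epsilon>0$; by (iii) it suffices to get uniform convergence on the compact interval $[D_0,\Delta_{\max}]$. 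Using the continuity of $g$ from part~(1), I would pick a finite net $D_0=D^{(0)}<\cdots<D^{(M)}=\Delta_{\max}$ across which $g$ varies by at most $\epsilon$, choose $k_j$ with $g_{k_j}(D^{(j)})\le g(D^{(j)})+\epsilon$, let $k^\ast$ be a common multiple of $k_0,\dots,k_M$ so that $g_{k^\ast}(D^{(j)})\le g(D^{(j)})+\epsilon$ for every $j$ (monotonicity along multiples), and then deduce from monotonicity in $D$ and net-fineness that $g_{k^\ast}(D)\le g(D)+2\epsilon$ throughout; a remainder estimate $g_k(D)\le g_{k^\ast}(D)+\frac{k^\ast\log\dim(A)}{k}$ — run $q$ blocks of the optimal $k^\ast$-protocol and any feasible protocol on the leftover $r<k^\ast$ copies, where $k=qk^\ast+r$ — upgrades this to $g_k(D)\le g(D)+3\epsilon$ for all sufficiently large $k$, which with $g_k\ge g$ is uniform convergence on $[D_0,\infty)$ (all that the applications in Section~\ref{subsec:unassisted} require, as $D$ there is a fixed positive number). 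Finally part~(3) follows from parts~(1) and~(2) by the standard $3\epsilon$-argument: choosing $K$ with $\|g_k-g\|_\infty\le\epsilon$ on $[D_0,\infty)$ for $k\ge K$ and $\delta$ with $|g(D)-g(D')|\le\epsilon$ for $|D-D'|\le\delta$, one gets $|g_k(D)-g_k(D')|\le 2\epsilon+|g(D)-g(D')|\le 3\epsilon$ whenever $k\ge K$ and $|D-D'|\le\delta$.

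The hard part is precisely the non-convexity of $E_p$ in the channel, which blocks any one-line analogue of Lemma~\ref{lem:rate_convex}; routing everything through tensor products of block protocols recovers only approximate convexity, and only monotonicity of $g_k$ along multiples of $k$ rather than in $k$ itself, so extracting genuine uniform convergence (part~(2)) — via the finite-net / common-multiple device together with the continuity of the limit $g$ — is the technically delicate step. The possible misbehaviour of $g_k$ as $D\downarrow0$ is exactly why the lemma, and its uses, are restricted to $D>0$ (cf.\ Remark~\ref{rem:d0case}).
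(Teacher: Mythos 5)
Your proposal is correct, and its skeleton coincides with the paper's: the single structural input in both arguments is the block-combining inequality
$(k_1+k_2)\,g_{k_1+k_2}\bigl(\tfrac{k_1D_1+k_2D_2}{k_1+k_2}\bigr)\le k_1\,g_{k_1}(D_1)+k_2\,g_{k_2}(D_2)$,
obtained by tensoring protocols on disjoint blocks and using subadditivity of $E_p$ on product states (the paper states this for $D_1=D_2$ and uses the mixed-distortion version via the channel $\cN_{1,k}\otimes(\cN_{2,k})^{\otimes l}$), and your parts (2) and (3) use the same finite-net/common-multiple device and the same $3\epsilon$ argument as the appendix. The genuine differences are in part (1) and in the strength of part (2). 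For (1), the paper runs a hands-on $\epsilon$-argument for one-sided continuity at each interior point (choosing $D'=(D_1+lD_2)/(l+1)$ for continuity from below and $D_3$ with $D_2=(D'+D_3)/2$ for continuity from above), whereas you pass to rational convexity of the limit $g$ and invoke Bernstein--Doetsch; this is cleaner, is equally rigorous since $g$ is bounded and monotone, and yields convexity of $g$ as a by-product, which the paper only asserts by deferring to the ideas of Ref.~\cite{Datta2013b}. Your use of Fekete's lemma for the existence of the limit (giving $g=\inf_k g_k$ and hence $g_k\ge g$) replaces, and tidies, the paper's limsup/liminf contradiction argument. For (2), the paper's proof as written produces, for each $\epsilon$, one value $k=k_1\cdots k_m$ with $\sup_D (g_k(D)-g(D))\le 2\epsilon$ --- a uniformly convergent subsequence, which is all that the proof of Theorem~\ref{thm:unassisted} actually invokes; your remainder estimate $g_k(D)\le g_{k^\ast}(D)+k^\ast\log\dim(A)/k$ for $k=qk^\ast+r$ upgrades this to convergence of the full sequence, which is what the lemma literally claims. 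You are also more careful at the left end of the interval: the paper's net $0<D_1\le\cdots\le D_m$ leaves $(0,D_1)$ uncovered, and uniform convergence there is problematic if $g_k(0^+)$ stays bounded away from $\lim_{D\to0^+}g(D)$ (cf.\ Remark~\ref{rem:d0case}); restricting to $[D_0,\infty)$ for an arbitrary fixed $D_0>0$, as you do, is the honest reading and suffices for every application in Section~\ref{subsec:unassisted}.
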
 
\begin{proof}

We first observe and sketch the derivations of some useful properties of 
  $g_k(D)$. Next, we show that for each $D$, $g_k(D)$ converges, so $g(D)$ is
  well-defined. Then we establish some additional properties of $g(D)$.

Since the distortion $\Delta$ is convex and bounded (see the discussion 
after Definition \ref{def:distortion}), the domain for each $g_k(D)$ is an 
interval $[0,D_{\rm max}]$.  Furthermore, each $g_k(D)$ is bounded, and 
decreasing with $D$, with $g_k(D) = 0$ in some interval $[D_k,D_{\max}]$ 
for some $D_k\geq 0$.
Consider a fixed $D$ and the sequence $\{g_k(D)\}_k$.  Note that the 
sequence may not be decreasing (with $k$).  Instead, we have the property that 
for any natural number $l$, $g_{lk}(D) \leq g_k(D)$; to
see this, let $\mathcal{N}^{A^kJ^k\to B^k}_k$ be a minimizing map for
$g_k$, then $\left( \mathcal{N}^{A^kJ^k\to B^k}_k \right)^{\otimes l}$
is a feasible solution for the minimization for $g_{lk}$ because both
types of local error criteria are preserved, and because the
entanglement of purification is subadditive.
Similarly, $g_{k + l}(D) \leq \frac{k}{k+l} g_k(D) + \frac{l}{k+l} g_l(D)$.   
In particular, if we compare $g_{lk+r}(D)$ with $g_k(D)$ 
for some $r \in \{1,2,\cdots,k-1\}$, $g_{lk+r}(D)$ cannot be larger
than $g_k(D)$ by a constant divided by $l$. 
The above properties imply that $\{g_k(D)\}_k$ converges.  To  
see this, first there are converging subsequences of $\{g_k(D)\}_k$ 
because the range is compact.  Second if the limsup is strictly
larger than the liminf, say, by an amount $\alpha$, we can take 
$k$ with $g_k(D)$ no more than $\alpha/100$ from the liminf, and 
take some very large $l$ so that $g_{kl+r}(D)$ is no less than 
$\alpha/100$ from the limsup, which gives a contradiction for 
very large $l$.  So, $g(D)$ is well-defined. 
For each $k$, from the definition of $g_k(D)$, $g_k$ is decreasing
with $D$ so $g(D)$ is also a decreasing function of $D$.

{\bf Proof of (1):} At the beginning of the appendix of
Ref.~\cite{Datta2013b}, the authors proved the convexity of their
rate distortion function $R^q(D)$ for pure state compression.
Their ideas for the convexity proof apply also to our expression
$g(D)$.
However, since we do not have a bound for the domain 
for the minimization for $g(D)$, the convexity cannot be used to show
continuity.

Here, we provide a direct proof for the continuity of $g(D)$, partly 
by adapting the ideas in Ref.~\cite{Datta2013b}.  
Fix $\epsilon >0$.  
Let $D_2$ be an interior point in $[0,D_{\rm max}]$, and we fix some
$D_1 \in [0,D_{\rm max}]$ so that $D_1 < D_2$.
For each $i=1,2$, let $k_i$ satisfy $0 \leq g_{k_i}(D_i) - g(D_i) \leq \epsilon$.
Let $k=k_1k_2$ so that
$0 \leq g_{k}(D_i) - g(D_i) \leq g_{k_i}(D_i) - g(D_i) \leq \epsilon$.  
To show that $g$ is continuous at $D_2$ from below,
let $k' = k(l+1)$ for some natural number $l$ to be chosen later,
$D' = (D_1 + l D_2)/(l+1)$, and $\cN_{i,k}$ be the optimal
channel for $g_k(D_i)$.  Then, on $k'$ copies,
the channel $\cN_{1,k} \otimes (\cN_{2,k})^{\otimes l}$ achieves
an \emph{average} local error criterion no more than $D'$,
and an entanglement of purification at most
$k g_k(D_1) + k l g_k(D_2)$, so,
$g_{k(l+1)}(D') \leq  (g_k(D_1) + l g_k(D_2))/(l+1)$.  
Putting the above together,
\begin{align*}
0 
& \leq g(D')-g(D_2) 
\\ & \leq g_{k(l+1)}(D') - g(D_2)  
\\ & \leq (g_k(D_1) + l g_k(D_2))/(l+1) - g(D_2)
\\ & = (g_k(D_1) - g_k(D_2))/(l+1) +  g_k(D_2) - g(D_2)
\\ & \leq (g_k(D_1) - g_k(D_2))/(l+1) + \epsilon 
\\ & \leq 2 \epsilon
\end{align*}
if we choose $l \geq \lceil (g_k(D_1) - g_k(D_2)) / \epsilon \rceil$.
Finally, by monotonicity of $g$, $\forall D \in (D',D_2)$,
$0 \leq g(D)-g(D_2) \leq 2 \epsilon$.

To show that $g$ is continuous at $D_2$ from above, keep the above notations and further 
define $D_3$ to satisfy $D_2 = (D'+D_3)/2$, and define $k_3$ so that 
$g_{k_3}(D') - g(D') \leq \epsilon$ and $g_{k_3}(D_3) - g(D_3) \leq \epsilon$.  
Then, $g(D_2) \leq g_{2k_3}(D_2) \leq ( g_{k_3}(D') +  g_{k_3}(D_3) ) / 2 
\leq ( g(D') +  g(D_3) ) / 2 + \epsilon \leq  ( g(D_2) +  g(D_3) ) / 2 + 2 \epsilon$.  
Rearranging the terms, $g(D_2) - g(D_3) \leq 4 \epsilon$, and by monotonicity, 
$g(D_2) - g(D) \leq 4 \epsilon$ for all $D \in [D_2,D_3]$.  

We do not know if $g(D)$ is continuous at $D=0$, but $g_k(D)$ 
and $g(D)$ are continuous at $D=D_{\max}$.  For blind compression with 
trivial side information, $g_k(D)$ and $g(D)$ are continuous at $D=0$. 

{\bf Proof of (2):} Fix any $\epsilon > 0$.  
Since $g(D)$ is decreasing and continuous, we can find 
$0 < D_1 \leq D_2 \leq \cdots \leq D_m$ so that 
$g(D_i) - g(D_{i+1}) \leq \epsilon$ and $g(D_m) = 0$.  
Let $k_i$ be such that $g_{k_i}(D_i) - g(D_i) \leq \epsilon$. 
Let $k = k_1 k_2 \cdots k_m$.  Then, 
$g_{k}(D_i) - g(D_i) \leq \epsilon$ for all $i$.
Then, for each $D \in [D_i,D_{i+1}]$, 
$$0 \leq g_k(D) - g(D) \leq g_k(D_i) - g(D_{i+1}) 
\leq g(D_i) + \epsilon - g(D_{i+1}) \leq 2\epsilon,$$
where the first inequality comes from the monotonicity of 
$g$ and $g_k$, and the second inequality comes from the 
choice of $k$, and the last inequality comes from the 
choice of the $D_i$'s.  

{\bf Proof of (3):} Fix any $\epsilon > 0$.  Using uniform convergence proved in part
(2), there exists a sufficiently large $k$ so that for all $D$,
$g_k(D) - g(D) \leq \epsilon$.  Let $D_2$ be any interior point
of $[0, D_{\rm max}]$.  By (1) let $D_1 < D_2 < D_3$ satisfy 
$g(D_{i})-g(D_{i{+}1}) \leq \epsilon$ for $i=1,2$.  
Then, $g_k(D_i) - g_k(D_{i{+}1}) \leq g(D_i) + \epsilon - g(D_{i{+}1}) \leq 2 \epsilon$. 
So, $g_k$ is continuous. 

\end{proof}

Together with Theorem \ref{thm:unassisted}, the above establishes that $\R(D)$
is convex and continuous in $D$ for $D>0$.

\subsection{Proof of Theorem \ref{thm:KI_redundant}}
\label{app:redundant-part-pf}

In the proof, we assume all definitions and notations defined in 
Theorem \ref{thm:KI_redundant}.  
We first prove the following lemma. 
\begin{lemma}~\label{lem:minimum_proj}
    Consider the setup stated in  Theorem \ref{thm:KI_redundant}.  
    Let $\mathcal{N}: CNQJ \to CNQ$ be a channel achieving the minimum of Eq.~\eqref{eq:KI_ea_rate} 
    for distortion $D$. 
    Let $\mathcal{P}^{C}: C \to C$ be a channel corresponding to the projective measurement on $C$ 
    with respect to the basis $\{\ket{c}^{C}\}_{c}$; that is, 
    $\mathcal{P}^{C}(\xi) = \sum_{c} \bra{c}\xi \ket{c} \proj{c}^{C}$.
    Then, $\mathcal{N}\circ \mathcal{P}^{C}$ 
    also achieves the minimum of Eq.~\eqref{eq:KI_ea_rate}. 
\end{lemma}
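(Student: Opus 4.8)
The plan is to verify that $\mathcal{N} \circ \mathcal{P}^{C}$ is a \emph{feasible} channel for the minimization in Eq.~\eqref{eq:KI_ea_rate} whose objective value is no larger than that of $\mathcal{N}$; since $\mathcal{N}$ already attains the minimum and the minimum ranges over all feasible channels, these two facts together force $\mathcal{N} \circ \mathcal{P}^{C}$ to attain the minimum as well. Feasibility is immediate: the source $\omega^{CNQJX}$ in Eq.~\eqref{eq:redundant_source} is diagonal in the basis $\{\ket{c}^{C}\}_{c}$ on $C$, so $\mathcal{P}^{C}(\omega^{CNQJX}) = \omega^{CNQJX}$, hence $(\mathcal{N}\circ\mathcal{P}^{C})(\omega^{CNQJX}) = \mathcal{N}(\omega^{CNQJX})$ and the distortion $1 - F(\omega^{CNQX},\cdot)$ is unchanged, so the constraint $1-F\le D$ still holds.

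For the objective value, the key point I would establish is a dephasing symmetry of the purification in Eq.~\eqref{eq:redundant_source_purified}: each branch carries the register $\ket{c}^{R_C}$ alongside $\ket{c}^{C}$ with $\{\ket{c}^{R_C}\}_{c}$ orthonormal, so the off-diagonal ($c \neq c'$) blocks of $\proj{\omega}$ already vanish on $R_C$. Consequently, dephasing $C$ in $\{\ket{c}^{C}\}_{c}$ produces exactly the same global state as dephasing $R_C$ in $\{\ket{c}^{R_C}\}_{c}$; writing $\mathcal{P}^{R_C}$ for the latter channel, $(\mathcal{P}^{C}\otimes\cid)(\proj{\omega}) = (\cid\otimes\mathcal{P}^{R_C})(\proj{\omega})$. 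Since $\mathcal{N}$ acts on $CNQJ$ while $\mathcal{P}^{R_C}$ acts on the disjoint reference register $R_C$, the two commute, and therefore
\begin{align*}
I(CNQ : R_CR_NR_QXX')_{(\mathcal{N}\circ\mathcal{P}^{C})(\proj{\omega})}
&= I(CNQ : R_CR_NR_QXX')_{\mathcal{P}^{R_C}\left(\mathcal{N}(\proj{\omega})\right)} \\
&\leq I(CNQ : R_CR_NR_QXX')_{\mathcal{N}(\proj{\omega})},
\end{align*}
where the inequality is the data-processing inequality (Lemma~\ref{lem:data_processing}) applied to $\mathcal{P}^{R_C}$ acting on part of the reference system. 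By optimality of $\mathcal{N}$ the right-hand side equals the minimum in Eq.~\eqref{eq:KI_ea_rate}, so $\mathcal{N}\circ\mathcal{P}^{C}$, being feasible, attains that minimum.

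I do not anticipate a genuine obstacle here; the only steps requiring care are making the dephasing symmetry precise from the explicit form of $\ket{\omega}$ (in particular, using orthonormality of $\{\ket{c}^{R_C}\}_c$ to kill cross terms), checking that $\mathcal{N}$ and $\mathcal{P}^{R_C}$ act on disjoint systems and hence commute, and applying data processing on the correct (reference-side) subsystem and in the correct direction.
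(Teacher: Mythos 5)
Your proof is correct and follows essentially the same route as the paper's: feasibility via the invariance $\mathcal{P}^{C}(\omega^{CNQJX})=\omega^{CNQJX}$, and optimality via the identity $(\mathcal{N}\circ\mathcal{P}^{C})(\proj{\omega})=(\mathcal{P}^{R_C}\circ\mathcal{N})(\proj{\omega})$ followed by data processing on the reference side. Your explicit justification of that identity (orthonormality of $\{\ket{c}^{R_C}\}_c$ killing the cross terms, plus commutation of maps on disjoint systems) is exactly what the paper leaves implicit.
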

\begin{proof}
    We first prove the feasibility and then the optimality of 
    $\mathcal{N}\circ \mathcal{P}^{C}$ 
    in the minimization in Eq.~\eqref{eq:KI_ea_rate}.
    Feasibility means  
    that $\mathcal{N}\circ \mathcal{P}^{C}$ 
    satisfies the distortion condition $1 - F \left(\omega^{CNQX},  \left(\mathcal{N} \circ \mathcal{P}^{C}\right)\left(\omega^{CNQJX}\right) \right) \leq D$. 
    Observe from the form of $\omega$ in Eq.~\eqref{eq:redundant_source} that 
    \begin{equation}
        \left(\mathcal{P}^{C}\otimes \mathcal{I}^{NQJX}\right)\left(\omega^{CNQJX}\right) = \omega^{CNQJX} \,,
    \end{equation}
    therefore, 
    \begin{equation}
        \begin{aligned}
            &F \left(\omega^{CNQX},  \left(\mathcal{N}\circ \mathcal{P}^{C}\right)\left(\omega^{CNQJX}\right) \right) \\ 
            &~= ~ F \left(\omega^{CNQX},  \mathcal{N}\left(\omega^{CNQJX}\right) \right) \\ 
            &~\geq ~1 - D, 
        \end{aligned}
    \end{equation}
    where the inequality holds because $\mathcal{N}$ satisfies the distortion condition. 
    Hence, $\mathcal{N}\circ \mathcal{P}^{C}$ is feasible. 

    For the optimality of 
    $\mathcal{N}\circ \mathcal{P}^{C}$ 
    in the minimization in Eq.~\eqref{eq:KI_ea_rate}, 
    using the form of $\ket{\omega}$ in Eq.~\eqref{eq:redundant_source_purified}, 
    we have 
    $\mathcal{N} \circ \mathcal{P}^{C} (\proj{\omega}) 
    = \mathcal{P}^{R_C} \circ \mathcal{N} (\proj{\omega})$, where 
    $\mathcal{P}^{R_C}$ is the same channel as $\mathcal{P}^C$ now defined on system $R_C$.  Therefore, 
    \begin{equation}
        \begin{aligned}
            \frac{1}{2} I(CNQ:R_CR_NR_QXX')_{(\mathcal{N}\circ\mathcal{P}^{C})(\proj{\omega})} 
            &= \frac{1}{2} I(CNQ:R_CR_NR_QXX')_{(\mathcal{P}^{R_C}\circ \mathcal{N})(\proj{\omega})} 
        \end{aligned}
    \end{equation}
and the RHS is upper-bounded by $\frac{1}{2} I(CNQ:R_CR_NR_QXX')_{\mathcal{N}(\proj{\omega})}$ 
by the data processing inequality. 
    Since $\mathcal{N}$ achieves the minimum, 
    $\mathcal{N} \circ \mathcal{P}^{C}$ also achieves the minimum, 
    which shows the optimality. 
\end{proof}

We now proceed to prove Theorem~\ref{thm:KI_redundant}. 
\begin{proof}[Proof of Theorem~\ref{thm:KI_redundant}]
    Recall that we may remove and attach redundant system $N$ from $\omega^{CNQX}$ 
    by using quantum channels $\mathcal{K}_{\mathrm{off}}$ and $\mathcal{K}_{\mathrm{on}}$. 

    We first show   
    \begin{equation}~\label{eq:redundant_achievability}
        \min_{\mathcal{N}^{CNQJ \to CNQ}} \frac{1}{2} I(CNQ: R_CR_NR_QXX')_{\mathcal{N}(\proj{\omega})}
        \leq 
        \min_{\mathcal{M}^{CQJ \to CQ}} \frac{1}{2} I(CQ: R_CR_QXX')_{\mathcal{M}(\proj{\tilde{\omega}})} \,, 
    \end{equation}
    where the minimum on the LHS is subject to the constraint    
    $1 - F \left(\omega^{CNQX},  \mathcal{N}\left(\omega^{CNQJX}\right) \right) \leq D$, 
    and the minimum on the RHS is subject to the constraint
    $1 - F \left(\omega^{CQX},  \mathcal{M}\left(\omega^{CQJX}\right) \right) \leq D$. 
    Let $\mathcal{M}$ be a quantum channel achieving the minimization on the RHS. 
    By Lemma~\ref{lem:minimum_proj}, $\mathcal{M}\circ \mathcal{P}^{C}$ also achieves the minimum.   
    To show inequality \eqref{eq:redundant_achievability},
    we will show that 
    \begin{enumerate}
        \item $\mathcal{K}_{\mathrm{on}} \circ \mathcal{M} \circ \mathcal{K}_{\mathrm{off}}$ 
        is feasible for the LHS, and 
        \item $\frac{1}{2} I(CNQ: R_CR_NR_QXX')_{(\mathcal{K}_{\mathrm{on}} \circ \mathcal{M} \circ \mathcal{K}_{\mathrm{off}})(\proj{\omega})} 
        \leq \frac{1}{2} I(CQ: R_CR_QXX')_{(\mathcal{M}\circ\mathcal{P}^{C})(\proj{\tilde{\omega}})}$. 
    \end{enumerate}
    The feasibility in step 1 follows from  
    \begin{equation}
        \begin{aligned}
            F(\omega^{CNQX},(\mathcal{K}_{\mathrm{on}} \circ \mathcal{M} \circ \mathcal{K}_{\mathrm{off}})(\omega^{CNQX})) 
            &= F(\omega^{CNQX},(\mathcal{K}_{\mathrm{on}} \circ \mathcal{M})(\omega^{CQX})) \\ 
            &= F(\mathcal{K}_{\mathrm{on}}(\omega^{CQX}),(\mathcal{K}_{\mathrm{on}} \circ \mathcal{M})(\omega^{CQX})) \\
            &\geq F(\omega^{CQX},\mathcal{M}(\omega^{CQX})) \\ 
            &\geq 1 - D. 
        \end{aligned}
    \end{equation}
    In the above, the first and second lines follow from $\mathcal{K}_{\mathrm{off}}(\omega^{CNQX}) = \omega^{CQX}$ 
    and $\mathcal{K}_{\mathrm{on}}(\omega^{CQX}) = \omega^{CNQX}$ respectively. 
    The third line is due to the monotonicity of fidelity, and the last line is due to  
    the feasibility of $\mathcal{M}$ for the distortion $D$. 
    For the inequality 
    $\frac{1}{2} I(CNQ: R_CR_NR_QXX')_{(\mathcal{K}_{\mathrm{on}} \circ \mathcal{M} \circ
    \mathcal{K}_{\mathrm{off}})(\proj{\omega})} 
    \leq \frac{1}{2} I(CQ: R_CR_QXX')_{(\mathcal{M}\circ \mathcal{P}^{C})(\proj{\tilde{\omega}})}$
    in step 2, we start with the LHS and 
    apply the data processing inequality to obtain  
    \begin{equation}~\label{eq:ineq_1}
        \frac{1}{2} I(CNQ: R_CR_NR_QXX')_{(\mathcal{K}_{\mathrm{on}} \circ \mathcal{M} \circ \mathcal{K}_{\mathrm{off}})(\proj{\omega})} 
        \leq \frac{1}{2} I(CQ: R_CR_NR_QXX')_{(\mathcal{M} \circ \mathcal{K}_{\mathrm{off}})(\proj{\omega})}. 
    \end{equation}
    Observe that 
    \begin{equation}
        \mathcal{K}_{\mathrm{off}}(\proj{\omega}) = \sum_{x,x'} \sum_{c} \sqrt{p_{x}p_{x'} p_{c|x}p_{c|x'}} \proj{c}^{C} \otimes \proj{c}^{R_C} 
        \otimes \omega_c^{R_N} \otimes \ketbra{\rho_{cx}}{\rho_{cx'}}^{QR_Q} 
        \otimes \ketbra{j_x}{j_{x'}}^{J} \otimes \ketbra{x}{x'}^X \otimes \ketbra{x}{x'}^{X'}. 
    \end{equation}
    By applying $\mathcal{M}\circ \mathcal{K}_{\mathrm{off}}$ to $\proj{\omega}$, 
    the resulting state is 
    \begin{equation}~\label{eq:resulting_state_1}
        \sum_{x,x'} \sum_{c} \sqrt{p_{x}p_{x'} p_{c|x}p_{c|x'}} \mathcal{M} \left(\proj{c}^{C} 
         \otimes \ketbra{\rho_{cx}}{\rho_{cx'}}^{QR_Q} 
        \otimes \ketbra{j_x}{j_{x'}}^{J}\right) \otimes \proj{c}^{R_C} \otimes \omega_c^{R_N} \otimes \ketbra{x}{x'}^X \otimes \ketbra{x}{x'}^{X'}. 
    \end{equation}
    In Eq.~\eqref{eq:resulting_state_1}, the state on $CQR_NR_QXX'$ is a product state
    between $CQR_QXX'$ and $R_N$ when conditioned on $R_C$; 
    in particular, for each $c$, the state on $CQR_NR_QXX'$ is 
    \begin{equation}
        \left(\sum_{x,x'} \sqrt{p_{x}p_{x'} p_{c|x}p_{c|x'}} \mathcal{M} \left(\proj{c}^{C} 
         \otimes \ketbra{\rho_{cx}}{\rho_{cx'}}^{QR_Q} 
        \otimes \ketbra{j_x}{j_{x'}}^{J}\right)\otimes \ketbra{x}{x'}^X \otimes \ketbra{x}{x'}^{X'}\right) \otimes \omega_c^{R_N}. 
    \end{equation}
    Hence,
    \begin{equation}~\label{eq:ineq_2}
        I(CQ: R_CR_NR_QXX')_{(\mathcal{M} \circ \mathcal{K}_{\mathrm{off}})(\proj{\omega})} = I(CQ: R_CR_QXX')_{(\mathcal{M} \circ \mathcal{K}_{\mathrm{off}})(\proj{\omega})}. 
    \end{equation}
    From Eq.~\eqref{eq:resulting_state_1}, when we trace out $R_N$ from $(\mathcal{M}\circ \mathcal{K}_{\mathrm{off}})(\proj{\omega})$, 
    the resulting state is 
    \begin{equation}~\label{eq:resulting_state_2}
        \sum_{x,x'} \sum_{c} \sqrt{p_{x}p_{x'} p_{c|x}p_{c|x'}} \mathcal{M} \left(\proj{c}^{C} 
         \otimes \ketbra{\rho_{cx}}{\rho_{cx'}}^{QR_Q} 
        \otimes \ketbra{j_x}{j_{x'}}^{J}\right) \otimes \proj{c}^{R_C} \otimes \ketbra{x}{x'}^X \otimes \ketbra{x}{x'}^{X'}, 
    \end{equation}
    which is the same state as $(\mathcal{M}\circ \mathcal{P}^{C})(\proj{\tilde{\omega}})$. 
    Therefore, $\tr_{R_N}\left[(\mathcal{M}\circ \mathcal{K}_{\mathrm{off}})(\proj{\omega})\right] = (\mathcal{M}\circ \mathcal{P}^{C})(\proj{\tilde{\omega}})$, 
    and thus 
    \begin{equation}~\label{eq:ineq_3}
        I(CQ: R_CR_QXX')_{(\mathcal{M} \circ \mathcal{K}_{\mathrm{off}})(\proj{\omega})} = I(CQ: R_CR_QXX')_{(\mathcal{M}\circ\mathcal{P}^{C})(\proj{\tilde{\omega}})}. 
    \end{equation}
    From Eqs.~\eqref{eq:ineq_1}, \eqref{eq:ineq_2}, and \eqref{eq:ineq_3}, 
    \begin{equation}
        \frac{1}{2} I(CNQ: R_CR_NR_QXX')_{(\mathcal{K}_{\mathrm{on}} \circ \mathcal{M} \circ \mathcal{K}_{\mathrm{off}})(\proj{\omega})} 
    \leq \frac{1}{2} I(CQ: R_CR_QXX')_{(\mathcal{M}\circ \mathcal{P}^{C})(\proj{\tilde{\omega}})}, 
    \end{equation} 
    as claimed in step 2.

    Next, we show the inequality opposite to \eqref{eq:redundant_achievability}; that is, 
    \begin{equation}~\label{eq:redundant_converse}
        \min_{\mathcal{N}^{CNQJ \to CNQ}} \frac{1}{2} I(CNQ: R_CR_NR_QXX')_{\mathcal{N}(\proj{\omega})}
        \geq 
        \min_{\mathcal{M}^{CQJ \to CQ}} \frac{1}{2} I(CQ: R_CR_QXX')_{\mathcal{M}(\proj{\tilde{\omega}})} , 
    \end{equation}
    where the minimum on the LHS is subject to the constraint    
    $1 - F \left(\omega^{CNQX},  \mathcal{N}\left(\omega^{CNQJX}\right) \right) \leq D$, 
    and the minimum on the RHS is subject to the constraint
    $1 - F \left(\omega^{CQX},  \mathcal{M}\left(\omega^{CQJX}\right) \right) \leq D$. 
    Let $\mathcal{N}$ be a quantum channel achieving the minimum of LHS. 
    By Lemma~\ref{lem:minimum_proj}, $\mathcal{N}\circ \mathcal{P}^{C}$ also achieves the minimum.   
    To show Eq.~\eqref{eq:redundant_converse}, we will show that 
    \begin{enumerate}
        \item $\mathcal{K}_{\mathrm{off}} \circ \mathcal{N} \circ \mathcal{K}_{\mathrm{on}}$ 
        is feasible for RHS, and 
        \item $\frac{1}{2} I(CNQ: R_CR_NR_QXX')_{(\mathcal{N}\circ\mathcal{P}^{C})(\proj{\omega})} 
        \geq \frac{1}{2} I(CQ: R_CR_QXX')_{(\mathcal{K}_{\mathrm{off}} \circ \mathcal{N} \circ \mathcal{K}_{\mathrm{on}})(\proj{\tilde{\omega}})}$. 
    \end{enumerate}
    The feasibility can be shown as follows. 
    \begin{equation}
        \begin{aligned}
            F(\omega^{CQX},(\mathcal{K}_{\mathrm{off}} \circ \mathcal{N} \circ \mathcal{K}_{\mathrm{on}})(\omega^{CQX})) 
            &= F(\omega^{CQX},(\mathcal{K}_{\mathrm{off}} \circ \mathcal{N})(\omega^{CNQX})) \\ 
            &= F(\mathcal{K}_{\mathrm{off}}(\omega^{CNQX}),(\mathcal{K}_{\mathrm{off}} \circ \mathcal{N})(\omega^{CNQX})) \\
            &\geq F(\omega^{CNQX},\mathcal{N}(\omega^{CNQX})) \\ 
            &\geq 1 - D. 
        \end{aligned}
    \end{equation}
    The first line follows because $\mathcal{K}_{\mathrm{on}}(\omega^{CQX}) = \omega^{CNQX}$. 
    The second line follows because  $\mathcal{K}_{\mathrm{off}}(\omega^{CNQX}) = \omega^{CQX}$. 
    The third line is monotonicity of fidelity. 
    The fourth line follows because $\mathcal{N}$ is feasible for distortion $D$. 
    We then show the inequality in the second step: $\frac{1}{2} I(CNQ: R_CR_NR_QXX')_{(\mathcal{N}\circ\mathcal{P}^{C})(\proj{\omega})} 
    \geq \frac{1}{2} I(CQ: R_CR_QXX')_{(\mathcal{K}_{\mathrm{off}} \circ \mathcal{N} \circ \mathcal{K}_{\mathrm{on}})(\proj{\tilde{\omega}})}$.
    By data processing inequality, 
    \begin{equation}~\label{eq:ineq_4}
        \frac{1}{2} I(CQ: R_CR_QXX')_{(\mathcal{K}_{\mathrm{off}} \circ \mathcal{N} \circ \mathcal{K}_{\mathrm{on}})(\proj{\tilde{\omega}})} 
        \leq \frac{1}{2} I(CNQ: R_CR_QXX')_{( \mathcal{N} \circ \mathcal{K}_{\mathrm{on}})(\proj{\tilde{\omega}})}. 
    \end{equation}
    Observe that 
    \begin{equation}
        \mathcal{K}_{\mathrm{on}}(\proj{\tilde{\omega}}) = \sum_{x,x'} \sum_{c} \sqrt{p_{x}p_{x'} p_{c|x}p_{c|x'}} \proj{c}^{C} \otimes \proj{c}^{R_C} 
        \otimes \omega_c^{N} \otimes \ketbra{\rho_{cx}}{\rho_{cx'}}^{QR_Q} 
        \otimes \ketbra{j_x}{j_{x'}}^{J} \otimes \ketbra{x}{x'}^X \otimes \ketbra{x}{x'}^{X'}. 
    \end{equation}
    By applying $\mathcal{N}\circ \mathcal{K}_{\mathrm{on}}$ to $\proj{\tilde{\omega}}$, 
    the resulting state is 
    \begin{equation}~\label{eq:resulting_state_3}
        \sum_{x,x'} \sum_{c} \sqrt{p_{x}p_{x'} p_{c|x}p_{c|x'}} \mathcal{N} \left(\proj{c}^{C} 
         \otimes \omega_c^{N} \otimes \ketbra{\rho_{cx}}{\rho_{cx'}}^{QR_Q} 
        \otimes \ketbra{j_x}{j_{x'}}^{J}\right) \otimes \proj{c}^{R_C} \otimes \ketbra{x}{x'}^X \otimes \ketbra{x}{x'}^{X'}. 
    \end{equation}
    This state can also be obtained by tracing out $R_N$ from 
    \begin{equation}~\label{eq:resulting_state_4}
        \sum_{x,x'} \sum_{c} \sqrt{p_{x}p_{x'} p_{c|x}p_{c|x'}} \mathcal{N} \left(\proj{c}^{C} 
         \otimes \proj{\omega_c}^{NR_N} \otimes \ketbra{\rho_{cx}}{\rho_{cx'}}^{QR_Q} 
        \otimes \ketbra{j_x}{j_{x'}}^{J}\right) \otimes \proj{c}^{R_C} \otimes \ketbra{x}{x'}^X \otimes \ketbra{x}{x'}^{X'}. 
    \end{equation}
    Note that $\mathcal{N}$ does not act on $R_N$. 
    The state in Eq.~\eqref{eq:resulting_state_4} is obtained by applying $\mathcal{N}\circ \mathcal{P}^{C}$ to $\proj{\omega}$. 
    Therefore, 
    $\tr_{R_N}\left[(\mathcal{N}\circ \mathcal{P}^{C})(\proj{\omega})\right] = (\mathcal{N}\circ \mathcal{K}_{\mathrm{on}})(\proj{\tilde{\omega}})$, 
    and by data processing inequality, 
    \begin{equation}~\label{eq:ineq_5}
        I(CNQ: R_C R_QXX')_{( \mathcal{N} \circ \mathcal{K}_{\mathrm{on}})(\proj{\tilde{\omega}})} \leq  I(CNQ: R_CR_NR_QXX')_{(\mathcal{N}\circ\mathcal{P}^{C})(\proj{\omega})}. 
    \end{equation}
    From Eqs.~\eqref{eq:ineq_4} and \eqref{eq:ineq_5}, 
    \begin{equation}
        \frac{1}{2} I(CQ: R_CR_QXX')_{(\mathcal{K}_{\mathrm{off}} \circ \mathcal{N} \circ \mathcal{K}_{\mathrm{on}})(\proj{\tilde{\omega}})} \leq 
        \frac{1}{2} I(CNQ: R_CR_NR_QXX')_{(\mathcal{N}\circ\mathcal{P}^{C})(\proj{\omega})}, 
    \end{equation} 
    which proves the inequality in step 2. 
\end{proof}

%----------------------------------------------------------------------------------------------------------------------
\subsection{The proof for Theorem \ref{thm:KI_redundant_un}} \label{app:redundant_un}

    We retain all the notations and definitions leading up to Theorem \ref{thm:KI_redundant_un}.  
    We will use the following lemma, which follows similarly as in Lemma~\ref{lem:minimum_proj}. 
    \begin{lemma}~\label{lem:minimum_proj_un}
        Consider the setup described in Theorem \ref{thm:KI_redundant_un}. 
        Let $k$ be a fixed positive integer, and
        let $\mathcal{N}_k: C^kN^kQ^kJ^k \to C^kN^kQ^k$ be a channel achieving the minimum of 
        \begin{equation}~\label{eq:min_fixed_k}
            \min_{\mathcal{N}_k} E_p(C^kN^kQ^k: R_C^kR_N^kR_Q^kX^kX'^k)_{\mathcal{N}_{k}\left(\proj{\omega}^{\otimes k}\right)}
        \end{equation} 
        where the minimum is taken over all quantum channels $\mathcal{N}_k: C^kN^kQ^kJ^k \to C^kN^kQ^k$ with 
        \begin{equation}
            1 - \frac{1}{k}\sum_{j=1}^{k} F \left(\omega^{C_jN_jQ_jX_j}, \left(\mathcal{N}_{k}\left(\proj{\omega}^{\otimes k}\right)\right)^{C_jN_jQ_jX_j} \right) \leq D. 
        \end{equation}
        Let $\mathcal{P}^{C}: C \to C$ be the projective measurement on $C$ 
        as defined in Lemma~\ref{lem:minimum_proj}. 
        Then, $\mathcal{N}_k\circ (\mathcal{P}^{C})^{\otimes k}$ 
        also achieves the minimum of Eq.~\eqref{eq:min_fixed_k}. 
    \end{lemma}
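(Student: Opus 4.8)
The plan is to reproduce the proof of Lemma~\ref{lem:minimum_proj} with every single-copy object replaced by its $k$-fold tensor power, splitting the argument into feasibility and optimality of $\mathcal{N}_k\circ(\mathcal{P}^{C})^{\otimes k}$ for the minimization in Eq.~\eqref{eq:min_fixed_k}.

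For feasibility, I would first note that the cq-state $\omega^{CNQJX}$ of Eq.~\eqref{eq:redundant_source} is already diagonal in the basis $\{\ket{c}^{C}\}_c$, so $\mathcal{P}^{C}$ leaves it invariant, hence $(\mathcal{P}^{C})^{\otimes k}$ leaves $(\omega^{CNQJX})^{\otimes k}$ invariant. Consequently $\big(\mathcal{N}_k\circ(\mathcal{P}^{C})^{\otimes k}\big)\big((\omega^{CNQJX})^{\otimes k}\big)=\mathcal{N}_k\big((\omega^{CNQJX})^{\otimes k}\big)$, so every per-copy reduced state on $C_jN_jQ_jX_j$, and therefore every fidelity $F(\omega^{C_jN_jQ_jX_j},\cdot)$ and their average, is unchanged; the distortion constraint is thus inherited from $\mathcal{N}_k$.

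For optimality, I would exploit the perfect correlation between $C$ and $R_C$ in $\ket{\omega}$ (Eq.~\eqref{eq:redundant_source_purified}): measuring $C$ in the $\{\ket{c}\}$ basis has the same effect on $\proj{\omega}$ as the analogous projective-measurement channel $\mathcal{P}^{R_C}$ on $R_C$, so applying this copy by copy gives $(\mathcal{P}^{C})^{\otimes k}(\proj{\omega}^{\otimes k})=(\mathcal{P}^{R_C})^{\otimes k}(\proj{\omega}^{\otimes k})$. Since $\mathcal{N}_k$ acts only on $C^kN^kQ^kJ^k$, which is disjoint from $R_C^k$, it commutes with $(\mathcal{P}^{R_C})^{\otimes k}$, so
\begin{equation*}
\big(\mathcal{N}_k\circ(\mathcal{P}^{C})^{\otimes k}\big)(\proj{\omega}^{\otimes k})=\big((\mathcal{P}^{R_C})^{\otimes k}\circ\mathcal{N}_k\big)(\proj{\omega}^{\otimes k}).
\end{equation*}
Because $(\mathcal{P}^{R_C})^{\otimes k}$ is a channel on a subsystem of the reference party $R_C^kR_N^kR_Q^kX^kX'^k$, monotonicity of the entanglement of purification (Lemma~\ref{lem:monotonicity}) yields
\begin{equation*}
E_p(C^kN^kQ^k:R_C^kR_N^kR_Q^kX^kX'^k)_{(\mathcal{N}_k\circ(\mathcal{P}^{C})^{\otimes k})(\proj{\omega}^{\otimes k})}\le E_p(C^kN^kQ^k:R_C^kR_N^kR_Q^kX^kX'^k)_{\mathcal{N}_k(\proj{\omega}^{\otimes k})}.
\end{equation*}
Since $\mathcal{N}_k$ is a minimizer and $\mathcal{N}_k\circ(\mathcal{P}^{C})^{\otimes k}$ is feasible, this inequality must be an equality, so $\mathcal{N}_k\circ(\mathcal{P}^{C})^{\otimes k}$ also attains the minimum of Eq.~\eqref{eq:min_fixed_k}.

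There is no genuine obstacle here beyond bookkeeping; the one point deserving a word is the direction of the monotonicity step, since Lemma~\ref{lem:monotonicity} is stated for a channel on the first argument whereas $(\mathcal{P}^{R_C})^{\otimes k}$ acts on the reference side. This is resolved by the symmetry $E_p(A:R)=E_p(R:A)$ of the entanglement of purification (immediate from Theorem~\ref{thm:entanglement_of_purification}), exactly as the single-copy proof of Lemma~\ref{lem:minimum_proj} invokes the data-processing inequality for quantum mutual information for a measurement on $R_C$.
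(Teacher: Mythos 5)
Your proof is correct and follows essentially the same route as the paper's: feasibility via the invariance $(\mathcal{P}^{C}\otimes\mathcal{I})(\omega^{CNQJX})=\omega^{CNQJX}$, and optimality by rewriting $(\mathcal{N}_k\circ(\mathcal{P}^{C})^{\otimes k})(\proj{\omega}^{\otimes k})=((\mathcal{P}^{R_C})^{\otimes k}\circ\mathcal{N}_k)(\proj{\omega}^{\otimes k})$ and invoking monotonicity of the entanglement of purification. Your explicit remark that the monotonicity step requires the symmetry $E_p(A:R)=E_p(R:A)$, since the dephasing acts on the reference side, is a detail the paper leaves implicit and is a welcome clarification.
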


    \begin{proof}[Proof Sketch]
    We first show the feasibility and then show the optimality. 
    Similar to the proof of Lemma~\ref{lem:minimum_proj}, 
    the feasibility of $\mathcal{N}_k\circ (\mathcal{P}^{C})^{\otimes k}$
    follows from 
    \begin{equation}
        (\mathcal{P}^{C}\otimes \mathcal{I}^{NQJX})(\omega^{CNQJX}) = \omega^{CNQJX}
    \end{equation} 
    and the feasibility of $\mathcal{N}_k$.
    The optimality also follows similarly because 
    \begin{equation}
        (\mathcal{N}_k\circ (\mathcal{P}^{C})^{\otimes k})(\proj{\omega}^{\otimes k}) 
        = ((\mathcal{P}^{R_C})^{\otimes k} \circ \mathcal{N}_k)(\proj{\omega}^{\otimes k}) 
    \end{equation}
    with the measurement $\mathcal{P}^{R_C}$ acting on $R_C$, as defined in
    Lemma~\ref{lem:minimum_proj}.  
    Indeed, we have 
    \begin{equation}
        \begin{aligned}
            E_p(C^kN^kQ^k:R^k_CR^k_NR^k_QX^kX'^k)_{(\mathcal{N}_k\circ (\mathcal{P}^{C})^{\otimes k})(\proj{\omega}^{\otimes k})} 
            &= E_p(C^kN^kQ^k:R^k_CR^k_NR^k_QX^kX'^k)_{((\mathcal{P}^{R_C})^{\otimes k} \circ \mathcal{N}_k)(\proj{\omega}^{\otimes k})} \\ 
            &\leq  E_p(C^kN^kQ^k:R^k_CR^k_NR^k_QX^kX'^k)_{\mathcal{N}_k(\proj{\omega}^{\otimes k})}, 
        \end{aligned}
    \end{equation}
    where the inequality follows from monotonicity of entanglement of purification. 
    \end{proof}

    \begin{proof}[Proof Sketch of Theorem~\ref{thm:KI_redundant_un}]
        It suffices to show that for each $k$, 
        \begin{equation}~\label{eq:redundant_achievability_un}
            \min_{\mathcal{N}_k} E_p(C^kN^kQ^k: R_C^kR_N^kR_Q^kX^kX'^k)_{\mathcal{N}_k(\proj{\omega}^{\otimes k})}
            =
            \min_{\mathcal{M}_k} E_p(C^kQ^k: R_C^kR_Q^kX^kX'^k)_{\mathcal{M}_k(\proj{\tilde{\omega}}^{\otimes k})} ,
        \end{equation}
        where the minimum on the LHS is taken over all quantum channels $\mathcal{N}_k: C^kN^kQ^kJ^k \to C^kN^kQ^k$ 
        with 
        \begin{equation}
            1 - \frac{1}{k}\sum_{j=1}^{k} F \left(\omega^{C_jN_jQ_jX_j}, \left(\mathcal{N}_{k}\left(\proj{\omega}^{\otimes k}\right)\right)^{C_jN_jQ_jX_j} \right) \leq D,  
        \end{equation}
        and the minimum on the RHS is taken over all quantum channels $\mathcal{M}_k: C^kQ^kJ^k \to C^kQ^k$ 
        with 
        \begin{equation}
            1 - \frac{1}{k}\sum_{j=1}^{k} F \left(\omega^{C_jQ_jX_j}, \left(\mathcal{M}_{k}\left(\proj{\tilde{\omega}}^{\otimes k}\right)\right)^{C_jQ_jX_j} \right) \leq D. 
        \end{equation}
        We first show ``$\leq$'' in the above, starting from the RHS, with $\mathcal{M}_k$ achieving the minimum.  
        By Lemma~\ref{lem:minimum_proj_un}, $\mathcal{M}_k\circ (\mathcal{P}^{C})^{\otimes k}$ also achieves the minimum.   
        We will show that 
        \begin{enumerate}
            \item $\mathcal{K}_{\mathrm{on}}^{\otimes k} \circ \mathcal{M}_k \circ \mathcal{K}_{\mathrm{off}}^{\otimes k}$ 
            is feasible for the LHS, and 
            \item $E_p(C^kN^kQ^k: R_C^kR_N^kR_Q^kX^kX'^k)_{(\mathcal{K}_{\mathrm{on}}^{\otimes k} \circ \mathcal{M}_k \circ \mathcal{K}_{\mathrm{off}}^{\otimes k})(\proj{\omega}^{\otimes k})} 
            \leq E_p(C^kQ^k: R_C^kR_Q^kX^kX'^k)_{(\mathcal{M}_k\circ(\mathcal{P}^{C})^{\otimes k})(\proj{\tilde{\omega}}^{\otimes k})}$. 
        \end{enumerate}
        The feasibility in the first step can be shown by using the relation
        \begin{equation}
            \mathcal{K}_{\mathrm{off}}(\omega^{CNQX}) = \omega^{CQX},\,\,\,\, 
            \mathcal{K}_{\mathrm{on}}(\omega^{CQX}) = \omega^{CNQX}, 
        \end{equation}
        monotonicity of fidelity, 
        and feasibility of $\mathcal{M}_k$. 
        (Note that the distortion condition is now given with respect to the average-case local error criterion. 
        In the proof for feasibility, we use the fact that for a quantum state $\rho^{A_1A_2}$ on joint system $A_1,A_2$ and a quantum channel $\mathcal{N}:A_2\to A_3$, 
        $\tr_{A_2}[\rho^{A_1A_2}] = \tr_{A_3}[\mathcal{N}(\rho^{A_1A_2})]$.)
        For the second step, for simplicity, we consider $k=1$; that is, we will show 
        $E_p(CNQ: R_CR_NR_QXX')_{(\mathcal{K}_{\mathrm{on}} \circ \mathcal{M}_1 \circ \mathcal{K}_{\mathrm{off}})(\proj{\omega})} 
        \leq E_p(CQ: R_CR_QXX')_{(\mathcal{M}_1\circ\mathcal{P}^{C})(\proj{\tilde{\omega}})}$. 
        The proof for general $k$ is similar. 
        By monotonicity of entanglement of purification, 
        \begin{equation}~\label{eq:ineq_un_1}
            E_p(CNQ: R_CR_NR_QXX')_{(\mathcal{K}_{\mathrm{on}} \circ \mathcal{M}_1 \circ \mathcal{K}_{\mathrm{off}})(\proj{\omega})}
            \leq E_p(CQ: R_CR_NR_QXX')_{(\mathcal{M}_1 \circ \mathcal{K}_{\mathrm{off}})(\proj{\omega})}. 
        \end{equation}
        We would like to show 
        \begin{equation}~\label{eq:achieve_subgoal}
            E_p(CQ: R_CR_NR_QXX')_{(\mathcal{M}_1 \circ \mathcal{K}_{\mathrm{off}})(\proj{\omega})} 
            \leq E_p(CQ: R_CR_QXX')_{(\mathcal{M}_1\circ\mathcal{P}^{C})(\proj{\tilde{\omega}})}. 
        \end{equation}
        The state $(\mathcal{M}_1\circ\mathcal{P}^{C})(\proj{\tilde{\omega}})$ can be purified as 
        \begin{equation}~\label{eq:purification_tilde_omega}
            |\tilde{\nu}\rangle \coloneqq \sum_{x}\sqrt{p_x}\sum_{c} \sqrt{p_{c|x}} \ket{\mu_{cx}}^{CQR_QW} \ket{c}^{R_C}\ket{c}^{R_C'}\ket{x}^{X}\ket{x}^{X'}, 
        \end{equation}
        where 
        \begin{equation}
            \ket{\mu_{cx}}^{CQR_QW} = U_{\mathcal{M}_1}(\ket{c}^{C}\ket{\rho_{cx}}^{QR_Q}\ket{j_x}^{J})
        \end{equation}
        with Stinespring dilation $U_{\mathcal{M}_1}: CQJ \hookrightarrow CQW$ of $\mathcal{M}_1$. 
        From Theorem~\ref{thm:entanglement_of_purification}, 
        recall that 
        \begin{equation}
            E_p(CQ: R_CR_QXX')_{(\mathcal{M}_1\circ\mathcal{P}^{C})(\proj{\tilde{\omega}})}
            = \min_{\tilde{\Lambda}:WR'_C\to F} S(CQF)_{\tilde{\Lambda}(\proj{\tilde{\nu}})}. 
        \end{equation}
        Let $\tilde{\Lambda}: WR'_C \to F$ be a quantum channel achieving the minimum in the RHS above; that is, 
        \begin{equation}~\label{eq:ineq_un_2}
            E_p(CQ: R_CR_QXX')_{(\mathcal{M}_1\circ\mathcal{P}^{C})(\proj{\tilde{\omega}})} 
            = S(CQF)_{\tilde{\Lambda}(\proj{\tilde{\nu}})}. 
            %\eqqcolon \alpha. 
        \end{equation}
        Here, from Eq.~\eqref{eq:purification_tilde_omega}, 
        the marginal state of $\tilde{\Lambda}(\proj{\tilde{\nu}})$ on system $CQF$ is 
        \begin{equation}~\label{eq:marginal_tilde_lambda_nu}
            \tilde{\Lambda}(\proj{\tilde{\nu}})^{CQF} = \sum_{x}\sum_{c} p_xp_{c|x} \tilde{\Lambda}(\mu_{cx}^{CQW} \otimes \proj{c}^{R_C'}). 
        \end{equation}
        On the other hand, the state $(\mathcal{M}_1 \circ \mathcal{K}_{\mathrm{off}})(\proj{\omega})$ can be purified as 
        \begin{equation}
            \ket{\nu} \coloneqq \sum_{x}\sqrt{p_x}\sum_{c} \sqrt{p_{c|x}} \ket{\mu_{cx}}^{CQR_QW} \ket{c}^{R_C}\ket{c}^{R_C'}\ket{\omega_c}^{NR_N}\ket{x}^{X}\ket{x}^{X'}. 
        \end{equation}
        Recall also that 
        \begin{equation}
            E_p(CQ: R_CR_NR_QXX')_{(\mathcal{M}_1 \circ \mathcal{K}_{\mathrm{off}})(\proj{\omega})} 
            = \min_{\Lambda:NWR'_C\to F} S(CQF)_{\Lambda(\proj{\nu})}. 
        \end{equation}
        By choosing $(\tilde{\Lambda}\circ \Tr_{N})$ as a feasible channel for the minimization in the RHS, 
        \begin{equation}~\label{eq:ineq_un_3}
            \begin{aligned}
                E_p(CQ: R_CR_NR_QXX')_{(\mathcal{M}_1 \circ \mathcal{K}_{\mathrm{off}})(\proj{\omega})} 
                \leq S(CQF)_{(\tilde{\Lambda}\circ \Tr_{N})(\proj{\nu})}. 
            \end{aligned}
        \end{equation}
        Observing that the marginal state of $(\tilde{\Lambda}\circ \Tr_{N})(\proj{\nu})$ on system $CQF$ coincides with $ \tilde{\Lambda}(\proj{\tilde{\nu}})^{CQF}$ in Eq.~\eqref{eq:marginal_tilde_lambda_nu}, 
        \begin{equation}~\label{eq:ineq_un_4}
            S(CQF)_{(\tilde{\Lambda}\circ \Tr_{N})(\proj{\nu})} = S(CQF)_{\tilde{\Lambda}(\proj{\tilde{\nu}})}. %= \alpha. 
        \end{equation} 
        By Eqs.~\eqref{eq:ineq_un_2}, \eqref{eq:ineq_un_3}, and \eqref{eq:ineq_un_4}, 
        we have Eq.~\eqref{eq:achieve_subgoal}. 

        Now, we show the opposite inequality ``$\geq$'' in \eqref{eq:redundant_achievability_un} with all the associated constraints.  
        Let $\mathcal{N}_k$ be a quantum channel achieving the minimum of LHS. 
        By Lemma~\ref{lem:minimum_proj_un}, $\mathcal{N}_k\circ (\mathcal{P}^{C})^{\otimes k}$ also achieves the minimum.   
        We will show that 
        \begin{enumerate}
            \item $(\mathcal{K}_{\mathrm{off}})^{\otimes k} \circ \mathcal{N}_k \circ (\mathcal{K}_{\mathrm{on}})^{\otimes k}$ 
            is feasible for RHS, and 
            \item $E_p(C^kQ^k: R_C^kR_Q^kX^kX'^k)_{(\mathcal{K}_{\mathrm{off}}^{\otimes k} \circ \mathcal{N}_k \circ \mathcal{K}_{\mathrm{on}}^{\otimes k})(\proj{\tilde{\omega}}^{\otimes k})} 
            \leq E_p(C^kN^kQ^k: R_C^kR_N^kR_Q^kX^kX'^k)_{(\mathcal{N}_k\circ(\mathcal{P}^{C})^{\otimes k})(\proj{\omega}^{\otimes k})}$. 
        \end{enumerate}
        The feasibility in step 1 can be shown by using the relation
        \begin{equation}
            \mathcal{K}_{\mathrm{off}}(\omega^{CNQX}) = \omega^{CQX},\,\,\,\, 
            \mathcal{K}_{\mathrm{on}}(\omega^{CQX}) = \omega^{CNQX}, 
        \end{equation}
        the monotonicity of fidelity, 
        and the feasibility of $\mathcal{N}_k$. 
        For the inequality in the second step, 
        we apply the monotonicity of the entanglement of purification to obtain 
        \begin{equation}~\label{eq:ineq_un_5}
            E_p(C^kQ^k: R_C^kR_Q^kX^kX'^k)_{(\mathcal{K}_{\mathrm{off}}^{\otimes k} \circ \mathcal{N}_k \circ \mathcal{K}_{\mathrm{on}}^{\otimes k})(\proj{\tilde{\omega}}^{\otimes k})}
            \leq E_p(C^kN^kQ^k: R_C^kR_Q^kX^kX'^k)_{(\mathcal{N}_k \circ \mathcal{K}_{\mathrm{on}}^{\otimes k})(\proj{\tilde{\omega}}^{\otimes k})}. 
        \end{equation}
        Observe that 
        \begin{equation}
            \tr_{R_N^k}\left[(\mathcal{N}_k\circ(\mathcal{P}^{C})^{\otimes k})(\proj{\omega}^{\otimes k})\right] 
            = (\mathcal{N}_k \circ \mathcal{K}_{\mathrm{on}}^{\otimes k})(\proj{\tilde{\omega}}^{\otimes k}).  
        \end{equation}
        Hence, by the monotonicity of the entanglement of purification, 
        \begin{equation}~\label{eq:ineq_un_6}
            E_p(C^kN^kQ^k: R_C^kR_Q^kX^kX'^k)_{(\mathcal{N}_k \circ \mathcal{K}_{\mathrm{on}}^{\otimes k})(\proj{\tilde{\omega}}^{\otimes k})}
            \leq E_p(C^kN^kQ^k: R_C^kR_N^kR_Q^kX^kX'^k)_{(\mathcal{N}_k\circ(\mathcal{P}^{C})^{\otimes k})(\proj{\omega}^{\otimes k})}. 
        \end{equation}
        From \eqref{eq:ineq_un_5} and \eqref{eq:ineq_un_6}, we obtain
        $$E_p(C^kQ^k: R_C^kR_Q^kX^kX'^k)_{(\mathcal{K}_{\mathrm{off}}^{\otimes k} \circ \mathcal{N}_k \circ \mathcal{K}_{\mathrm{on}}^{\otimes k})(\proj{\tilde{\omega}}^{\otimes k})} 
            \leq E_p(C^kN^kQ^k: R_C^kR_N^kR_Q^kX^kX'^k)_{(\mathcal{N}_k\circ(\mathcal{P}^{C})^{\otimes k})(\proj{\omega}^{\otimes k})},$$ 
        which completes the proof. 
    \end{proof}

\end{document}